\renewcommand\footnotetextcopyrightpermission[1]{} 
\pgfplotsset{compat=1.17} 
\algnewcommand{\LineComment}[1]{\State \(\triangleright\) #1}
\newcommand\CONDITION[2]%
\DeclareMathOperator*{\argmax}{arg\,max}
\newcommand{\BSTM}{Block-STM\xspace}
\newcommand{\rust}[1]{\ensuremath{\mathtt{#1}}\xspace}
\newcommand{\module}[1]{\textcolor{orange}{\ensuremath{\mathbf{\mathsf{#1}}}\xspace}}
\setlist[description]{leftmargin=\parindent,labelindent=\parindent}
\newcommand{\alglinenoNew}[1]{\newcounter{ALG@line@#1}}
\newcommand{\alglinenoPop}[1]{\setcounter{ALG@line}{\value{ALG@line@#1}}}
\newcommand{\alglinenoPush}[1]{\setcounter{ALG@line@#1}{\value{ALG@line}}}
\newcommand{\com}[1]{}
\newcommand\StateX{\Statex\hspace{\algorithmicindent}}
\algrenewcommand\textproc{}
\newtheorem{theorem}{Theorem}
\newtheorem{claim}{Claim}
\newtheorem{lemma}{Lemma}
\newtheorem{corollary}{Corollary}
\newtheorem{definition}{Definition}
\newcommand{\namedref}[2]{\hyperref[#2]{#1~\ref*{#2}}}
\newcommand{\theoremref}[1]{\namedref{Theorem}{#1}}
\newcommand{\equationref}[1]{\hyperref[#1]{(\ref*{#1})}}
\newcommand{\lemmaref}[1]{\namedref{Lemma}{#1}}
\newcommand{\claimref}[1]{\namedref{Claim}{#1}}
\newcommand{\corollaryref}[1]{\namedref{Corollary}{#1}} 
\newcommand{\figureref}[1]{\namedref{Figure}{#1}}
\newcommand{\algorithmref}[1]{\namedref{Algorithm}{#1}}
\newcommand{\sectionref}[1]{\namedref{Section}{#1}} 
\newcommand{\appref}[1]{\namedref{Appendix}{#1}} 
\newcommand{\definitionref}[1]{\namedref{Definition}{#1}} 
\let\lineref\linerefa
\newcommand{\lineref}[1]{\namedref{Line}{#1}}
\newcommand{\omitit}[1]{}
\setlist{nolistsep}
\begin{document}

\title{\BSTM}         
\titlenote{Rati Gelashvili, Alexander Spiegelman, and Zhuolun Xiang share first authorship. 
Contact emails: \url{gelash@aptoslabs.com}, \url{sasha.spiegelman@gmail.com}, \url{xiangzhuolun@gmail.com}.
The work was initiated while all authors were working at Novi at Meta.}         
\subtitle{Scaling Blockchain Execution by Turning Ordering Curse to a Performance Blessing}                     


\author{Rati Gelashvili}
\affiliation{
  \institution{Aptos}            
}

\author{Alexander Spiegelman}
\affiliation{
  \institution{Aptos}            
}
\email{}          

\author{Zhuolun Xiang}
\affiliation{
  \institution{Aptos}            
}

\author{George Danezis}
\affiliation{%
  \institution{Mysten Labs \& UCL}
}
\author{Zekun Li}
\affiliation{%
  \institution{Aptos}
}
\author{Dahlia Malkhi}
\affiliation{%
  \institution{Chainlink Labs}
  }
\author{Yu Xia}
\affiliation{%
  \institution{MIT}
}
\author{Runtian Zhou}
\affiliation{%
  \institution{Aptos}
}

\begin{CCSXML}
<ccs2012>
<concept>
<concept_id>10011007.10011006.10011008</concept_id>
<concept_desc>Software and its engineering~General programming languages</concept_desc>
<concept_significance>500</concept_significance>
</concept>
<concept>
<concept_id>10003456.10003457.10003521.10003525</concept_id>
<concept_desc>Social and professional topics~History of programming languages</concept_desc>
<concept_significance>300</concept_significance>
</concept>
</ccs2012>
\end{CCSXML}

\ccsdesc[500]{Software and its engineering~General programming languages}
\ccsdesc[300]{Social and professional topics~History of programming languages}


\maketitle

\pagestyle{plain} 

\begin{abstract}

\BSTM is a parallel execution engine for smart contracts, built around the principles of Software Transactional Memory. 
Transactions are grouped in blocks, and every execution of the block must yield the same deterministic outcome.
\BSTM further enforces that the outcome is consistent with executing transactions according to a preset order, leveraging this order to dynamically detect dependencies and avoid conflicts during speculative transaction execution.
At the core of \BSTM is a novel, low-overhead collaborative scheduler of execution and validation tasks.

\BSTM is implemented on the main branch of the Diem Blockchain code-base and runs in production at Aptos.
Our evaluation demonstrates that \BSTM is adaptive to workloads with different conflict rates and utilizes the inherent parallelism therein.
\BSTM achieves up to $110k$ tps in the Diem benchmarks and up to $170k$ tps in the Aptos Benchmarks, which is a $20$x and $17$x improvement over the sequential baseline with $32$ threads, respectively.
The throughput on a contended workload is up to $50k$ tps and $80k$ tps in Diem and Aptos benchmarks, respectively.
\end{abstract}

\section{Introduction}

A central challenge facing emerging decentralized web3 platforms and applications is improving the throughput of the underlying Blockchain systems.
At the core of a Blockchain system is state machine replication, allowing
a set of entities to agree on and apply a sequence of \emph{blocks} of transactions.
Each transaction contains \emph{smart contract} code to be executed, and every entity that executes the block of transactions must arrive at the same final state. 
While there has been progress on scaling parts of the system, 
Blockchains are still bottlenecked by other components, such as transaction execution.

Our goal is to accelerate the in-memory execution of transactions via parallelism. 
Transactions that access different memory locations 
can always be executed in parallel.
However, in a Blockchain system transactions can have significant number of access conflicts.
This may happen due to potential performance attacks, accessing popular contracts or due to economic opportunities (such as auctions and arbitrage~\cite{flash}).

Conflicts are the main challenge for performance.
An approach pioneered by Software Transactional Memory (STM) libraries~\cite{herlihy1993transactional,shavit1997software}
is to instrument 
memory accesses to detect conflicts.
STM libraries with optimistic concurrency control~\cite{dice2006transactional} (OCC) record memory accesses, \emph{validate} every transaction post execution, and abort and re-execute transactions when validation surfaces a conflict.
The final outcome is equivalent to executing transactions sequentially in some order.
This equivalent order is called \textit{serialization}.

Prior works~\cite{dickerson2020adding, amiri2019parblockchain, anjana2021optsmart}
have capitalized on the specifics of the Blockchain use-case to improve on the STM performance.
Their approach is to pre-compute dependencies in a form of a directed acyclic graph of transactions that can be executed via a fork-join schedule.
The resulting schedule is dependency-aware, and avoids corresponding conflicts.
If  entities are incentivized to record and share the dependency graph, then some entities may be able to avoid the pre-computation overhead.

In the context of deterministic databases, \textsc{Bohm}~\cite{faleiro2015rethinking} demonstrated a way to avoid pre-computing the dependency graph.
\textsc{Bohm} assumes that the write-sets of all transactions are known prior to execution, and enforces a specific preset serialization of transactions.
As a result, each read is associated with the last write preceding it in that order.
Using a multi-version data-structure~\cite{bernstein1983multiversion}, \textsc{Bohm} executes transactions when their read dependencies are resolved, avoiding corresponding conflicts.

\textbf{Our contribution.}
We present \BSTM, an in-memory smart contract parallel execution engine built around the principles of optimistically controlled STM. 
\BSTM does not require a priori knowledge of transaction write-sets, avoids pre-computation, 
and accelerates transaction execution autonomously without requiring further communication. Similar to \textsc{Bohm}, \BSTM uses multi-version shared data-structure
and enforces a preset serialization. 
The final outcome is equivalent to the sequential execution of transactions in the preset order in which they appear in the block.

The key observation is that with OCC and a preset serialization, when a transaction aborts, its write-set
can be used to efficiently detect future dependencies. This has two advantages with respect to pre-execution:
(1) in the optimistic case when there are few conflicts, most transactions are executed once, (2) otherwise,  write-sets are likely to be more accurate as they are based on a more up-to-date execution.
Anther advantage of the of the preset order is that it allows as comprehensive correctness testing as we can compare to a sequential execution output.

Two observations that contribute to the performance of \BSTM in the Blockchain context are the following.
First, in blockchain systems, the state is updated per block.
This allows the \BSTM to avoid the synchronization cost of committing transactions individually.
Instead \BSTM lazily commits all transactions in a block 
based on two atomic counters and a double-collect technique~\cite{attiya2004distributed}. 
Second, transactions are specified in smart contract languages, such as Move~\cite{blackshear2019move} and Solidity~\cite{wohrer2018smart}, and run in a virtual machine that encapsulates their execution and ensures safe behavior.
Therefore, opacity~\cite{guerraoui2007opacity} is not required, allowing \BSTM to efficiently combine an optimistic concurrent control with multi-version data structure, without additional mechanisms to avoid reaching inconsistent states.

The main challenge in combining OCC and preset serialization is that validations are no longer independent from each other and must logically occur in a sequence.
A failed validation of a transaction implies that all higher transactions can be committed only if they get successfully validated afterwards. 
\BSTM handles this issue via a novel collaborative scheduler 
that optimistically dispatches execution and validation tasks, prioritizing tasks for transactions lower in the preset order.
While concurrent priority queues are notoriously hard to scale across threads~\cite{spraylist, multiqueue}, \BSTM capitalizes on the preset serialization order and the boundedness of transaction indices to implement a concurrent ordered set abstraction using only a few shared atomic counters.




We provide comprehensive correctness proofs for both Safety and Liveness, proving that no deadlock or livelock is possible and the final state is always equivalent to the state produced by executing the transactions sequentially. 

A Rust implementation of \BSTM is merged on the main branches
of the Diem~\cite{diem} and its successor Aptos~\cite{aptoswhitepaper} open source blockchain code-bases~\cite{diemcodebase, aptoscodebase}. 
The experimental evaluation demonstrates that \BSTM outperforms sequential execution by up to $20$x on low-contention workloads and by up to $9$x on high-contention ones.
Importantly, \BSTM suffers from at most 30\% overhead when the workload is completely sequential.
In addition, \BSTM significantly outperforms a state-of-the-art deterministic STM~\cite{xia2019litm} implementation, and performances closely to \textsc{Bohm} which requires perfect write-sets information prior to execution.
 

The rest of the paper is organized as following: \sectionref{sec:oview} provides a high-level overview of \BSTM. 
\sectionref{sec:details} describes the full algorithm, while \sectionref{sec:ev} describes \BSTM implementation and evaluation.
\sectionref{sec:related} discusses related work and \sectionref{sec:disc} concludes the paper.
\appref{sec:proof} contains the comprehensive correctness proofs.

\begin{figure*}[ht]
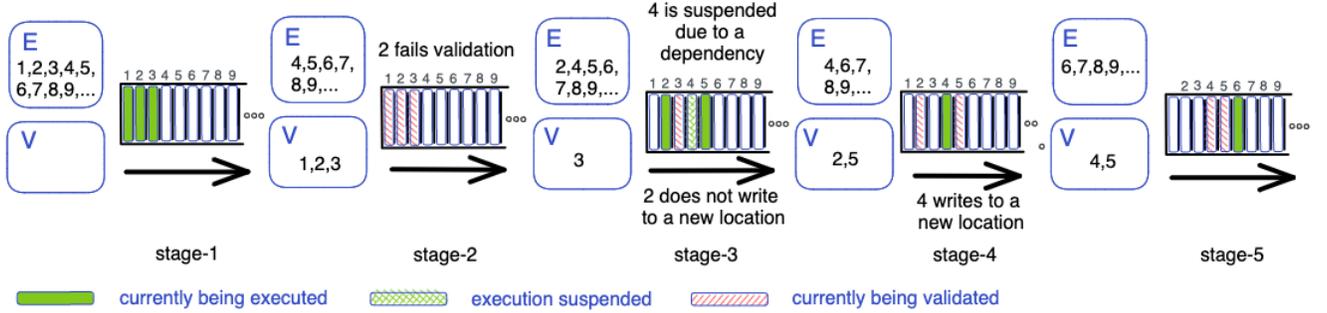

\footnotesize
\begin{flushleft}
    \textbf{Check done:} if 
    $V$ and $E$ are empty and no other thread is performing a task, then return.

    \textbf{Find next task:} Perform the task with the smallest transaction index $tx$ in $V$ and $E$:
\end{flushleft}    
    \begin{enumerate}
        \item \textbf{Execution task:} Execute the next incarnation of $tx$. If a value marked as \textsc{estimate} is read, abort execution and add $tx$ back to $E$. Otherwise:
        %
        \begin{enumerate}
            \item[(a)] If there is a write to a memory location to which the previous finished incarnation of $tx$ has not written, create validation tasks for all transactions $\geq tx$ that are not currently in $E$ or being executed and add them to $V$.
            \item[(b)]  Otherwise, create a validation task only for $tx$ and add it to $V$.
        \end{enumerate} 
        \item \textbf{Validation task:} Validate the last incarnation of $tx$. If validation succeeds, continue. Otherwise, \textbf{abort:}
         \begin{enumerate}
            \item[(a)] Mark every value (in the multi-versioned data-structure) written by the incarnation (that failed validation) as an \textsc{estimate}.
            \item[(b)] Create validation tasks for all transactions $> tx$ that are not currently in $E$ or being executed and add them to $V$. 
            \item[(c)] Create an execution task for transaction $tx$ with an incremented incarnation number, and add it to $E$.
            %
         %
         \end{enumerate}
    \end{enumerate}
\caption{High level scheduling}
\label{fig:highsched}
\end{figure*}

\section{Overview}
\label{sec:oview}
\begin{figure*}[ht]
\includegraphics[scale=0.45]{images/BSTM.png}

Illustration of an example execution of the abstract \BSTM collaborative scheduler.
{
\footnotesize 
\begin{flushleft} Initially, all transactions are in the ordered set $E$. In this example, transaction $tx_4$ depends on $tx_2$.
In stage 1, since there are no validation tasks, the threads execute transactions $tx_1, tx_2, tx_3$ in parallel.
Then, in stage 2, the threads validate transactions $tx_1, tx_2, tx_3$ in parallel, the validation of $tx_2$ fails and the validations of $tx_1$ and $tx_3$ succeed.
The incarnation of $tx_2$ is aborted, each of its writes is marked as an \textsc{estimate} in the multi-version data-structure, the next incarnation task is added to $E$, and a new validation task for $tx_3$ is added to $V$.
In stage 3, transaction $tx_3$ is validated and transactions $tx_2$ and $tx_4$ start executing their respective incarnations.  
However, the execution of $tx_4$ reads a value marked as \textsc{estimate}, is aborted due to the dependency on $tx_2$ and the thread executes the next transaction in $E$, which is $tx_5$.
As explained above, $tx_4$ is recorded as a dependency of $tx_2$ 
and added back to $E$ when $tx_2$'s incarnation finishes. 
After both $tx_2$ and $tx_5$ finish execution, the corresponding validation tasks are added to $V$.
In this example, the incarnation of $tx_2$ does not write to a memory location to which its previous incarnation did not write.
Therefore, another validation of $tx_3$ is not required.
In stage 4, $tx_2$ and $tx_5$ are successfully validated and $tx_4$ is executed.
From this point on, $tx_1, tx_2$, and $tx_3$ will never be re-executed as there is no task associated with them in $V$ or $E$ (and no task associated with a higher transaction may lead to creating it).
The execution of $tx_4$ writes to a new memory location, and thus $tx_5$ is added to $V$ for re-validation.
In stage 5, transactions $tx_4$ and $tx_5$ are validated and transaction $tx_6$ is executed.
\end{flushleft}
}
\label{fig:scedular}
\end{figure*}

%
%
The input of \BSTM is a block of transactions, denoted by \rust{BLOCK}, containing $n$ transactions, 
which defines the preset serialization order $tx_1<tx_2<...<tx_n$.
The problem definition is to execute the block and produce the final state 
equivalent to the state produced by executing the transactions in sequence $tx_1$, $tx_2$, \dots $tx_n$, each $tx_j$  executed to completion before $tx_{j+1}$ is started.
The goal is to utilize available threads to produce such final state as efficiently as possible.

Each transaction in \BSTM might be executed several times and we refer to the $i^{th}$ execution as \emph{incarnation} $i$ of a transaction.
We say that an incarnation is \emph{aborted} when the system decides that a subsequent re-execution with an incremented incarnation number is needed.
A \emph{version} is a pair of a transaction index and an \emph{incarnation number}.
To support reads and writes by transactions that may execute concurrently, 
\BSTM 
maintains an in-memory multi-version data structure
that separately stores for each memory location the latest value written per transaction, along with the associated transaction version.
When transaction $tx$ reads a memory location, it obtains from the multi-version data-structure the value written to this location by the highest transaction that appears before $tx$ in the preset serialization order, along with the associated version.
For example, transaction $tx_5$ can read a value written by transaction $tx_3$ even if transaction $tx_6$ has written to same location.
If no smaller transaction has written to a location, then the read (e.g. all reads by $tx_1$) is resolved from storage based on the state before the block execution.

For each incarnation, \BSTM maintains a \emph{write-set} and a \emph{read-set}.
The read-set contains the memory locations that are read during the incarnation, and the corresponding versions. The write-set describes the updates made by the incarnation as (memory location, value) pairs. The write-set of the incarnation is applied to shared memory (the multi-version data-structure) at the end of execution.
After an incarnation executes it needs to pass validation. 
The validation re-reads the read-set and compares the observed versions. 
Intuitively, a successful validation implies that writes applied by the incarnation are still up-to-date, while a failed validation implies 
that the incarnation has to be aborted.

\textbf{Dependency estimation.}
\BSTM does not pre-compute dependencies. Instead, for each transaction, \BSTM treats the write-set of an aborted incarnation as an estimation of the write-set of the next one. Together with the multi-version data structure and the preset order it allows reducing the abort rate by efficiently detecting potential dependencies.
When an incarnation is aborted due to a validation failure, the entries in the multi-version data-structure corresponding to its write-set are replaced with a special \textsc{estimate} marker.
This signifies that the next incarnation is estimated to write to the same memory location.
In particular, an incarnation of transaction $tx_j$ stops and is immediately aborted whenever it reads a value marked as an \textsc{estimate} that was written by a lower transaction $tx_k$.
This is an optimization to abort an incarnation early when it is likely to be aborted in the future due to a validation failure, which would happen if the next incarnation of $tx_k$ would indeed write to the same location (the \rust{ESTIMATE} markers that are not overwritten are removed by the next incarnation).

\textbf{Collaborative scheduler.} 
\BSTM introduces a collaborative scheduler, which coordinates the validation and execution tasks among threads.
The preset serialization order dictates that the transactions must be committed in order, so a successful validation of an incarnation does not guarantee that it can be committed.
This is because an abort and re-execution of an earlier transaction in the block might invalidate the incarnation read-set and necessitate re-execution.
Thus, when a transaction aborts, all higher transactions are scheduled for re-validation.
The same incarnation may be validated multiple times, by different threads, and potentially in parallel, but \BSTM ensures that only the first abort per version succeeds (the rest are ignored).

Since transactions must be committed in order, the \BSTM scheduler prioritizes tasks (validation and execution) associated with lower-indexed transactions. 
Next, we overview the high-level ideas behind the approach. The detailed logic is described in~\sectionref{sec:details} and formally proved in \appref{sec:proof}.

Abstractly, the \BSTM collaborative scheduler tracks an ordered set $V$ of pending validation tasks and an ordered set $E$ of pending execution tasks.
Initially, $V$ is empty and $E$ contains execution tasks for the initial incarnation of all transactions in the block.
A transaction $tx \not \in E$ is either currently being executed or (its last incarnation) has completed.
On a high level, each thread repeats the instructions described in~\figureref{fig:highsched}.

%
When a transaction $tx_k$ reads an \rust{ESTIMATE} marker written by $tx_j$ (with $j < k$), we say that $tx_k$ encounters a \emph{dependency}. 
We treat $tx_k$ as $tx_j$'s dependency because its read depends on a value that $tx_j$ is estimated to write.
For the ease of presentation, in the above description a transaction is added back to $E$ immediately upon encountering a dependency.
However, as explained in~\sectionref{sec:details}, \BSTM implements a slightly more involved mechanism. 
Transaction $tx_k$ is first recorded separately as a dependency of $tx_j$,
and only added back to $E$ when the next incarnation of $tx_j$ completes (i.e. when the dependency is resolved). 

The ordered sets, $V$ and $E$, are each implemented via a single atomic counter coupled with a mechanism to track the status of transactions, i.e. whether a given transaction is ready for validation or execution, respectively.
To pick a task, threads increment the smaller of these counters until they find a task that is ready to be performed.
To add a (validation or execution) task for transaction $tx$, the thread updates the status and reduces the corresponding counter to $tx$ (if it had a larger value).
For presentation purposes, the above description omits an optimization that the \BSTM scheduler uses in cases 1(b) and 2(c), where instead of reducing the counter value, the new task is returned.

\textbf{Optimistic validation.}
An incarnation of transaction might write to a memory location that was previously read by an incarnation of a higher transaction according to the preset serialization order.
This is why in 1(a), when an incarnation finishes, new validation tasks are created for higher transactions.
Importantly, validation tasks are scheduled optimistically, e.g. it is possible to concurrently validate the latest incarnations of transactions $tx_j$, $tx_{j+1}$, $tx_{j+2}$ and $tx_{j+4}$.
Suppose transactions $tx_j$, $tx_{j+1}$ and $tx_{j+4}$ are successfully validated, while the validation of $tx_{j+2}$ fails. 
When threads are available, \BSTM capitalizes by
performing these validations in parallel, allowing it to detect the validation failure of $tx_{j+2}$ faster in the above example (at the expense of a validation of $tx_{j+4}$ that needs to be redone).
Identifying validation failures and aborting incarnations as soon as possible is crucial for the system performance, as any incarnation that reads values written by a incarnation that aborts also needs to be aborted, forming a cascade of aborts.

When an incarnation writes only to a subset of memory locations written by the previously completed incarnation of the same transaction, i.e. case 1(b), \BSTM schedules validation just for the incarnation. This is sufficient due to 2(a), as the whole write-set of the previous incarnation is marked as estimates during the abort. The abort leads to optimistically creating validation tasks for higher transactions in 2(b). Threads that perform these tasks can already detect validation failures due to the \textsc{estimate} markers on memory locations, instead of waiting for a subsequent incarnation to finish.

\textbf{Commit rule.}
In~\cite{gelashvili2022block}, we derive a precise predicate for when transaction $tx_j$ can be considered committed (its roughly when an incarnation is successfully validated after lower transactions $0,\dots,j-1$ have already been committed).
It would be possible to continuously track this predicate, but to reduce the amount of work and synchronization involved, the \BSTM scheduler only checks whether the entire block of transactions can be committed.
This is done by observing that there are no more tasks to perform and at the same time, no threads that are performing any tasks.

\begin{figure}[ht]
\centering
\includegraphics[width=0.8\linewidth]{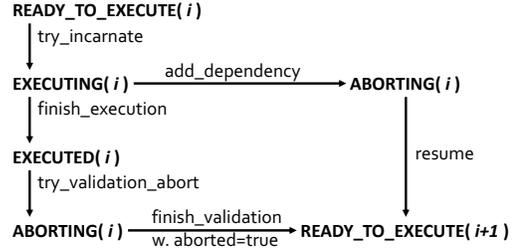}
\caption{Illustration of status transitions}
\label{fig:status}
\vspace{-2mm}
\end{figure}

\begin{algorithm*}[t]
    \caption{Thread logic}
    \label{alg:parexec}
    \small
    \begin{algorithmic}[1]
        
        \Procedure{\rust{run()}}{}
        \label{line:run}
        \State \emph{task} $\gets \bot$
        
        \While{$\neg$\module{Scheduler}\rust{.done()}\label{line:spawnloop}}
            \If{\emph{task} $\neq \bot~\wedge$ \emph{task.kind} $=$ \rust{EXECUTION\_TASK}}
                    \State \emph{task} $\gets$ \rust{try\_execute(\emph{task.version})}\label{line:tryexeccall}\Comment{returns a validation task, or $\bot$}
            \EndIf
            
            \If{\emph{task} $\neq \bot~\wedge$ \emph{task.kind} $=$ \rust{VALIDATION\_TASK}}
                    \State \emph{task} $\gets$                         \rust{needs\_re\-execution(\emph{task.version})}\label{line:reexeccall} \Comment{returns a re-execution task, or $\bot$}
            \EndIf
            
            \If{\emph{task} $= \bot$}
                \State \emph{task} $\gets$ \module{Scheduler}.\rust{next\_task()}
                \label{line:gettaskcall}
            \EndIf
        \EndWhile
        \EndProcedure
        
        
        \Function{\rust{try\_execute}}{\emph{version}} \Comment{returns a validation task, or $\bot$}
            \State (\emph{txn\_idx, incarnation\_number}) $\gets$ \emph{version}
            
            \State \emph{vm\_result} $\gets$ \module{VM}.\rust{execute(\emph{txn\_idx})}\label{line:vmexeccall}
            \Comment{VM does not write to shared memory}
                
            \If{\emph{vm\_result.status} = \rust{READ\_ERROR}}
                \If{$\neg$\module{Scheduler}.\rust{add\_dependency(\emph{txn\_idx, vm\_result.blocking\_txn\_idx})}\label{line:suspendexec}}
                    \State \Return \rust{try\_execute\emph{(version)}}\label{line:retry} \Comment{dependency resolved in the meantime, re-execute} 
                \EndIf
                \State \Return $\bot$
            \Else 
                
            \State \emph{wrote\_new\_location} $\gets$ \module{MVMemory}.\rust{record(\emph{version, vm\_result.read\_set, vm\_result.write\_set})}\label{line:recordcall} 
                
            \State \Return  \module{Scheduler}.\rust{finish\_execution(\emph{txn\_idx, incarnation\_number, wrote\_new\_location})}\label{line:finishexeccall}
            \EndIf             
        \EndFunction
        
        
        \Function{\rust{needs\_reexecution}}{\emph{version}} \Comment{returns a task for re-execution, or $\bot$}
        \label{alg:validate} 
                \State (\emph{txn\_idx, incarnation\_number}) $\gets$ \emph{version}
                
                \State $\emph{read\_set\_valid} \gets $ \module{MVMemory}.\rust{validate\_read\_set(\emph{txn\_idx})} \label{line:allvalidcall}
                
                \State \emph{aborted} $\gets \neg\emph{read\_set\_valid}~\wedge$ \module{Scheduler}.\rust{try\_validation\_abort(\emph{txn\_idx, incarnation\_number})}\label{line:abortcall}
                
                \If{\emph{aborted}}
                    \State \module{MVMemory}.\rust{convert\_writes\_to\_estimates(\emph{txn\_idx})}\label{line:dirtycall}
                \EndIf
                
                \State \Return \module{Scheduler}.\rust{finish\_validation(\emph{txn\_idx, aborted})}\label{line:finvalcall}
		\EndFunction

    \alglinenoNew{counter}
    \alglinenoPush{counter}

    \end{algorithmic}
\end{algorithm*}

\begin{algorithm*}[tbh]
    \caption{The \module{MVMemory} module}
    \label{alg:data}
    \small
    \begin{algorithmic}[1]
    \alglinenoPop{counter}
\Statex \textbf{Atomic Variables:}
    \StateX \emph{data} $\gets$ \rust{Map}, initially empty 
    \Comment{\emph{(location, txn\_idx)} maps to a pair (\emph{incarnation\_number, value}), or to 
    an \rust{ESTIMATE} marker}.

    \StateX \emph{last\_written\_locations} $\gets \rust{Array}(\rust{BLOCK.size()}, \{\})$ 
    \Comment{\emph{txn\_idx} to a set of memory locations written during its last finished execution.}
    
    \StateX \emph{last\_read\_set} $\gets \rust{Array}(\rust{BLOCK.size()}, \{\})$
    \Comment{\emph{txn\_idx} to a set of (\emph{location, version}) 
    pairs per reads in last finished execution.}
    
        \Procedure{\rust{apply\_write\_set}}{\emph{txn\_index, incarnation\_number, write\_set}}
        \For{\textbf{every} \emph{(location, value)} $\in$ \emph{write\_set}}
            \State \emph{data[(location, txn\_idx)]} $\gets$ \emph{(incarnation\_number, value)} \Comment{store in the multi-version data structure}
        \EndFor
        \EndProcedure
        
        \Function{\rust{rcu\_update\_written\_locations}}{\emph{txn\_index, new\_locations}}
        
        \State \emph{prev\_locations} $\gets$ \emph{last\_written\_locations[txn\_idx]} \Comment{loaded atomically (RCU read)}
        
        \For{\textbf{every} \emph{unwritten\_location} $\in $ \emph{prev\_locations} $\setminus$ \emph{new\_locations}}
            \State \emph{data.remove((unwritten\_location, txn\_idx))}\Comment{remove entries that were not overwritten}\label{line:skipremove}
        \EndFor 
        
        \State \emph{last\_written\_locations[txn\_idx]} $\gets$ \emph{new\_locations}
        \Comment{store newly written locations atomically (RCU update)}
        
        \State \Return $\emph{new\_locations} \setminus \emph{prev\_locations} \neq \{\}$ \Comment{was there a write to a location not written the last time}
        \EndFunction
        
        \Function{\rust{record}}{\emph{version, read\_set, write\_set}}
        \State (\emph{txn\_idx, incarnation\_number}) $\gets$ \emph{version}
        
        \State \rust{apply\_write\_set(\emph{txn\_idx, incarnation\_number, write\_set})}
        
        \State \emph{new\_locations} $\gets \{\emph{location} ~|~ (\emph{location}, \star) \in \emph{write\_set}\}$ \Comment{extract locations that were newly written}
        
        \State \emph{wrote\_new\_location} $\gets$ \rust{rcu\_update\_written\_locations(\emph{txn\_idx, new\_locations})} \label{line:uplocscall}
        
        \State \emph{last\_read\_set[txn\_idx]} $\gets$ \emph{read\_set} \Comment{store the read-set atomically (RCU update)} 
        
        \State \Return \emph{wrote\_new\_location}
    \EndFunction
    
    \Procedure{\rust{convert\_writes\_to\_estimates}}{\emph{txn\_idx}}
        \State \emph{prev\_locations} $\gets$ \emph{last\_written\_locations[txn\_idx]} \Comment{loaded atomically (RCU read)}
    
        \For{\textbf{every} \emph{location} $\in$ \emph{prev\_location}}
            \State $\emph{data[(location, txn\_idx)]}\gets$ \rust{ESTIMATE} \label{line:markempty}
            \Comment{entry is guaranteed to exist}
        \EndFor 
    \EndProcedure
    
    \vspace{-3mm}
    \begin{multicols}{2}
    \Function{\rust{read}}{\emph{location, txn\_idx}}
        \State $S \gets \{(\emph{(location, idx)}, \emph{entry}) \in \emph{data} ~|~ \emph{idx} < \emph{txn\_idx}\}$\label{line:upbound}
        
        \If{$S = \{\}$}
            \State \Return \emph{(status $\gets$  \rust{NOT\_FOUND})}
        \EndIf
        \State $\emph{((location, idx)}, \emph{entry}) \gets \argmax_{idx}{S}$ \label{line:argmax} 
        \If{$\emph{entry} = $ \rust{ESTIMATE}}
                \State \Return \emph{(status $\gets$ \rust{READ\_ERROR}, blocking\_txn\_idx $\gets$ idx)} 
         \EndIf
        \State \Return \emph{(status $\gets$ \rust{OK}, version $\gets$ (idx, entry.incarnation\_number), value $\gets$ entry.value)}\label{line:readok}
        
    \EndFunction
    
    \Function{\rust{snapshot()}}{}
        \State \emph{ret} $\gets$ $\{\}$
        
        \For{\textbf{every} \emph{location} $~|~$ ((\emph{location}, $\star$), $\star$) $\in$ \emph{data}}
        
            \State \emph{result} $\gets$ \rust{read(\emph{location, }BLOCK.size())} 
            
            \If{\emph{result.status} $=$ \rust{OK}}
            
                \State \emph{ret} $\gets$ \emph{ret} $\cup$ $\{\emph{location, result.value}\}$
            
            \EndIf
        
        \EndFor
    
        \State \Return \emph{ret} 
    \EndFunction
    \end{multicols}
    
    \vspace{-3mm}
    \Function{\rust{validate\_read\_set}}{\emph{txn\_idx}}
        \State \emph{prior\_reads} $\gets$ \emph{last\_read\_set[txn\_idx]} \Comment{last recorded \emph{read\_set}, loaded atomically via RCU}\label{line:rcurreads}
    
        \For{\textbf{every} \emph{(location, version)} $\in$ \emph{prior\_reads}}\Comment{\emph{version} is $\bot$ when prior read returned \rust{NOT\_FOUND}}
            \State \emph{cur\_read} $\gets$ \rust{read(\emph{location, txn\_idx)}}
                        
            \If{\emph{cur\_read.status} $=$ \rust{READ\_ERROR}}
                \State \Return \emph{false} \Comment{previously read entry from \emph{data}, now \rust{ESTIMATE}}\label{line:mvalidest}
            \EndIf
            
            \If{\emph{cur\_read.status} $=$ \rust{NOT\_FOUND} $\wedge$ \emph{version} $\neq \bot$}
                \State \Return \emph{false} \Comment{previously read entry from \emph{data}, now \rust{NOT\_FOUND}}
            \EndIf
            \If{\emph{cur\_read.status} $=$ \rust{OK} $\wedge$ \emph{cur\_read.version} $\neq$ \emph{version}}
                \State \Return \emph{false}\Comment{read some entry, but not the same as before}
            \EndIf
        \EndFor
        \State \Return \emph{true}
    \EndFunction

    \alglinenoPush{counter}
    \end{algorithmic}
\end{algorithm*}

\begin{algorithm*}[tbh]
    \caption{The \module{VM} module}
    \label{alg:vm}
    \small
    \begin{algorithmic}[1]
    \alglinenoPop{counter}
    \Function{\rust{execute}}{\emph{txn\_id}}
        \State \emph{read\_set} $\gets \{\}$ \Comment{\emph{(location, version)} pairs}
        \State \emph{write\_set} $\gets \{\}$ \Comment{\emph{(location, value)} pairs}
             
        \State \rust{run~transaction}\rust{BLOCK[\emph{txn\_idx}]} \Comment{run transaction, intercept reads and writes}
        
        \Indent
        
        \State ....
        
                \State \textbf{upon writing} \emph{value} \textbf{at a memory} \emph{location}:

        \Indent
            \If{\emph{(location, prev\_value)} $\in$ \emph{write\_set}}
                \State \emph{write\_set} $\gets$ \emph{write\_set}$~\setminus~\{$\emph{(location, prev\_value)}$\}$ \Comment{store the latest value per location}
            \EndIf
        
            \State \emph{write\_set} $\gets$ \emph{write\_set}$~\cup~\{$\emph{(location, value)}$\}$ \Comment{ \textbf{\module{VM}} does not write to \module{MVMemory} or \module{Storage}}
        
        \EndIndent
        
        \State ....
        
        \State \textbf{upon reading a memory} \emph{location}:
        
        \Indent
            \If{\emph{(location, value)} $\in$ \emph{write\_set}}
                \State \module{VM} \textbf{reads} \emph{value}\label{line:inwrite} \Comment{value written by this txn}
            \Else
        
            \State \emph{result} $\gets$ \module{MVMemory}.\rust{read(\emph{location, txn\_idx})}
            
            \If{\emph{result.status} $=$ \rust{NOT\_FOUND}}
                \State \emph{read\_set} $\gets$ \emph{read\_set}$~\cup~\{$\emph{(location, $\bot$)}$\}$\Comment{record version $\bot$ when reading from storage}
                
                \State \textbf{\module{VM} reads} from \module{Storage}
            \ElsIf{\emph{result.status} $=$ \rust{OK}}
                \State \emph{read\_set} $\gets$  \emph{read\_set}$~\cup~\{$\emph{(location, result.version})$\}$
            
                 \State \textbf{\module{VM} reads} \emph{result.value}
            \Else
            
                \State \Return \emph{result} \Comment{return (\rust{READ\_ERROR}, \emph{blocking\_txn\_id}) from the \module{VM}.\rust{execute}}
            
            \EndIf
            \EndIf
        
        \EndIndent
        
        \State ....
        
        \EndIndent
        
              \State \Return \emph{(read\_set, write\_set)} 
    \EndFunction
    \alglinenoPush{counter}
    \end{algorithmic}
\end{algorithm*}

\begin{algorithm*}[tbh]
    \caption{The \module{Scheduler} module, variables, utility APIs and next task logic}
    \label{alg:schvar}
    \small
    \begin{algorithmic}[1]
\alglinenoPop{counter}
\Statex \textbf{Atomic variables:}
	\StateX \emph{execution\_idx} $\gets 0$, \emph{validation\_idx} $\gets 0$, 
	\emph{decrease\_cnt} $\gets 0$,
	\emph{num\_active\_tasks} $\gets 0$,
	\emph{done\_marker} $\gets$ \emph{false}
	
	\Comment{Respectively: An index that tracks the next transaction to try and execute. A similar index for tracking validation. Number of times \StateX \hspace*{2cm} \emph{validation\_idx} or \emph{execution\_idx} was decreased. Number of ongoing validation and execution tasks. Marker for completion.}

	
	
	
    \StateX \emph{txn\_dependency} $\gets~\rust{Array}(\rust{BLOCK.size()},~\rust{mutex}(\{\}))$
    \Comment{\emph{txn\_idx} to a mutex-protected set of dependent transaction indices}

    \StateX \emph{txn\_status} $\gets~\rust{Array}(\rust{BLOCK.size()},~\rust{mutex}((0,\rust{READY\_TO\_EXECUTE})))$
    \Comment{\emph{txn\_idx} to a mutex-protected pair (\emph{incarnation\_number, status}), \Statex \hspace*{7.8cm} where $\emph{status} \in \{ \rust{READY\_TO\_EXECUTE}, \rust{EXECUTING}, \rust{EXECUTED}, \rust{ABORTING}\}$.}
    

    \vspace{-3mm}
    \begin{multicols}{2}
    \Procedure{\rust{decrease\_execution\_idx}}{\emph{target\_idx}}\label{alg:decexproc}
        \State \emph{execution\_idx} $\gets \min(\emph{execution\_idx, target\_idx})$\label{line:decexidx}
        \Comment{atomic}
        
        \State \emph{decrease\_cnt.increment()}\label{line:deccntex}
    \EndProcedure
    
    \Function{\rust{done()}}{}
        \State \Return \emph{done\_marker}\label{line:donemark}
    \EndFunction

	    \Procedure{\rust{decrease\_validation\_idx}}{\emph{target\_idx}}\label{alg:decvalproc} 
        \State \emph{validation\_idx} $\gets \min(\emph{validation\_idx, target\_idx})$
        \Comment{atomic}
        \label{line:decvalidx}
        
        \State \emph{decrease\_cnt.increment()}\label{line:deccntval}
    \EndProcedure

    \Procedure{\rust{check\_done()}}{}
        \State \emph{observed\_cnt} $\gets$ \emph{decrease\_cnt}
    
        \If{$\min(\emph{execution\_idx, validation\_idx})$ $\geq$ \rust{BLOCK.size()} $\wedge$ \\ 
        $\emph{num\_active\_tasks} = 0~\wedge~\emph{observed\_cnt} = \emph{decrease\_cnt}~$}\label{line:donecheck}
                    \State \emph{done\_marker} $\gets$ \emph{true} 
                    \label{line:setdone}
        \EndIf  
    \EndProcedure

    
       \Function{\rust{try\_incarnate}}{\emph{txn\_idx}}
        \If{$\emph{txn\_idx} < \rust{BLOCK.size()}$}\label{line:blockcheckex2}      
            \State \rust{\textbf{with}} \emph{txn\_status[txn\_idx].lock()}\label{line:statlockincarn}
            
            \Indent
            \If{\emph{txn\_status[txn\_idx].status} = \rust{READY\_TO\_EXECUTE}}\label{line:ready}
              \State $\emph{txn\_status[txn\_idx].status} \gets \rust{EXECUTING}$ \label{line:executing} 
            
            \State \Return \emph{(txn\_idx, txn\_status[txn\_idx].incarnation\_number)} 
           \EndIf
           \EndIndent
        \EndIf
        
        \State \emph{num\_active\_tasks.decrement()} \label{line:taskexdec} 
        \State \Return $\bot$
        
    \EndFunction

    \Function{\rust{next\_version\_to\_execute()}}{} 
        \label{alg:nexttoexec} 
        
        \If{\emph{execution\_idx} $\geq$ \rust{BLOCK.size()}}\label{line:blockcheckex1}      
            \State \rust{check\_done()}\label{line:checkcallex}
                
            \State \Return $\bot$
        \EndIf
        
        \State \emph{num\_active\_tasks.increment()} \label{line:taskexinc}
        
        \State \emph{idx\_to\_execute} $\gets$ \emph{execution\_idx.fetch\_and\_increment()}\label{line:fiex} 
        
        \State \Return \rust{try\_incarnate(\emph{idx\_to\_execute})} \label{line:tryinccall}
         
    \EndFunction

    \Function{\rust{next\_version\_to\_validate()}}{}
        \If{$\emph{validation\_idx} \geq \rust{BLOCK.size()}$}\label{line:blockcheckval1}      
            \State \rust{check\_done()}\label{line:checkcallval}

            \State \Return $\bot$
        \EndIf
    
        \State \emph{num\_active\_tasks.increment()} \label{line:taskvalinc}
        
        \State \emph{idx\_to\_validate} $\gets$ \emph{validation\_idx.fetch\_and\_increment()}\label{line:fival} 
        
        \If{$\emph{idx\_to\_validate} < \rust{BLOCK.size()}$}\label{line:blockcheckval2}      
        
        \State \emph{(incarnation\_number, status)} $\gets$ \emph{txn\_status[idx\_to\_validate].lock()} 
        \label{line:statlocknextval}
        
        \If{\emph{status} = \rust{EXECUTED}}\label{line:executed}
            \State \Return \emph{(idx\_to\_validate, incarnation\_number)} 
        \EndIf
 
        \EndIf
        
        \State \emph{num\_active\_tasks.decrement()} \label{line:taskvaldec} 

        \State \Return $\bot$
        
    \EndFunction
    
    \Function{\rust{next\_task()}}{}
        \If{\emph{validation\_idx} $<$ \emph{execution\_idx}}\label{line:idxcompare} 
            \State \emph{version\_to\_validate} $\gets$ \rust{next\_version\_to\_validate()}
            
            \If{\emph{version\_to\_validate} $\neq \bot$}
                \State \Return (\emph{version} $\gets$ \emph{version\_to\_validate},
                \StateX\hspace{18mm}
                \emph{kind} $\gets$ 
                 \rust{VALIDATION\_TASK})
            \EndIf
        \Else
            \State \emph{version\_to\_execute} $\gets$ \rust{next\_version\_to\_execute()}
            
            \If{\emph{version\_to\_execute} $\neq \bot$}
                \State \Return (\emph{version} $\gets$ \emph{version\_to\_execute},
                \StateX\hspace{18mm}
                \emph{kind} $\gets$ \rust{EXECUTION\_TASK})
            \EndIf
        \EndIf
    
        \State \Return $\bot$
    \EndFunction
    \end{multicols}
    \vspace{-3mm}
    
    \alglinenoPush{counter}
    \end{algorithmic}
\end{algorithm*}

\begin{algorithm*}[tbh]
    \caption{The \module{Scheduler} module, dependencies and finish logic}
    \label{alg:schdep}
    \small
    \begin{algorithmic}[1]
\alglinenoPop{counter}    
    \Function{\rust{add\_dependency}}{\emph{txn\_idx}, \emph{blocking\_txn\_idx}} 
        \label{alg:addep} 

        \State \rust{\textbf{with}} \emph{txn\_dependency[blocking\_txn\_idx].lock()}\label{line:lockdep}
        \Indent
            \If{\emph{txn\_status[blocking\_txn\_idx].lock().status} = \rust{EXECUTED}}\label{line:depexcheck} \Comment{thread holds $2$ locks}
                \State \Return \emph{false} \Comment{dependency resolved before locking in~\lineref{line:lockdep}}
            \EndIf
        
            \State \emph{txn\_status[txn\_idx].lock().status()} $\gets$ \rust{ABORTING} \label{line:depaborting} \Comment{previous status must be \rust{EXECUTING}}
            
            \State \emph{txn\_dependency[blocking\_txn\_idx].insert(txn\_idx)}\label{line:adddepid}

        \EndIndent    
        
        \State \emph{num\_active\_tasks.decrement()} \label{line:deptaskdec} \Comment{execution task aborted due to a dependency}
    
        \State \Return \emph{true}
    
    \EndFunction
    
    \Procedure{\rust{set\_ready\_status}}{\emph{txn\_idx}}
        \State \rust{\textbf{with}} \emph{txn\_status[txn\_idx].lock()} \label{line:lockstatready}
        \Indent
                \State \emph{(incarnation\_number, status)} $\gets$ \emph{txn\_status[txn\_idx]} \Comment{\emph{status} must be \rust{ABORTING}}
                
                \State \emph{txn\_status[txn\_idx]} $\gets$ \emph{(incarnation\_number + 1, \rust{READY\_TO\_EXECUTE})} \label{line:setready}
        \EndIndent
    \EndProcedure

    \Procedure{\rust{resume\_dependencies}}{\emph{dependent\_txn\_indices}}
        \For{\textbf{each} \emph{dep\_txn\_idx} $\in$ \emph{dependent\_txn\_indices}}
            \State \rust{set\_ready\_status(\emph{dep\_txn\_idx})}\label{line:depfree}
        \EndFor
        
        \State \emph{min\_dependency\_idx} $\gets$ \emph{min(dependent\_txn\_indices)} \Comment{minimum is $\bot$ if no elements}
        \If{\emph{min\_dependency\_idx} $\neq \bot$}
            \State \rust{decrease\_execution\_idx(\emph{min\_dependency\_idx})}\label{line:depexdec} \Comment{ensure dependent indices get re-executed}
        \EndIf
    \EndProcedure
    
    \Procedure{\rust{finish\_execution}}{\emph{txn\_idx, incarnation\_number, wrote\_new\_path}}
        \State \emph{txn\_status[txn\_idx].lock().status} $\gets$ 
            \rust{EXECUTED}\label{line:finex}    \Comment{status must have been \rust{EXECUTING}}
        
        \State \emph{deps} $\gets$ \emph{txn\_dependency[txn\_idx].lock().swap($\{\})$}\label{line:deplockswap}\Comment{swap out the set of dependent transaction indices}
        
        \State \rust{resume\_dependencies(\emph{deps})}\label{line:resumecall}
        
        \If{\emph{validation\_idx} $>$ \emph{txn\_idx}} \Comment{otherwise index already small enough} \label{line:valcompfin}
            \If{\emph{wrote\_new\_path}}
                \State \rust{decrease\_validation\_idx(\emph{txn\_idx})}\Comment{schedule validation for \emph{txn\_idx} and higher txns}\label{line:decvalcall2}
            \Else
                \State \Return (\emph{version} $\gets$ \emph{(txn\_idx, incarnation\_number), kind} $\gets$ \rust{VALIDATION\_TASK}) \label{line:retvaltask}
        \EndIf
        \EndIf
        
        \State \emph{num\_active\_tasks.decrement()} \label{line:extaskdec} 
        \State \Return $\bot$ \Comment{no task returned to the caller}
    \EndProcedure
    
    \Function{\rust{try\_validation\_abort}}{\emph{txn\_idx, incarnation\_number}}
        \State \rust{\textbf{with}} \emph{txn\_status[txn\_idx].lock()}
            \Indent
            \If{\emph{txn\_status[txn\_idx]} = \emph{(incarnation\_number, \rust{EXECUTED})}}\label{line:canabort}

              \State $\emph{txn\_status[txn\_idx].status} \gets \rust{ABORTING}$ \Comment{thread changes status, starts aborting} \label{line:valaborting}
            
            \State \Return \emph{true}
           \EndIf
           \EndIndent
        \State \Return \emph{false}
    \EndFunction

    \Procedure{\rust{finish\_validation}}{\emph{txn\_idx, aborted}}
        \If{\emph{aborted}}    
            \State \rust{set\_ready\_status(\emph{txn\_idx})}\label{line:readycallval}
            
            \State \rust{decrease\_validation\_idx(\emph{txn\_idx}} $+~1$\rust{)}
            \label{line:decvalcall1}\Comment{schedule validation for higher transactions}
            
            \If{\emph{execution\_idx} $>$ \emph{txn\_idx}}\label{line:exidxfincheck} \Comment{otherwise index already small enough}
                \State \emph{new\_version} $\gets$ \rust{try\_incarnate(\emph{txn\_idx})}\label{line:finincarnate}
                
                \If{\emph{new\_version} $\neq \bot$}\label{line:finincarnver}
                     \State \Return (\emph{new\_version}, \emph{kind} $\gets$ \rust{EXECUTION\_TASK}) \label{line:finvalextask} \Comment{return re-execution task to the caller}
                \EndIf
            \EndIf
        \EndIf
        
        \State \emph{num\_active\_tasks.decrement()} \Comment{done with validation task}\label{line:valtaskdec}
        \State \Return $\bot$ \Comment{no task returned to the caller}
    \EndProcedure
\alglinenoPush{counter}
    \end{algorithmic}
\end{algorithm*}

\vspace{-2mm}
\section{\BSTM Detailed Description}
\label{sec:details}
In this section, we describe \BSTM.
%
Upon spawning, threads perform the \rust{run()} procedure in~\lineref{line:run}. Our pseudo-code is divided into several modules that the threads use. The \module{Scheduler} module contains the shared variables and logic used to dispatch execution and validation tasks. The \module{MVMemory}
module contains shared memory in a form of a multi-version data-structure for values written and read by different transactions in \BSTM. Finally, the \module{VM} module describes how reads and writes are instrumented during transaction execution.

\BSTM finishes when all threads join after returning from the \rust{run()} invocation. At this point, the output of \BSTM can be obtained by calling the \module{MVMemory}.\rust{snapshot()} function that returns the final values for all affected memory locations.
This function can be easily parallelized and the output can be persisted to main storage (abstracted as a \module{Storage} module), but these aspects are out of the scope here.

\subsection{High-Level Thread Logic}
\label{sec:highthread}
We start by the high-level logic described in~\algorithmref{alg:parexec}.
The \rust{run()} procedure interfaces with the \module{Scheduler} module and consists of a loop that lets the invoking thread continuously perform available validation and execution tasks. 
The thread looks for a new task in~\lineref{line:gettaskcall}, and dispatches a proper handler based on its kind, i.e. function \rust{try\_execute} in~\lineref{line:tryexeccall} for an \rust{EXECUTION\_TASK} and function \rust{needs\_reexecution} in~\lineref{line:reexeccall} for a \rust{VALIDATION\_TASK} (since, as discussed in~\sectionref{sec:oview}, a successful validation does not change state, while failed validation implies that the transaction requires re-execution). 
Both of this functions take a transaction version (transaction index and incarnation number) as an input.
A \rust{try\_execute} function invocation may return a new validation task back to the caller, and a \rust{needs\_reexecution} function invocation may return a new execution task.

\vspace{-1mm}
\subsubsection{Execution Tasks}
An execution task is processed using the \rust{try\_execute} procedure. 
First, a \module{VM}.\rust{execute} function is invoked in~\lineref{line:vmexeccall}.
As discussed in~\sectionref{sec:vm}, by the \module{VM} design, this function reads from memory (\module{MVMemory} data-structure and the main \module{Storage}), but never modifies any state while being performed. Instead, a successful \module{VM} execution returns a \emph{write-set}, consisting of memory locations and their updated values, which are applied to \module{MVMemory} by the \rust{record} function invocation in~\lineref{line:recordcall}.
In \BSTM, \module{VM}.\rust{execute} also captures and returns a \emph{read-set}, containing all memory locations read during the incarnation, each associated with whether a value was read from \module{MVMemory} or \module{Storage}, and in the former case, the version of the transaction execution that previously wrote the value. 
The read-set is also passed to the \module{MVMemory}.\rust{record} call in~\lineref{line:recordcall} and stored in \module{MVMemory} for later validation purposes. 

Every \module{MVMemory}.\rust{record} invocation returns an indicator whether a write occurred to a memory location not written to by the previous incarnation of the same transaction.
As discussed in~\sectionref{sec:oview}, in \BSTM this indicator determines whether the higher transactions (than the transaction that just finished execution, in the preset serialization order)  require further validation.
\module{Scheduler}.\rust{finish\_execution} in \lineref{line:finishexeccall} schedules the required validation tasks.
When a new location is not written, \emph{wrote\_new\_location} variable is set to \emph{false} and it suffices to only validate the transaction itself.
In this case, due to an internal performance optimization, the \module{Scheduler} module sometimes returns this validation task back to the caller from the \rust{finish\_execution} invocation.

The \module{VM} execution of transaction $tx_j$ may observe a read dependency on a lower transaction $tx_k$ in the preset order, $k < j$. As discussed in~\sectionref{sec:oview}, this happens when the last incarnation of $tx_k$ wrote to a memory location that $tx_j$ reads, but when the incarnation of $tx_k$ aborted before the read by $tx_j$.
In this case, the index $k$ of the blocking transaction is returned as \emph{vm\_result.blocking\_txn\_idx}, a part of the output in~\lineref{line:vmexeccall}. 
In order to re-schedule the execution task for $tx_j$ for after the blocking transaction $tx_k$ finishes its next incarnation,~\module{Scheduler}.\rust{add\_dependency} is called  in~\lineref{line:suspendexec}. This function returns \emph{false} if it encounters a race condition when $tx_k$ gets re-executed before the dependency can be added. The execution task is then retried immediately in~\lineref{line:retry}.

\subsubsection{Validation Tasks}
\label{sec:valtask}
A \rust{validate\_read\_set} call in~\lineref{line:allvalidcall} obtains 
the last read-set recorded by an execution of \emph{txn\_idx} and checks that re-reading each memory location in the read-set still yields the same values.
To be more precise, for every value that was read, the read-set stores a read descriptor.
This descriptor contains the version of the transaction (during the execution of which the value was written), or $\bot$ if the value was read from storage (i.e. not written by a smaller transaction).
The incarnation numbers are monotonically increasing, so it is sufficient to validate the read-set by comparing the corresponding descriptors.

If validation fails, \rust{try\_validation\_abort} on \module{Scheduler} is called in~\lineref{line:abortcall}, which returns an indicator of whether the abort was successful.
\module{Scheduler} ensures that only one failing validation per version may lead to a successful abort.
Hence, if \rust{abort\_validation} returns \emph{false}, then the incarnation was already aborted.
If the abort was successful, then \rust{convert\_writes\_to\_estimates(\emph{txn\_idx})} in~\lineref{line:dirtycall} is called, which replaces the write-set of the aborted version in the shared memory data-structure with special \rust{ESTIMATE} markers.
A successful abort leads to scheduling the transaction for re-execution and the higher transactions for validation during the \module{Scheduler}.\rust{finish\_validation} call in~\lineref{line:finvalcall}.
Sometimes, (as an optimization), the re-execution task is returned (that proceeds to return the new version from \rust{needs\_reexecution} and then in~\lineref{line:tryexeccall} become the only thread to execute the next incarnation of the transaction).

\subsection{Multi-Version Memory}
The \module{MVMemory} module (\algorithmref{alg:data}) describes the shared memory data-structure in \BSTM. It is called \emph{multi-version} because it stores multiple writes for each memory location, along with a value and an associated version of a corresponding transaction.
In the pseudo-code, we represent the main data-structure, called \emph{data}, with an abstract map interface, mapping (\emph{location}, \emph{txn\_idx}) pairs to the corresponding entries, which are (\emph{incarnation\_number}, \emph{value}) pairs.
In order to support a read of memory \emph{location} by transaction $tx_j$, \emph{data} provides an interface that returns an entry written at location by the transaction with the highest index $i$ such that $i < j$
This functionality is used 
in~\lineref{line:upbound} and~\lineref{line:argmax}. For clarity of presentation, our pseudo-code focuses on the abstract functionality of the map, while standard concurrent data-structure design techniques can be used for an efficient implementation (discussed in~\sectionref{sec:ev}).

For every transaction, \module{MVMemory} stores a set of memory locations in the \emph{last\_written\_locations} array and a set of \emph{(location, version)} pairs in the \emph{last\_read\_set} array. We assume that these sets are loaded and stored atomically, which can be accomplished by storing a pointer to the set and accessing the pointer atomically, i.e. via the read-copy-update~\cite{rcu}.

\textbf{Recording.}
The \rust{record} function takes a transaction version along with the read-set and the write-set (resulting from the execution of the version). 
The write-set consists of (memory location, value) pairs that are applied to the \emph{data} map by \rust{apply\_write\_set} procedure invocation.
The invocation of \rust{rcu\_update\_written\_locations} that follows in~\lineref{line:uplocscall} updates \emph{last\_written\_locations} and also removes (in~\lineref{line:skipremove}) from the \emph{data} map all entries at memory locations that were not overwritten by the latest write-set of the transaction (i.e. locations in the \emph{last\_written\_locations} before, but not after the update). This function also determines and returns whether a new memory location was written (i.e. in \emph{last\_written\_locations} after, but not before the update). 
This indicator is stored in \emph{wrote\_new\_location} variable and returned from the \rust{record} function. Before returning, the read-set of the transaction is stored in the \emph{last\_read\_set} array via an RCU pointer update.


The \rust{convert\_writes\_to\_estimates} procedure, called during a transaction abort, iterates over \emph{last\_written\_locations} of the transaction, and replaces each stored (\emph{incarnation\_number}, \emph{value}) pair with a special \rust{ESTIMATE} marker.
It ensures that validations fail for higher transactions if they  
have read the data written by the aborted incarnation. 
While removing the entries can also accomplish this, the \rust{ESTIMATE} marker also serves as a ``write estimate" for the next incarnation of this transaction. Any transaction that observes an \rust{ESTIMATE} of transaction $tx$ when reading during a speculative execution, waits for the dependency to resolve ($tx$ to be re-executed), as opposed to ignoring the \rust{ESTIMATE} and likely aborting if $tx$'s next incarnation again writes to the same memory location.

\textbf{Reads.}
The \module{MVMemory}.\rust{read} function takes a memory location and a transaction index \emph{txn\_idx} as its input parameters. First, it looks for the highest transaction index, \emph{idx}, among transactions lower than \emph{txn\_idx} that have written to this memory location (\lineref{line:upbound} and~\lineref{line:argmax}). Based on the fixed serialization order of transactions in the block, this is the best guess for reading speculatively (writes by transactions lower than \emph{idx} are overwritten by \emph{idx}, and the speculative premise is that the transactions between \emph{idx} and \emph{txn\_idx} do not write to the same memory location).
The value written by transaction \emph{idx} is returned in~\lineref{line:readok}, alongside with the full version (i.e. \emph{idx} and the incarnation number) and an \rust{OK} status.  
However, if the entry corresponding to transaction \emph{idx} is an \rust{ESTIMATE} marker, then the \rust{read} returns an \rust{READ\_ERROR} status and \emph{idx} as a blocking transaction index. This is an indication for the caller to postpone the execution of transaction \emph{txn\_idx} until the next incarnation of the blocking transaction \emph{idx} completes. Essentially, at this point, it is estimated that transaction \emph{idx} will perform a write that is relevant for the correct execution of transaction \emph{txn\_idx}.

When no lower transaction has written to the memory location, a \rust{read} returns a \rust{NOT\_FOUND} status, implying that the value cannot be obtained from the previous transactions in the block. As we will describe shortly, the caller can then complete the speculative read by reading from storage.

The \rust{validate\_read\_set} function loads (via RCU) 
the most recently recorded read-set from the transaction's execution in~\lineref{line:rcurreads}. The function calls \rust{read} for each location and checks observed status and version against the read-set (recall that version $\bot$ in the read-set means that the corresponding prior read returned \rust{NOT\_FOUND} status, i.e. it read a value from \module{Storage}). As we saw in~\sectionref{sec:valtask}, \rust{validate\_read\_set} function is invoked during validation in~\lineref{line:allvalidcall}, at which point the incarnation that is being validated is already executed and has recorded the read-set. However, if the thread performing a validation task for incarnation $i$ of a transaction is slow, it is possible that \rust{validate\_read\_set} function invocation observes a read-set recorded by a later (i.e. $>i$) incarnation. In this case, incarnation $i$ is guaranteed to be already aborted (else higher incarnations would never start), and the validation task will have no effect on the system regardless of the outcome (only validations that successfully abort affect the state and each incarnation can be aborted at most once).

The \rust{snapshot} function is called after \BSTM finishes, and returns the value written by the highest transaction for every location that was written to by some transaction.

\subsubsection{\module{VM} execution}
\label{sec:vm}
In~\algorithmref{alg:vm} we describe how reads and writes are handled in \BSTM by the \module{VM}.\rust{execute} function (invoked while performing an execution task, in~\lineref{line:vmexeccall}). 
This function tracks and returns the transaction's read- and write-sets, both initialized to empty.
When a transaction attempts to write a value to a location, the (location, value) pair is added to the write-set, possibly replacing a pair with a prior value (if it is not the first time the transaction wrote to this location during the execution). 

When a transaction attempts to read a location, if the location is already in the write-set then the \module{VM} reads the corresponding value (that the transaction itself wrote) in~\lineref{line:inwrite}.
Otherwise, \module{MVMemory}.\rust{read} is performed. If it returns \\ \rust{NOT\_FOUND}, then \rust{VM} reads the value directly from storage (abstracted as a \module{Storage} module that contains values preceding the block execution) and records \emph{(location, $\bot$)} in the read-set. If \module{MVMemory}.\rust{read} returns  \rust{READ\_ERROR}, then \module{VM} execution stops and returns the error and the blocking transaction index (for the dependency) to the caller. If it returns \rust{OK}, then  \module{VM} reads the resulting value from \module{MVMemory} and records the location and version pair in the read-set.

Note that for simplicity of presentation, if the transaction reads the same location more than once, the pseudo-code repeats the \rust{read} and makes separate record in the read-set. Even if reading the same location results in reading different values, \BSTM algorithm maintains correctness because all reads are eventually validated and the \module{VM} captures the errors that may arise due to any opacity violations.

\vspace{-2mm}
\subsection{Scheduling}
The \module{Scheduler} module contains the necessary state and synchronization logic for managing the execution and validation tasks. For each transaction in a block, the \emph{txn\_status} array contains the most up-to-date incarnation number (initially $0$) and the status of this incarnation, which can be one of \rust{READY\_TO\_EXECUTE} (initial value), \rust{EXECUTING}, \rust{EXECUTED} and \rust{ABORTING}. The entries of the \emph{txn\_status} array are protected by a lock to provide atomicity. 

Status transitions are illustrated in~\figureref{fig:status}.
The thread that changes the status from \rust{READY\_TO\_EXECUTE} to \rust{EXECUTING} in~\lineref{line:executing} when incarnation number is $i$ performs incarnation $i$ of the transaction.
The status never becomes \\ \rust{READY\_TO\_EXECUTE(i)} again, guaranteeing that no incarnation is performed more than once. 
Afterwards, this thread sets the status to \rust{EXECUTED(i)} in~\lineref{line:finex}.
Similarly, only the thread that changes the status from \rust{EXECUTED(i)} to \rust{ABORTING(i)} returns \emph{true} from \rust{try\_validation\_abort} for incarnation $i$. After performing the steps associated with a successful abort, as discussed in~\sectionref{sec:valtask}, this thread then updates the status to\\ \rust{READY\_TO\_EXECUTE(i+1)} in~\lineref{line:setready}.
This indicates that an execution task for incarnation $i+1$ is ready to be created.

When incarnation $i$ of transaction $tx_k$ aborts because of a read dependency on transaction $tx_j$ ($j < k$ in the preset serialization order), the status of $tx_k$ is updated to \rust{ABORTING(i)} in~\lineref{line:depaborting}. 
The corresponding \rust{add\_dependency(\emph{k, j})} invocation returns \emph{true} and \BSTM guarantees that some thread will subsequently finish executing transaction $tx_j$ and resolve $tx_k$'s dependency in~\lineref{line:setready} (called from~\lineref{line:depfree}) by setting its status to \rust{READY\_TO\_EXECUTE(i+1)}. 

The \emph{txn\_dependency} array is used to track transaction dependencies.
In the above example, when transaction $tx_k$ reads an estimate of transaction $tx_j$ and calls \rust{add\_dependency(\emph{k, j})} (that returns \emph{true}), $k$ is added to \emph{txn\_dependency[j]} in~\lineref{line:adddepid}.
Our pseudo-code explicitly describes lock-based synchronization for the dependencies stored in the \emph{txn\_dependency} array. This is to demonstrate the handling of a race between the \rust{add\_dependency} function of $tx_k$ and the \rust{finish\_execution} procedure of $tx_j$ (in particular, to guarantee that transaction $tx_j$ will always clear its dependencies in~\lineref{line:deplockswap}).
The problematic scenario could arise if after $tx_k$ observed the read dependency, transaction $tx_j$ raced to \rust{finish\_execution} and cleared its dependencies.
However, due to the check in~\lineref{line:depexcheck}, dependency will not be added and the \rust{add\_dependency} invocation will return \emph{false}.
Then, the status of $tx_k$ would remain \rust{EXECUTING} and the caller would immediately re-attempt the execution task of $tx_k$, incarnation $i$, in~\lineref{line:retry}.


\textbf{Managing Tasks.}
\BSTM scheduler maintains \emph{execution\_idx} and \emph{validation\_idx} atomic counters.
Together, one can view the status array and the validation (or execution) index counter as a counting-based implementation of an ordered set abstraction for selecting lowest-indexed available validation (or execution) task.

The \emph{validation\_idx} counter tracks the index of the next transaction to be validated.
A thread picks an index in~\lineref{line:fival} in the  \rust{next\_version\_to\_validate} function by performing the \emph{fetch\_and\_increment} instruction on the \emph{validation\_idx}.
It then checks if the transaction with the corresponding index is ready to be validated (i.e. the status is \rust{EXECUTED}), and if it is, determines the latest incarnation number.
A similar \emph{execution\_idx} counter is used in combination with the status array to manage execution tasks. 
In the \rust{next\_version\_to\_execute} function, a thread picks an index by \emph{fetch\_and\_increment}-ing in~\lineref{line:fiex}, then invokes the \rust{try\_incarnate} function.
Only if the transaction is in a \rust{READY\_TO\_EXECUTE} state, this function will set the status to \rust{EXECUTING} and return the corresponding version for execution.

When transaction status is updated to \rust{READY\_TO\_EXECUTE}, \BSTM ensures that the corresponding execution task eventually gets created. In the \rust{resume\_dependencies} procedure, the execution index is reduced by the call in~\lineref{line:depexdec} to be no higher than indices of all transactions that had a dependency resolved. In \rust{finish\_validation} function after a successful abort, however, there may be a single re-execution task (unless the task was already claimed by another thread after the status was set, something that is checked in~\lineref{line:finincarnver}). As an optimization, instead of reducing \emph{execution\_idx}, the execution task is sometimes returned to the caller in~\lineref{line:finvalextask}.

Similarly, if a validation of transaction $tx_k$ was successfully aborted, then \BSTM ensures, in the \rust{finish\_validation} function (in~\lineref{line:decvalcall1}), that $\emph{validation\_idx} \leq k$.
In addition, in the \rust{finish\_execution} function of transaction $tx_k$, \BSTM invokes \rust{decrease\_validation\_idx} in~\lineref{line:decvalcall2} if a new memory location was written by the associated incarnation. Otherwise, only a validation task for $tx_k$ is created that may be returned to the caller.

Finally, the \rust{next\_task} function decides whether to obtain a version to execute or version to validate based on a simple heuristic, by comparing the two indices in~\lineref{line:idxcompare}.

\textbf{Detecting Completion.}
The \module{Scheduler} provides a mechanism for the threads to detect when all execution and validation tasks are completed. This is not trivial because individual threads might obtain no available tasks from the \rust{next\_task} function, but more execution and validation tasks could still be created later, e.g. if a validation task that is being performed by another thread fails.

\BSTM implements a \rust{check\_done} procedure that determines when all work is completed and the threads can safely return. In this case, a \emph{done\_marker} is set to \emph{true}, providing a cheap way for all threads to exit their main loops in~\lineref{line:spawnloop}.
Threads invoke a \rust{check\_done} procedure in~\lineref{line:checkcallex} and~\lineref{line:checkcallval}, when observing an execution or validation index that is already $\geq \rust{BLOCK.size()}$.
In the following, we explain the logic behind \rust{check\_done}.

A straw man approach would be to check that both execution and validation indices are at least as large as the \rust{BLOCK.size()}. The first problem with this approach is that it does not consider when the execution and validation tasks actually finish. For example, the \emph{validation\_idx} may be incremented in~\lineref{line:fival} and become \rust{BLOCK.size()}, but it would be incorrect for the threads to return, as the corresponding validation task of transaction $\rust{BLOCK.size()} -1$ may still fail. 
To overcome this problem, \BSTM utilizes the \emph{num\_active\_tasks} atomic counter to track the number of ongoing execution and validation tasks. Then, in addition to the indices, the scheduler also checks whether $\emph{num\_active\_tasks} = 0$ in~\lineref{line:donecheck}.

The \emph{num\_active\_tasks} counter is incremented in~\lineref{line:taskexinc} and~\lineref{line:taskvalinc}, right before \emph{execution\_idx} and \emph{validation\_idx} are \emph{fetch-and-increment}-ed, respectively. The 
\emph{num\_active\_tasks} is decremented if no task corresponding to the fetched index is created (\lineref{line:taskexdec} and~\lineref{line:taskvaldec}), or after the tasks finish (\lineref{line:extaskdec} and~\lineref{line:valtaskdec}). As an optimization, when \rust{finish\_execution} or \rust{finish\_validation} functions return a new task to the caller, \emph{num\_active\_tasks} is left unchanged (instead of incrementing and decrementing that cancel out). 

The second challenge is that \emph{validation\_idx}, \emph{execution\_idx} and \emph{num\_active\_tasks} are separate counters, e.g. it is possible to read that \emph{validation\_idx} has value \rust{BLOCK.size()}, then read that \emph{num\_active\_tasks} has value $0$, without these variables simultaneously holding the respective values. \BSTM handles this by another counter, \emph{decrease\_cnt}, incremented in \rust{decrease\_execution\_idx} and \\ \rust{decrease\_validation\_idx} procedures (\lineref{line:deccntex},~\lineref{line:deccntval}). By reading \emph{decrease\_cnt} twice in \rust{check\_done}, it is possible to detect if validation or execution index decreases from their observed values when \emph{num\_active\_tasks} is read to be $0$.

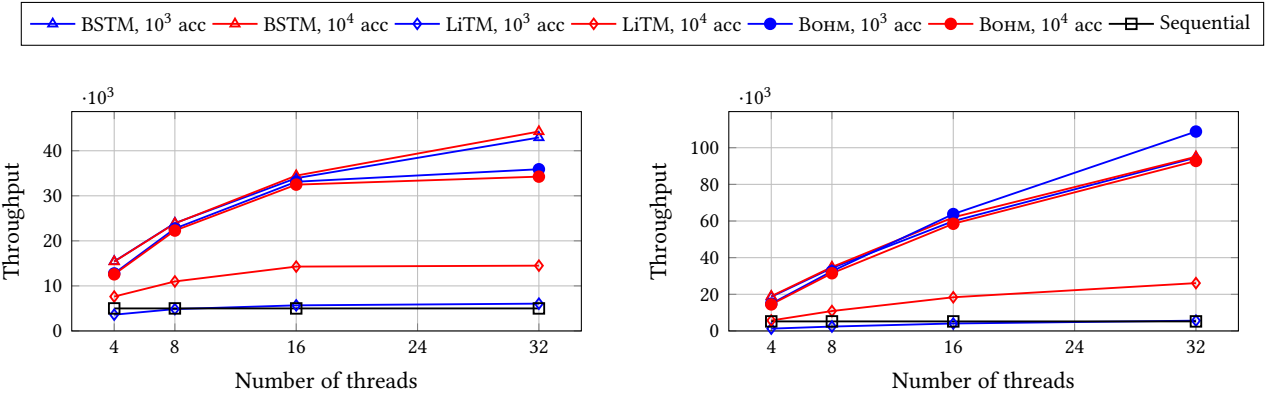
\begin{figure*}[t]
\begin{minipage}[t]{0.47\linewidth}
    \centering
    \pgfplotsset{footnotesize,height=4.5cm, width=\linewidth}
    \begin{tikzpicture}
    \begin{axis}[
        scaled y ticks=base 10:-3,
        legend columns=10,
        legend style={font=\footnotesize, at={(-0.1,1.5)},anchor=north west,legend columns=2},
        xlabel={Number of threads},
        ylabel= {Throughput},
        grid=major,
        xtick={4, 8, 16, 24, 32},
        ymin=0,
        ]
        \addplot [line width=0.25mm, color=blue, mark=triangle] table [x=threads, y=bstm1k,  col sep=comma] {plots/batch-1k};
        \addplot [line width=0.25mm, color=red, mark=triangle] table [x=threads, y=bstm10k,  col sep=comma] {plots/batch-1k};
        \addplot [line width=0.25mm, color=blue, mark=diamond] table [x=threads, y=litm1k,  col sep=comma] {plots/batch-1k};
        \addplot [line width=0.25mm, color=red, mark=diamond] table [x=threads, y=litm10k,  col sep=comma] {plots/batch-1k};
        \addplot [line width=0.25mm, color=blue, mark=*] table [x=threads, y=bohm1kexe,  col sep=comma] {plots/batch-1k-bohm};
        \addplot [line width=0.25mm, color=red, mark=*] table [x=threads, y=bohm10kexe,  col sep=comma] {plots/batch-1k-bohm};
        \addplot [line width=0.25mm, black, mark=square] [error bars/.cd, y explicit,y dir=both,] table [x=threads, y=seq,  col sep=comma] {plots/batch-1k};
        \addlegendentry{BSTM, $10^3$ acc}
        \addlegendentry{BSTM, $10^4$ acc}
        \addlegendentry{LiTM, $10^3$ acc}
        \addlegendentry{LiTM, $10^4$ acc}
        \addlegendentry{\textsc{Bohm}, $10^3$ acc}
        \addlegendentry{\textsc{Bohm}, $10^4$ acc}
        \addlegendentry{Sequential}
    \end{axis}
    \end{tikzpicture}
    \label{fig:all-1k-p2p}
\end{minipage}
\begin{minipage}[t]{0.47\linewidth}
    \centering
    \pgfplotsset{footnotesize,height=4.5cm, width=\linewidth}
    \begin{tikzpicture}
    \begin{axis}[
        scaled y ticks=base 10:-3,
        legend columns=5,
        xlabel={Number of threads},
        ylabel= {Throughput},
        grid=major,
        xtick={4, 8, 16, 24, 32},
        ymin=0,
        ]
        \addplot [line width=0.25mm, color=blue, mark=triangle] table [x=threads, y=bstm1k,  col sep=comma] {plots/batch-10k};
        \addplot [line width=0.25mm, color=red, mark=triangle] table [x=threads, y=bstm10k,  col sep=comma] {plots/batch-10k};
        \addplot [line width=0.25mm, color=blue, mark=diamond] table [x=threads, y=litm1k,  col sep=comma] {plots/batch-10k};
        \addplot [line width=0.25mm, color=red, mark=diamond] table [x=threads, y=litm10k,  col sep=comma] {plots/batch-10k};
        \addplot [line width=0.25mm, color=blue, mark=*] table [x=threads, y=bohm1kexe,  col sep=comma] {plots/batch-10k-bohm};
        \addplot [line width=0.25mm, color=red, mark=*] table [x=threads, y=bohm10kexe,  col sep=comma] {plots/batch-10k-bohm};
        \addplot [line width=0.25mm, black, mark=square] [error bars/.cd, y explicit,y dir=both,] table [x=threads, y=seq,  col sep=comma] {plots/batch-10k};
    \end{axis}
    \end{tikzpicture}
    \label{fig:all-10k-p2p}
\end{minipage}
\vspace{-3mm}
 \caption{ Comparison of BSTM, LiTM, \textsc{Bohm} and sequential execution for block size $10^3$ (left) and $10^4$ (right). \textsc{Bohm} is provided with perfect write estimates. Diem p2p txns.
 }
 \vspace{-3mm}
 \label{fig:all-diem}
\end{figure*}

\begin{figure}[t]
    \centering
    \pgfplotsset{footnotesize,height=4.5cm, width=\linewidth}
    \begin{tikzpicture}
    \begin{axis}[
        scaled y ticks=base 10:-3,
        legend columns=2,
        legend style={font=\footnotesize, at={(0.0,1.8)},anchor=north west,legend columns=2},
        xlabel={Number of threads},
        ylabel= {Throughput},
        grid=major,
        xtick={4, 8, 16, 24, 32},
        ymin = 0,
        ]
        \addplot [line width=0.25mm, color=magenta, mark=triangle] table [x=threads, y=bstm2,  col sep=comma] {plots/batch-1k};
        \addplot [line width=0.25mm, color=blue, mark=triangle] table [x=threads, y=bstm10,  col sep=comma] {plots/batch-1k};
        \addplot [line width=0.25mm, color=cyan, mark=triangle] table [x=threads, y=bstm100,  col sep=comma] {plots/batch-1k};
        \addplot [line width=0.25mm, color=yellow, mark=diamond] table [x=threads, y=bstm2,  col sep=comma] {plots/batch-10k};
        \addplot [line width=0.25mm, color=green, mark=diamond] table [x=threads, y=bstm10,  col sep=comma] {plots/batch-10k};
        \addplot [line width=0.25mm, color=red, mark=diamond] table [x=threads, y=bstm100,  col sep=comma] {plots/batch-10k};
        \addplot [line width=0.25mm, black, mark=square] [error bars/.cd, y explicit,y dir=both,] table [x=threads, y=seq,  col sep=comma] {plots/batch-1k};
        \addlegendentry{BSTM, 2 acc, $10^3$ bch}
        \addlegendentry{BSTM, 10 acc, $10^3$ bch}
        \addlegendentry{BSTM, 100 acc, $10^3$ bch}
        \addlegendentry{BSTM, 2 acc, $10^4$ bch}
        \addlegendentry{BSTM, 10 acc, $10^4$ bch}
        \addlegendentry{BSTM, 100 acc, $10^4$ bch}
        \addlegendentry{Sequential}
    \end{axis}
    \end{tikzpicture}
    \caption{ Comparison of BSTM and sequential execution for block size $10^3$ and $10^4$, account sizes 2, 10 and 100. Diem p2p transactions.}
    \label{fig:bstm-high-diem}
    \vspace{-3mm}
\end{figure}

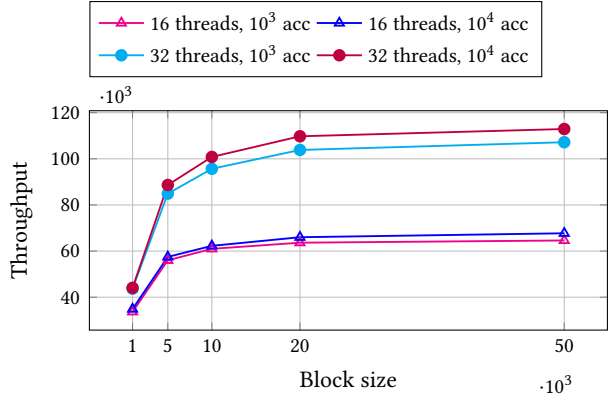
\begin{figure}[t]
    \centering
    \pgfplotsset{footnotesize,height=4.5cm, width=\linewidth}
    \begin{tikzpicture}
    \begin{axis}[
        scaled ticks=base 10:-3,
        legend columns=2,
        legend style={font=\footnotesize, at={(0,1.5)},anchor=north west,legend columns=2},
        xlabel={Block size},
        ylabel= {Throughput},
        grid=major,
        xtick={1000, 5000, 10000, 20000, 50000},
        ]
        \addplot [line width=0.25mm, color=magenta, mark=triangle] table [x=batches, y=bstm-16-1k,  col sep=comma] {plots/batches};
        \addplot [line width=0.25mm, color=blue, mark=triangle] table [x=batches, y=bstm-16-10k,  col sep=comma] {plots/batches};
        \addplot [line width=0.25mm, color=cyan, mark=*] table [x=batches, y=bstm-32-1k,  col sep=comma] {plots/batches};
        \addplot [line width=0.25mm, color=purple, mark=*] table [x=batches, y=bstm-32-10k,  col sep=comma] {plots/batches};
        \addlegendentry{16 threads, $10^3$ acc}
        \addlegendentry{16 threads, $10^4$ acc}
        \addlegendentry{32 threads, $10^3$ acc}
        \addlegendentry{32 threads, $10^4$ acc}
    \end{axis}
    \end{tikzpicture}
    \caption{ Throughput of BSTM for various block sizes. Diem p2p transactions.}
    \label{fig:tps-diem}
    \vspace{-3mm}
\end{figure}

\begin{figure*}[t]
\begin{minipage}[b]{0.45\linewidth}
    \centering
    \pgfplotsset{footnotesize,height=4.5cm, width=\linewidth}
    \begin{tikzpicture}
    \begin{axis}[
        scaled y ticks=base 10:-3,
        legend columns=4,
        legend style={font=\footnotesize, at={(0.6,1.5)},anchor=north west},
        xlabel={Number of threads},
        ylabel= {Throughput},
        grid=major,
        xtick={4, 8, 16, 24, 32},
        ymin=0,
        ]
        %
        %
        \addplot [line width=0.25mm, color=blue, mark=triangle] table [x=threads, y=bstm1k,  col sep=comma] {plots/aptos-batch-1k};
        \addplot [line width=0.25mm, color=red, mark=triangle] table [x=threads, y=bstm10k,  col sep=comma] {plots/aptos-batch-1k};
        %
        %
        %
        \addplot [line width=0.25mm, black, mark=square] [error bars/.cd, y explicit,y dir=both,] table [x=threads, y=seq,  col sep=comma] {plots/aptos-batch-1k};
        %
        %
        \addlegendentry{BSTM, $10^3$ acc}
        \addlegendentry{BSTM, $10^4$ acc}
        \addlegendentry{Sequential}
    \end{axis}
    \end{tikzpicture}
\end{minipage}
\hspace{4ex}
\begin{minipage}[b]{0.45\linewidth}
    \centering
    \pgfplotsset{footnotesize,height=4.5cm, width=\linewidth}
    \begin{tikzpicture}
    \begin{axis}[
        scaled y ticks=base 10:-3,
        xlabel={Number of threads},
        ylabel= {Throughput},
        grid=major,
        xtick={4, 8, 16, 24, 32},
        ymin=0,
        ]
        %
        %
        \addplot [line width=0.25mm, color=blue, mark=triangle] table [x=threads, y=bstm1k,  col sep=comma] {plots/aptos-batch-10k};
        \addplot [line width=0.25mm, color=red, mark=triangle] table [x=threads, y=bstm10k,  col sep=comma] {plots/aptos-batch-10k};
        %
        %
        \addplot [line width=0.25mm, black, mark=square] [error bars/.cd, y explicit,y dir=both,] table [x=threads, y=seq,  col sep=comma] {plots/aptos-batch-10k};
    \end{axis}
    \end{tikzpicture}
\end{minipage}
\caption{Comparison of BSTM and Sequential execution for block size $10^3$ (left) and $10^4$ (right). Aptos p2p transactions.}
\label{fig:all-aptos}
\end{figure*}
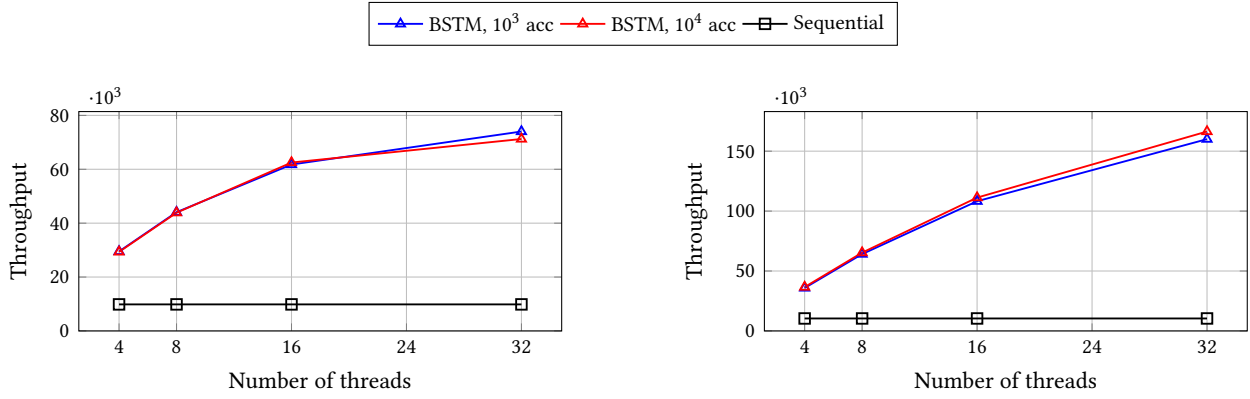

\begin{figure}[t]
    \centering
    \pgfplotsset{footnotesize,height=4.5cm, width=\linewidth}
    \begin{tikzpicture}
    \begin{axis}[
        scaled y ticks=base 10:-3,
        legend columns=2,
        legend style={font=\footnotesize, at={(0.0,1.8)},anchor=north west,legend columns=2},
        xlabel={Number of threads},
        ylabel= {Throughput},
        grid=major,
        xtick={4, 8, 16, 24, 32},
        ymin = 0,
        ]
        \addplot [line width=0.25mm, color=magenta, mark=triangle] table [x=threads, y=bstm2,  col sep=comma] {plots/aptos-batch-1k};
        \addplot [line width=0.25mm, color=blue, mark=triangle] table [x=threads, y=bstm10,  col sep=comma] {plots/aptos-batch-1k};
        \addplot [line width=0.25mm, color=cyan, mark=triangle] table [x=threads, y=bstm100,  col sep=comma] {plots/aptos-batch-1k};
        \addplot [line width=0.25mm, color=yellow, mark=diamond] table [x=threads, y=bstm2,  col sep=comma] {plots/aptos-batch-10k};
        \addplot [line width=0.25mm, color=green, mark=diamond] table [x=threads, y=bstm10,  col sep=comma] {plots/aptos-batch-10k};
        \addplot [line width=0.25mm, color=red, mark=diamond] table [x=threads, y=bstm100,  col sep=comma] {plots/aptos-batch-10k};
        \addplot [line width=0.25mm, black, mark=square] [error bars/.cd, y explicit,y dir=both,] table [x=threads, y=seq,  col sep=comma] {plots/aptos-batch-1k};
        \addlegendentry{BSTM, 2 acc, $10^3$ bch}
        \addlegendentry{BSTM, 10 acc, $10^3$ bch}
        \addlegendentry{BSTM, 100 acc, $10^3$ bch}
        \addlegendentry{BSTM, 2 acc, $10^4$ bch}
        \addlegendentry{BSTM, 10 acc, $10^4$ bch}
        \addlegendentry{BSTM, 100 acc, $10^4$ bch}
        \addlegendentry{Sequential}
    \end{axis}
    \end{tikzpicture}
    \caption{Comparison of BSTM and sequential execution for block size $10^3$ and $10^4$, account sizes 2, 10 and 100. Aptos p2p transactions.}
    \label{fig:bstm-high-aptos}
\end{figure}

\begin{figure}[t]
    \centering
    \pgfplotsset{footnotesize,height=4.5cm, width=\linewidth}
    \begin{tikzpicture}
    \begin{axis}[
        scaled ticks=base 10:-3,
        legend columns=2,
        legend style={font=\small, at={(0,1.5)},anchor=north west,legend columns=2},
        xlabel={Block size},
        ylabel= {Throughput},
        grid=major,
        xtick={1000, 5000, 10000, 20000, 50000},
        ]
        \addplot [line width=0.25mm, color=magenta, mark=triangle] table [x=batches, y=bstm-16-1k,  col sep=comma] {plots/aptos-batches};
        \addplot [line width=0.25mm, color=blue, mark=triangle] table [x=batches, y=bstm-16-10k,  col sep=comma] {plots/aptos-batches};
        \addplot [line width=0.25mm, color=cyan, mark=*] table [x=batches, y=bstm-32-1k,  col sep=comma] {plots/aptos-batches};
        \addplot [line width=0.25mm, color=purple, mark=*] table [x=batches, y=bstm-32-10k,  col sep=comma] {plots/aptos-batches};
        \addlegendentry{16 threads, $10^3$ acc}
        \addlegendentry{16 threads, $10^4$ acc}
        \addlegendentry{32 threads, $10^3$ acc}
        \addlegendentry{32 threads, $10^4$ acc}
    \end{axis}
    \end{tikzpicture}
    \caption{Throughput of BSTM for various block sizes. Aptos p2p transactions.}
    \label{fig:tps-aptos}
\end{figure}
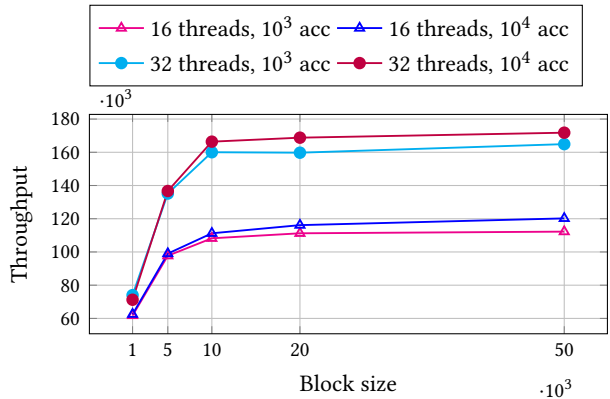
\section{Implementation and Evaluation}
\label{sec:ev}

Our \BSTM implementation is in Rust, and is merged on the main branch of the open source Diem and Aptos projects~\cite{diem, aptoswhitepaper}. Both Blockchains run a virtual machine for smart contracts in Move language~\cite{blackshear2019move}. The \module{VM} captures all execution errors that could stem from inconsistent reads during speculative transaction execution. The \module{VM} also caches the reads from \module{Storage}. Importantly, the preset order allows us to test correctness by comparing to sequential implementation outputs.

Diem VM does not support suspending transaction execution at the exact point when a read dependency is encountered. Instead, when a transaction is aborted due to a \rust{READ\_ERROR}, it is later (after the dependency is resolved) restarted from scratch. Aptos VM supports this feature. 


To mitigate the impact of restarting \module{VM} execution from scratch, we check the read-set of the previous incarnation for dependencies
before the \module{VM}.\rust{execute} invocation in~\lineref{line:vmexeccall}.

Another related optimization implemented in \BSTM occurs when the \module{Scheduler}.\rust{add\_dependency} invocation returns \emph{false} in~\lineref{line:suspendexec}.
This indicates that the dependency has been resolved. 
Instead of~\lineref{line:retry} (that would restart the execution from scratch with the Diem VM), \BSTM calls \rust{add\_dependency} from the \module{VM} itself, and can thus re-read and continue execution when \emph{false} is returned.

\BSTM implementation uses the standard cache padding technique to mitigate false sharing. The logic for \emph{num\_active\_tasks} is implemented using the Resource Acquisition Is Initialization (\href{https://doc.rust-lang.org/rust-by-example/scope/raii.html}{RAII}) design pattern. Finally, \BSTM implements the \emph{data} map in \module{MVMemory} as a concurrent hashmap over access paths, with lock-protected search trees for efficient \emph{txn\_idx}-based look-ups.

\subsection{Experimental Results}
We evaluated \BSTM on a Amazon Web Services c5a.16xlarge instance (AMD EPYC CPU and 128GB memory) with Ubuntu 18.04 operating system. The experiments run on a single socket with up to 32 physical cores without hyper-threading.

The evaluation benchmark executes the whole block, consisting of peer-to-peer (p2p) transactions implemented in Move. 
Each p2p transaction randomly chooses two different accounts and performs a payment.

We first perform experiments with \emph{Diem} p2p transactions
~\footnote{\url{https://github.com/danielxiangzl/Block-STM}} 
that perform $21$ reads and $4$ writes.
For a Diem p2p transaction from account $A$ to account $B$, the $4$ writes of the transaction involve updating balances and sequence numbers of $A$ and $B$.
The reason for $21$ reads is that every Diem transaction is verified against some on-chain information to decide whether the transaction should be processed, some of which is specific to p2p transactions. During this process, information such as the correct block time and whether or not the account is frozen is read.

We also perform experiments with \emph{Aptos} p2p transactions
~\footnote{\url{https://github.com/danielxiangzl/Block-STM/tree/aptos}} 
that perform $8$ reads and $5$ writes each, where the Aptos p2p transactions reduce many of the verification and on-chain reads mentioned above. 
The VM execution overhead of a single Diem p2p compared to a single Aptos p2p is about $100\%$, as will be shown in \figureref{fig:all-diem} and \figureref{fig:all-aptos}, the throughput of sequentially executing Diem and Aptos p2p transaction is about $5k$ and $10k$, respectively.
We experiment with block sizes of $10^3$ and $10^4$ transactions and the number of accounts of $2, 10, 100, 10^3$ and $10^4$.
The number of accounts determines the amount of conflicts, and in particular, with just $2$ accounts the load is inherently sequential (each transaction depends on the previous one).
Each data point is an average of 10 measurements.

This reported measurements include the cost of reading all required values from storage, and computing the outputs (i.e. all affected paths and the final values), but not persisting the outputs to \module{Storage}. The outputs are computed according to the \module{MVMemory}.\rust{snapshot} logic, but parallelized (per affected memory locations). 

We compare \BSTM to \textsc{Bohm}~\cite{faleiro2015rethinking} and LiTM~\cite{xia2019litm}. 
\textsc{Bohm} is a deterministic database engine that enforces a preset order by assuming transactions' write-sets are known.
\textsc{Bohm} has a pre-execution phase in which it uses the write-sets information to build a multi-version data-structure that captures the dependencies with respect to the preset order.
Then, \textsc{Bohm} executes transactions in parallel, delays any transaction that has unresolved read dependencies by buffering it in a concurrent queue, and resumes the execution once the dependencies are resolved.
Note that in the Blockchain use-case the assumption of knowing all write-sets in advance is not realistic, so to compare \BSTM to \textsc{Bohm} we artificially provide \textsc{Bohm} with perfect write-sets information. Note that our measurements of \textsc{Bohm} only include parallel execution but not the write-sets analysis, thus would be significantly better than the performance of \textsc{Bohm} in practice when the write-sets analysis time is non-negligible.
LiTM~\cite{xia2019litm}, a recent deterministic STM library, claims to outperform other deterministic STM approaches on the Problem Based Benchmark Suite~\cite{shun2012brief}. We describe LiTM in more detail in~\sectionref{sec:related}.
In order to have a uniform setting for comparison, we implemented both a variant of \textsc{Bohm}
\footnote{\url{https://github.com/danielxiangzl/Block-STM/tree/bohm}} 
and LiTM
\footnote{\url{https://github.com/danielxiangzl/Block-STM/tree/litm}} 
in Rust in the Diem Blockchain.

The \BSTM comparison to \textsc{Bohm}, LiTM and sequential baseline for Diem p2p transactions is shown in \figureref{fig:all-diem}.  
The \BSTM comparison to sequential baseline for Aptos p2p transactions is shown in \figureref{fig:all-aptos}. 
We will open source all our implementations and benchmarks to enable reproducible results. 

\textbf{Comparison to \textsc{Bohm}~\cite{faleiro2015rethinking}.}
The results show that \BSTM has comparable throughput to \textsc{Bohm} in most cases, and is significantly better with $32$ threads and $10^3$ block size.
Since \textsc{Bohm} relies on perfect write-sets information and thus perfect dependencies among all transactions, it can delay the execution of a transaction after all its dependencies have been executed, avoiding the overhead of aborting and re-execution. 
In contrast, \BSTM require no information about write dependencies prior to execution and therefore will incur aborts and re-execution.
Still, the performance of \BSTM is comparable to \textsc{Bohm}, implying the abort rates of \BSTM is substantially small, thanks to the run-time write-sets estimation and the low-overhead collaborative scheduler.
We also found the overhead of constructing the multi-version data-structure of \textsc{Bohm} significant compared to \BSTM, without which \textsc{Bohm}'s throughput will be slightly better than \BSTM.

\textbf{Comparison to LiTM~\cite{xia2019litm}.}
With $10^4$ accounts, \BSTM has around $3$-$4$x speedup over LiTM regardless of the block size or transactions type (standard or simplified).
With $10^3$ accounts, the speedup is larger (up to $25$x) over LiTM, confirming that \BSTM is less sensitive to conflicts.

\textbf{Comparison to sequential execution.}
For Diem and Aptos benchmarks, \BSTM scales almost perfectly under low contention, achieving up to $90k$ tps and $160k$ tps, 
which is $18$x and $16$x over the sequential execution, respectively.

\textbf{Comparison under highly contended workload.}
\figureref{fig:bstm-high-diem} and \figureref{fig:bstm-high-aptos} reports \BSTM evaluation results with highly contended workloads. 
With a completely sequential workload ($2$ accounts) \BSTM has at most $30\%$ overhead vs the sequential execution in both Diem and Aptos benchmarks.
With $10$ accounts \BSTM already outperforms the sequential execution and with $100$ accounts \BSTM gets up to $8$x speedup in both benchmarks.
Note that with $100$ accounts \BSTM does not scale beyond $16$ threads, suggesting that $16$ threads already utilize the inherent parallelism in such a highly contended workload.

\textbf{Maximum throughput of \BSTM}
We also evaluate \BSTM with increasing block sizes (up to $50k$) to find the maximum throughput of \BSTM in \figureref{fig:tps-diem} and \figureref{fig:tps-aptos}.
For $32$ threads, \BSTM achieves up to $110k$ tps for Diem p2p ($21$x speedup over sequential) and $170k$ tps for Aptos p2p ($17$x speedup over sequential).
For $16$ threads, \BSTM achieves up to $67k$ tps for Diem p2p ($13$x speedup) and $120k$ tps for Aptos p2p ($12$x speedup).

\textbf{Conclusion.}
Our evaluation demonstrates that \BSTM is adaptive to workload contention and utilizes the inherent parallelism therein.
For Aptos benchmark, it achieves over $160k$ tps on workloads with low contention, over $80k$ on workloads with high contention, and at most $30\%$ overhead on workload that are completely sequential.

\section{Related Work}
\label{sec:related}

\paragraph{The STM approach.}
The problem of atomically executing transactions in parallel in shared memory has been extensively studied in the literature in the past few decades in the context of STM libraries (e.g., \cite{herlihy1993transactional, shavit1997software, dice2006transactional, dragojevic2011stm, felber2008dynamic, herlihy2008transactional, guerraoui2006stmbench7}).
These libraries instrument the concurrent memory accesses associated with different transactions, detect and deal with conflicts, and provide the final outcome equivalent to executing transactions sequentially in some serialization order.
In the STM libraries based on optimistic concurrency control 
~\cite{kung1981optimistic, dice2006transactional}, threads repeatedly speculatively execute and validate transactions.
A successful validation commits and determines the transaction position in the serialization order.

By default, STM libraries do not guarantee the same outcome when transactions are re-executed multiple times.
This is unsuitable for Blockchain systems, as validators need to agree on the outcome of block execution.
Deterministic STM libraries~\cite{nguyen2014deterministic, ravichandran2014destm, vale2016pot}
guarantee a unique final state.

Due to required conflict bookkeeping and aborts, general-purpose STM libraries often suffer from performance limitations compared to custom-tailed solutions and are rarely deployed in production~\cite{cascaval2008software}.
However, STM performance can be dramatically improved by restricting it to specific use-cases~\cite{spiegelman2016transactional, herman2016type, laborde2019wait, hassan2014optimistic, elizarov2019loft}.
For the Blockchain use-case, the granularity is a block of transactions. Thus, unlike the general setting, \BSTM do not need to handle a long-lived stream of transactions that arrive at arbitrary times and commit them one by one. 
Moreover, thanks to the VM, the Blockchain use-case does not require opacity~\cite{guerraoui2007opacity}.

\textbf{Preset and deterministic order.}
There is prior work on designing STM libraries constrained to the predefined serialization order~\cite{mehrara2009parallelizing, von2007implicit, saad2019processing}. 
In~\cite{mehrara2009parallelizing, von2007implicit} each transaction is committed by a designated thread and thus the predefined order reduces resource utilization.
This is because threads have to stall until all previous transactions in the order are committed before they can commit their own.
Transactions in~\cite{saad2019processing} are also committed by designated threads, but they limit the stalling periods to only the latency of the commit via a complex forwarding locking mechanism and flat combining~\cite{hendler2010flat} based validation.

Deterministic STM libraries~\cite{nguyen2014deterministic, ravichandran2014destm, vale2016pot, xia2019litm} consider a less restricted case in which every execution of the same set of transaction produces the same final state.
The idea in the state-of-the-art~\cite{xia2019litm} is simple. All transactions are executed from the initial state and the maximum independent set of transaction (i.e., with no conflicts among them) is committed, arriving to a new state.
The remaining transaction are executed from the new state, the maximum independent set is committed, and so on. 
This approach thrives for low conflict workloads, but otherwise suffers from high overhead.

To summarize, in the context of STM literature, the (deterministic or preset) ordering constraints have been viewed as a ``curse", i.e. an extra requirement that the system needs to satisfy at the cost of added overhead. For the \BSTM approach, on the other hand, the preset order is the ``blessing" that the whole algorithm is centered around. 
In fact, the closest works to \BSTM in terms of how the preset serialization order is used to deal with conflicts are from  the databases literature.
Calvin~\cite{calvin} and \textsc{Bohm}~\cite{faleiro2015rethinking} use batches (akin to blocks) of transactions and their preset order to execute transactions when their read dependencies are resolved.
This is possible because, in the databases context, the write-sets of transactions are assumed to be known in advance.
This assumption is not suitable for Blockchains as smart contracts might encode an arbitrary logic.
Therefore, \BSTM does not require the write-set to be known and learns dependencies on the fly.

\textbf{Multi-version data-structures.}
Multi-version data structures are designed to avoid write conflicts~\cite{bernstein1983multiversion}.
They have a history of applications in the STM context~\cite{cachopo2006versioned, perelman2010maintaining}, some of which utilize optimistic concurrency control~\cite{bortnikov2017omid}.
The multi-version data-structure maps between memory locations and values that are indexed based on versions that are assigned to transactions via global version clock~\cite{riegel2006lazy, dice2006transactional, bortnikov2017omid}.

\textbf{Blockchain execution.}
The connection between STM techniques and parallel smart contract execution was explored in the past~\cite{dickerson2020adding, amiri2019parblockchain, anjana2021optsmart, anjana2020efficient}.
A \emph{miner-replay} paradigm was explored in~\cite{dickerson2020adding}, where miners parallelize block execution using a white-box STM library application that extracts the resulting serialization order as a ``fork-join’’ schedule. This schedule is sent alongside the new block proposal (via the consensus component) from miners to validators. 
After the block is proposed, validators utilize the fork-join schedule to deterministically replay the block. 
%
ParBlockchain~\cite{amiri2019parblockchain} introduced an \emph{order-execute} paradigm (OXII) for deterministic parallelism. The ordering stage is similar to the schedule preparation in~\cite{dickerson2020adding}, but the transaction dependency graph is computed without executing the block. OXII relies on read-write set being known in advance via static-analysis or on speculative pre-execution to generate the dependency graph among transactions.
OptSmart~\cite{anjana2021optsmart, anjana2020efficient} makes two improvements. First, the dependency graph is compressed to contain only transactions with dependencies; those that are not included may execute in parallel. Second, execution uses multi-versioned memory to mitigate write-write conflicts.

Hyperledger Fabric~\cite{androulaki2018hyperledger} and several related works~\cite{ruan2020transactional, sharma2019blurring} follow the execute-order-validate paradigm. As a result, the execution phase can abort unserializable transactions before ordering. Transactions in~\cite{chen2021forerunner} are pre-executed off the critical path to produce hints for final execution.

\vspace{-1mm}
\section{Summary}
\label{sec:disc}
This paper presents \BSTM, a parallel execution engine for the Blockchain use-case that achieves up to 170k tps with $32$ threads in our benchmarks.
For a fully sequential workload, it has a smaller than 30\% overhead, mitigating any potential performance attacks.
\BSTM relies on the write-sets of transactions' last incarnations 
to estimate dependencies and reduce wasted work.
If write-set pre-estimation was available, e.g., with a best effort static analysis, it could be similarly used by the first incarnation of a transaction. 
Moreover, using static analysis to find the best preset order is an interesting future direction.

\BSTM uses locking for synchronization in the \module{Scheduler} module. It is possible to use standard multicore techniques to avoid using locks, however, we did not see significant performance difference in our experiments.
Thus, we chose the design with locks for the ease of presentation.

In Blockchain systems, there is usually an associated ``gas'' cost to executing transactions. A single location for gas updates, could make any block inherently sequential. However, this issue is typically avoided by tracking gas natively, burning it or having specialized types or sharded implementation. 

As discussed in the \sectionref{sec:ev}, Diem VM currently does not support suspending and resuming transaction execution.
Once this feature is available, \BSTM can restart execution from the read that caused suspension upon encountering a dependency.
A potential optimization to go along with this feature is to  
validate the reads that happened during the execution prefix (before transaction was suspended) upon resumption.
This could allow earlier detection of impending aborts.

The current \BSTM implementation is not optimized for NUMA architectures or hyperthreading. Exploring these optimizations is another direction for future research.
Another interesting direction is to explore nesting techniques~\cite{moss1981nested} for transactional smart contract design.

\section*{Acknowledgment}
The authors would like to thank Sam Blackshear and Avery Ching for fruitful discussions.

\clearpage
\bibliography{bib}


\begin{thebibliography}{52}


\ifx \showCODEN    \undefined \def \showCODEN     #1{\unskip}     \fi
\ifx \showDOI      \undefined \def \showDOI       #1{#1}\fi
\ifx \showISBNx    \undefined \def \showISBNx     #1{\unskip}     \fi
\ifx \showISBNxiii \undefined \def \showISBNxiii  #1{\unskip}     \fi
\ifx \showISSN     \undefined \def \showISSN      #1{\unskip}     \fi
\ifx \showLCCN     \undefined \def \showLCCN      #1{\unskip}     \fi
\ifx \shownote     \undefined \def \shownote      #1{#1}          \fi
\ifx \showarticletitle \undefined \def \showarticletitle #1{#1}   \fi
\ifx \showURL      \undefined \def \showURL       {\relax}        \fi
\providecommand\bibfield[2]{#2}
\providecommand\bibinfo[2]{#2}
\providecommand\natexlab[1]{#1}
\providecommand\showeprint[2][]{arXiv:#2}

\bibitem[\protect\citeauthoryear{??}{apt}{[n.d.]a}]%
        {aptoscodebase}
 \bibinfo{year}{[n.d.]}\natexlab{a}.
\newblock \bibinfo{title}{Aptos codebase}.
\newblock
  \bibinfo{howpublished}{\url{https://github.com/aptos-labs/aptos-core}}.
\newblock


\bibitem[\protect\citeauthoryear{??}{apt}{[n.d.]b}]%
        {aptoswhitepaper}
 \bibinfo{year}{[n.d.]}\natexlab{b}.
\newblock \bibinfo{title}{Aptos whitepaper}.
\newblock
  \bibinfo{howpublished}{\url{https://github.com/aptos-labs/aptos-core/blob/main/developer-docs-site/static/papers/whitepaper.pdf}}.
\newblock


\bibitem[\protect\citeauthoryear{??}{die}{[n.d.]}]%
        {diemcodebase}
 \bibinfo{year}{[n.d.]}\natexlab{}.
\newblock \bibinfo{title}{Diem codebase}.
\newblock \bibinfo{howpublished}{\url{https://github.com/diem/diem/tree/main}}.
\newblock


\bibitem[\protect\citeauthoryear{Alistarh, Kopinsky, Li, and Shavit}{Alistarh
  et~al\mbox{.}}{2015}]%
        {spraylist}
\bibfield{author}{\bibinfo{person}{Dan Alistarh}, \bibinfo{person}{Justin
  Kopinsky}, \bibinfo{person}{Jerry Li}, {and} \bibinfo{person}{Nir Shavit}.}
  \bibinfo{year}{2015}\natexlab{}.
\newblock \showarticletitle{The spraylist: A scalable relaxed priority queue}.
  In \bibinfo{booktitle}{\emph{Proceedings of the 20th ACM SIGPLAN Symposium on
  Principles and Practice of Parallel Programming}}. \bibinfo{pages}{11--20}.
\newblock


\bibitem[\protect\citeauthoryear{Amiri, Agrawal, and Abbadi}{Amiri
  et~al\mbox{.}}{2019}]%
        {amiri2019parblockchain}
\bibfield{author}{\bibinfo{person}{Mohammad~Javad Amiri},
  \bibinfo{person}{Divyakant Agrawal}, {and} \bibinfo{person}{Amr~El Abbadi}.}
  \bibinfo{year}{2019}\natexlab{}.
\newblock \showarticletitle{ParBlockchain: Leveraging Transaction Parallelism
  in Permissioned Blockchain Systems}. In \bibinfo{booktitle}{\emph{proceedings
  of the IEEE 39th International Conference on Distributed Computing Systems
  (ICDCS)}}. \bibinfo{pages}{1337--1347}.
\newblock
\urldef\tempurl%
\url{https://doi.org/doi: 10.1109/ICDCS.2019.00134}
\showDOI{\tempurl}


\bibitem[\protect\citeauthoryear{Androulaki, Barger, Bortnikov, Cachin,
  Christidis, De~Caro, Enyeart, Ferris, Laventman, Manevich,
  et~al\mbox{.}}{Androulaki et~al\mbox{.}}{2018}]%
        {androulaki2018hyperledger}
\bibfield{author}{\bibinfo{person}{Elli Androulaki}, \bibinfo{person}{Artem
  Barger}, \bibinfo{person}{Vita Bortnikov}, \bibinfo{person}{Christian
  Cachin}, \bibinfo{person}{Konstantinos Christidis}, \bibinfo{person}{Angelo
  De~Caro}, \bibinfo{person}{David Enyeart}, \bibinfo{person}{Christopher
  Ferris}, \bibinfo{person}{Gennady Laventman}, \bibinfo{person}{Yacov
  Manevich}, {et~al\mbox{.}}} \bibinfo{year}{2018}\natexlab{}.
\newblock \showarticletitle{Hyperledger fabric: a distributed operating system
  for permissioned blockchains}. In \bibinfo{booktitle}{\emph{Proceedings of
  the thirteenth EuroSys conference}}. \bibinfo{pages}{1--15}.
\newblock


\bibitem[\protect\citeauthoryear{Anjana, Attiya, Kumari, Peri, and
  Somani}{Anjana et~al\mbox{.}}{2020}]%
        {anjana2020efficient}
\bibfield{author}{\bibinfo{person}{Parwat~Singh Anjana}, \bibinfo{person}{Hagit
  Attiya}, \bibinfo{person}{Sweta Kumari}, \bibinfo{person}{Sathya Peri}, {and}
  \bibinfo{person}{Archit Somani}.} \bibinfo{year}{2020}\natexlab{}.
\newblock \showarticletitle{Efficient concurrent execution of smart contracts
  in blockchains using object-based transactional memory}. In
  \bibinfo{booktitle}{\emph{International Conference on Networked Systems}}.
  Springer, \bibinfo{pages}{77--93}.
\newblock


\bibitem[\protect\citeauthoryear{Anjana, Kumari, Peri, Rathor, and
  Somani}{Anjana et~al\mbox{.}}{2021}]%
        {anjana2021optsmart}
\bibfield{author}{\bibinfo{person}{Parwat~Singh Anjana}, \bibinfo{person}{Sweta
  Kumari}, \bibinfo{person}{Sathya Peri}, \bibinfo{person}{Sachin Rathor},
  {and} \bibinfo{person}{Archit Somani}.} \bibinfo{year}{2021}\natexlab{}.
\newblock \bibinfo{title}{OptSmart: A Space Efficient Optimistic Concurrent
  Execution of Smart Contracts}.
\newblock
\newblock
\showeprint[arxiv]{2102.04875}~[cs.DC]


\bibitem[\protect\citeauthoryear{Attiya and Welch}{Attiya and Welch}{2004}]%
        {attiya2004distributed}
\bibfield{author}{\bibinfo{person}{Hagit Attiya} {and}
  \bibinfo{person}{Jennifer Welch}.} \bibinfo{year}{2004}\natexlab{}.
\newblock \bibinfo{booktitle}{\emph{Distributed computing: fundamentals,
  simulations, and advanced topics}}. Vol.~\bibinfo{volume}{19}.
\newblock \bibinfo{publisher}{John Wiley \& Sons}.
\newblock


\bibitem[\protect\citeauthoryear{Bernstein and Goodman}{Bernstein and
  Goodman}{1983}]%
        {bernstein1983multiversion}
\bibfield{author}{\bibinfo{person}{Philip~A Bernstein} {and}
  \bibinfo{person}{Nathan Goodman}.} \bibinfo{year}{1983}\natexlab{}.
\newblock \showarticletitle{Multiversion concurrency control—theory and
  algorithms}.
\newblock \bibinfo{journal}{\emph{ACM Transactions on Database Systems (TODS)}}
  \bibinfo{volume}{8}, \bibinfo{number}{4} (\bibinfo{year}{1983}),
  \bibinfo{pages}{465--483}.
\newblock


\bibitem[\protect\citeauthoryear{Blackshear, Cheng, Dill, Gao, Maurer, Nowacki,
  Pott, Qadeer, Rain, Sezer, et~al\mbox{.}}{Blackshear et~al\mbox{.}}{2019}]%
        {blackshear2019move}
\bibfield{author}{\bibinfo{person}{Sam Blackshear}, \bibinfo{person}{Evan
  Cheng}, \bibinfo{person}{David~L Dill}, \bibinfo{person}{Victor Gao},
  \bibinfo{person}{Ben Maurer}, \bibinfo{person}{Todd Nowacki},
  \bibinfo{person}{Alistair Pott}, \bibinfo{person}{Shaz Qadeer},
  \bibinfo{person}{Dario~Russi Rain}, \bibinfo{person}{Stephane Sezer},
  {et~al\mbox{.}}} \bibinfo{year}{2019}\natexlab{}.
\newblock \showarticletitle{Move: A language with programmable resources}.
\newblock \bibinfo{journal}{\emph{Libra Assoc.}} (\bibinfo{year}{2019}).
\newblock


\bibitem[\protect\citeauthoryear{Bortnikov, Hillel, Keidar, Kelly, Morel,
  Paranjpye, Perez-Sorrosal, and Shacham}{Bortnikov et~al\mbox{.}}{2017}]%
        {bortnikov2017omid}
\bibfield{author}{\bibinfo{person}{Edward Bortnikov}, \bibinfo{person}{Eshcar
  Hillel}, \bibinfo{person}{Idit Keidar}, \bibinfo{person}{Ivan Kelly},
  \bibinfo{person}{Matthieu Morel}, \bibinfo{person}{Sameer Paranjpye},
  \bibinfo{person}{Francisco Perez-Sorrosal}, {and} \bibinfo{person}{Ohad
  Shacham}.} \bibinfo{year}{2017}\natexlab{}.
\newblock \showarticletitle{Omid, Reloaded: Scalable and
  $\{$Highly-Available$\}$ Transaction Processing}. In
  \bibinfo{booktitle}{\emph{15th USENIX Conference on File and Storage
  Technologies (FAST 17)}}. \bibinfo{pages}{167--180}.
\newblock


\bibitem[\protect\citeauthoryear{Cachopo and Rito-Silva}{Cachopo and
  Rito-Silva}{2006}]%
        {cachopo2006versioned}
\bibfield{author}{\bibinfo{person}{Joao Cachopo} {and}
  \bibinfo{person}{Ant{\'o}nio Rito-Silva}.} \bibinfo{year}{2006}\natexlab{}.
\newblock \showarticletitle{Versioned boxes as the basis for memory
  transactions}.
\newblock \bibinfo{journal}{\emph{Science of Computer Programming}}
  \bibinfo{volume}{63}, \bibinfo{number}{2} (\bibinfo{year}{2006}),
  \bibinfo{pages}{172--185}.
\newblock


\bibitem[\protect\citeauthoryear{Cascaval, Blundell, Michael, Cain, Wu, Chiras,
  and Chatterjee}{Cascaval et~al\mbox{.}}{2008}]%
        {cascaval2008software}
\bibfield{author}{\bibinfo{person}{Calin Cascaval}, \bibinfo{person}{Colin
  Blundell}, \bibinfo{person}{Maged Michael}, \bibinfo{person}{Harold~W Cain},
  \bibinfo{person}{Peng Wu}, \bibinfo{person}{Stefanie Chiras}, {and}
  \bibinfo{person}{Siddhartha Chatterjee}.} \bibinfo{year}{2008}\natexlab{}.
\newblock \showarticletitle{Software transactional memory: Why is it only a
  research toy?}
\newblock \bibinfo{journal}{\emph{Commun. ACM}} \bibinfo{volume}{51},
  \bibinfo{number}{11} (\bibinfo{year}{2008}), \bibinfo{pages}{40--46}.
\newblock


\bibitem[\protect\citeauthoryear{Chen, Guo, Li, Chen, Zhou, Zhou, and
  Zhang}{Chen et~al\mbox{.}}{2021}]%
        {chen2021forerunner}
\bibfield{author}{\bibinfo{person}{Yang Chen}, \bibinfo{person}{Zhongxin Guo},
  \bibinfo{person}{Runhuai Li}, \bibinfo{person}{Shuo Chen},
  \bibinfo{person}{Lidong Zhou}, \bibinfo{person}{Yajin Zhou}, {and}
  \bibinfo{person}{Xian Zhang}.} \bibinfo{year}{2021}\natexlab{}.
\newblock \showarticletitle{Forerunner: Constraint-based speculative
  transaction execution for ethereum}. In \bibinfo{booktitle}{\emph{Proceedings
  of the ACM SIGOPS 28th Symposium on Operating Systems Principles}}.
  \bibinfo{pages}{570--587}.
\newblock


\bibitem[\protect\citeauthoryear{Daian, Goldfeder, Kell, Li, Zhao, Bentov,
  Breidenbach, and Juels}{Daian et~al\mbox{.}}{2019}]%
        {flash}
\bibfield{author}{\bibinfo{person}{Philip Daian}, \bibinfo{person}{Steven
  Goldfeder}, \bibinfo{person}{Tyler Kell}, \bibinfo{person}{Yunqi Li},
  \bibinfo{person}{Xueyuan Zhao}, \bibinfo{person}{Iddo Bentov},
  \bibinfo{person}{Lorenz Breidenbach}, {and} \bibinfo{person}{Ari Juels}.}
  \bibinfo{year}{2019}\natexlab{}.
\newblock \showarticletitle{Flash boys 2.0: Frontrunning, transaction
  reordering, and consensus instability in decentralized exchanges}.
\newblock \bibinfo{journal}{\emph{arXiv preprint arXiv:1904.05234}}
  (\bibinfo{year}{2019}).
\newblock


\bibitem[\protect\citeauthoryear{Dice, Shalev, and Shavit}{Dice
  et~al\mbox{.}}{2006}]%
        {dice2006transactional}
\bibfield{author}{\bibinfo{person}{Dave Dice}, \bibinfo{person}{Ori Shalev},
  {and} \bibinfo{person}{Nir Shavit}.} \bibinfo{year}{2006}\natexlab{}.
\newblock \showarticletitle{Transactional locking II}. In
  \bibinfo{booktitle}{\emph{International Symposium on Distributed Computing}}.
  Springer, \bibinfo{pages}{194--208}.
\newblock


\bibitem[\protect\citeauthoryear{Dickerson, Gazzillo, Herlihy, and
  Koskinen}{Dickerson et~al\mbox{.}}{2017}]%
        {dickerson2020adding}
\bibfield{author}{\bibinfo{person}{Thomas Dickerson}, \bibinfo{person}{Paul
  Gazzillo}, \bibinfo{person}{Maurice Herlihy}, {and} \bibinfo{person}{Eric
  Koskinen}.} \bibinfo{year}{2020 (ArXiv version 2017)}\natexlab{}.
\newblock \showarticletitle{Adding concurrency to smart contracts}.
\newblock \bibinfo{journal}{\emph{Distributed Computing}} \bibinfo{volume}{33},
  \bibinfo{number}{3} (\bibinfo{year}{2020 (ArXiv version 2017)}),
  \bibinfo{pages}{209--225}.
\newblock


\bibitem[\protect\citeauthoryear{Dragojevi{\'c}, Felber, Gramoli, and
  Guerraoui}{Dragojevi{\'c} et~al\mbox{.}}{2011}]%
        {dragojevic2011stm}
\bibfield{author}{\bibinfo{person}{Aleksandar Dragojevi{\'c}},
  \bibinfo{person}{Pascal Felber}, \bibinfo{person}{Vincent Gramoli}, {and}
  \bibinfo{person}{Rachid Guerraoui}.} \bibinfo{year}{2011}\natexlab{}.
\newblock \showarticletitle{Why STM can be more than a research toy}.
\newblock \bibinfo{journal}{\emph{Commun. ACM}} \bibinfo{volume}{54},
  \bibinfo{number}{4} (\bibinfo{year}{2011}), \bibinfo{pages}{70--77}.
\newblock


\bibitem[\protect\citeauthoryear{Elizarov, Golan-Gueta, and Petrank}{Elizarov
  et~al\mbox{.}}{2019}]%
        {elizarov2019loft}
\bibfield{author}{\bibinfo{person}{Avner Elizarov}, \bibinfo{person}{Guy
  Golan-Gueta}, {and} \bibinfo{person}{Erez Petrank}.}
  \bibinfo{year}{2019}\natexlab{}.
\newblock \showarticletitle{LOFT: lock-free transactional data structures}. In
  \bibinfo{booktitle}{\emph{Proceedings of the 24th Symposium on Principles and
  Practice of Parallel Programming}}. \bibinfo{pages}{425--426}.
\newblock


\bibitem[\protect\citeauthoryear{Faleiro and Abadi}{Faleiro and Abadi}{2015}]%
        {faleiro2015rethinking}
\bibfield{author}{\bibinfo{person}{Jose~M Faleiro} {and}
  \bibinfo{person}{Daniel~J Abadi}.} \bibinfo{year}{2015}\natexlab{}.
\newblock \showarticletitle{Rethinking serializable multiversion concurrency
  control}.
\newblock \bibinfo{journal}{\emph{Proceedings of the VLDB Endowment}}
  \bibinfo{volume}{8}, \bibinfo{number}{11} (\bibinfo{year}{2015}),
  \bibinfo{pages}{1190--1201}.
\newblock


\bibitem[\protect\citeauthoryear{Felber, Fetzer, and Riegel}{Felber
  et~al\mbox{.}}{2008}]%
        {felber2008dynamic}
\bibfield{author}{\bibinfo{person}{Pascal Felber}, \bibinfo{person}{Christof
  Fetzer}, {and} \bibinfo{person}{Torvald Riegel}.}
  \bibinfo{year}{2008}\natexlab{}.
\newblock \showarticletitle{Dynamic performance tuning of word-based software
  transactional memory}. In \bibinfo{booktitle}{\emph{Proceedings of the 13th
  ACM SIGPLAN Symposium on Principles and practice of parallel programming}}.
  \bibinfo{pages}{237--246}.
\newblock


\bibitem[\protect\citeauthoryear{Gelashvili, Spiegelman, Xiang, Danezis, Li,
  Xia, Zhou, and Malkhi}{Gelashvili et~al\mbox{.}}{2022}]%
        {gelashvili2022block}
\bibfield{author}{\bibinfo{person}{Rati Gelashvili}, \bibinfo{person}{Alexander
  Spiegelman}, \bibinfo{person}{Zhuolun Xiang}, \bibinfo{person}{George
  Danezis}, \bibinfo{person}{Zekun Li}, \bibinfo{person}{Yu Xia},
  \bibinfo{person}{Runtian Zhou}, {and} \bibinfo{person}{Dahlia Malkhi}.}
  \bibinfo{year}{2022}\natexlab{}.
\newblock \showarticletitle{Block-STM: Scaling Blockchain Execution by Turning
  Ordering Curse to a Performance Blessing}.
\newblock \bibinfo{journal}{\emph{arXiv preprint arXiv:2203.06871}}
  (\bibinfo{year}{2022}).
\newblock


\bibitem[\protect\citeauthoryear{Guerraoui and Kapalka}{Guerraoui and
  Kapalka}{2008}]%
        {guerraoui2007opacity}
\bibfield{author}{\bibinfo{person}{Rachid Guerraoui} {and}
  \bibinfo{person}{Michal Kapalka}.} \bibinfo{year}{2008}\natexlab{}.
\newblock \showarticletitle{On the correctness of transactional memory}. In
  \bibinfo{booktitle}{\emph{Proceedings of the 13th ACM SIGPLAN Symposium on
  Principles and practice of parallel programming}}. \bibinfo{pages}{175--184}.
\newblock


\bibitem[\protect\citeauthoryear{Guerraoui, Kapalka, and Vitek}{Guerraoui
  et~al\mbox{.}}{2006}]%
        {guerraoui2006stmbench7}
\bibfield{author}{\bibinfo{person}{Rachid Guerraoui}, \bibinfo{person}{Michal
  Kapalka}, {and} \bibinfo{person}{Jan Vitek}.}
  \bibinfo{year}{2006}\natexlab{}.
\newblock \bibinfo{booktitle}{\emph{Stmbench7: a benchmark for software
  transactional memory}}.
\newblock \bibinfo{type}{{T}echnical {R}eport}.
\newblock


\bibitem[\protect\citeauthoryear{Hassan, Palmieri, and Ravindran}{Hassan
  et~al\mbox{.}}{2014}]%
        {hassan2014optimistic}
\bibfield{author}{\bibinfo{person}{Ahmed Hassan}, \bibinfo{person}{Roberto
  Palmieri}, {and} \bibinfo{person}{Binoy Ravindran}.}
  \bibinfo{year}{2014}\natexlab{}.
\newblock \showarticletitle{Optimistic transactional boosting}. In
  \bibinfo{booktitle}{\emph{Proceedings of the 19th ACM SIGPLAN symposium on
  Principles and practice of parallel programming}}. \bibinfo{pages}{387--388}.
\newblock


\bibitem[\protect\citeauthoryear{Hendler, Incze, Shavit, and Tzafrir}{Hendler
  et~al\mbox{.}}{2010}]%
        {hendler2010flat}
\bibfield{author}{\bibinfo{person}{Danny Hendler}, \bibinfo{person}{Itai
  Incze}, \bibinfo{person}{Nir Shavit}, {and} \bibinfo{person}{Moran Tzafrir}.}
  \bibinfo{year}{2010}\natexlab{}.
\newblock \showarticletitle{Flat combining and the synchronization-parallelism
  tradeoff}. In \bibinfo{booktitle}{\emph{Proceedings of the twenty-second
  annual ACM symposium on Parallelism in algorithms and architectures}}.
  \bibinfo{pages}{355--364}.
\newblock


\bibitem[\protect\citeauthoryear{Herlihy and Koskinen}{Herlihy and
  Koskinen}{2008}]%
        {herlihy2008transactional}
\bibfield{author}{\bibinfo{person}{Maurice Herlihy} {and} \bibinfo{person}{Eric
  Koskinen}.} \bibinfo{year}{2008}\natexlab{}.
\newblock \showarticletitle{Transactional boosting: a methodology for
  highly-concurrent transactional objects}. In
  \bibinfo{booktitle}{\emph{Proceedings of the 13th ACM SIGPLAN Symposium on
  Principles and practice of parallel programming}}. \bibinfo{pages}{207--216}.
\newblock


\bibitem[\protect\citeauthoryear{Herlihy and Moss}{Herlihy and Moss}{1993}]%
        {herlihy1993transactional}
\bibfield{author}{\bibinfo{person}{Maurice Herlihy} {and}
  \bibinfo{person}{J~Eliot~B Moss}.} \bibinfo{year}{1993}\natexlab{}.
\newblock \showarticletitle{Transactional memory: Architectural support for
  lock-free data structures}. In \bibinfo{booktitle}{\emph{Proceedings of the
  20th annual international symposium on Computer architecture}}.
  \bibinfo{pages}{289--300}.
\newblock


\bibitem[\protect\citeauthoryear{Herman, Inala, Huang, Tsai, Kohler, Liskov,
  and Shrira}{Herman et~al\mbox{.}}{2016}]%
        {herman2016type}
\bibfield{author}{\bibinfo{person}{Nathaniel Herman},
  \bibinfo{person}{Jeevana~Priya Inala}, \bibinfo{person}{Yihe Huang},
  \bibinfo{person}{Lillian Tsai}, \bibinfo{person}{Eddie Kohler},
  \bibinfo{person}{Barbara Liskov}, {and} \bibinfo{person}{Liuba Shrira}.}
  \bibinfo{year}{2016}\natexlab{}.
\newblock \showarticletitle{Type-aware transactions for faster concurrent
  code}. In \bibinfo{booktitle}{\emph{Proceedings of the Eleventh European
  Conference on Computer Systems}}. \bibinfo{pages}{1--16}.
\newblock


\bibitem[\protect\citeauthoryear{Kung and Robinson}{Kung and Robinson}{1981}]%
        {kung1981optimistic}
\bibfield{author}{\bibinfo{person}{Hsiang-Tsung Kung} {and}
  \bibinfo{person}{John~T Robinson}.} \bibinfo{year}{1981}\natexlab{}.
\newblock \showarticletitle{On optimistic methods for concurrency control}.
\newblock \bibinfo{journal}{\emph{ACM Transactions on Database Systems (TODS)}}
  \bibinfo{volume}{6}, \bibinfo{number}{2} (\bibinfo{year}{1981}),
  \bibinfo{pages}{213--226}.
\newblock


\bibitem[\protect\citeauthoryear{LaBorde, Lebanoff, Peterson, Zhang, and
  Dechev}{LaBorde et~al\mbox{.}}{2019}]%
        {laborde2019wait}
\bibfield{author}{\bibinfo{person}{Pierre LaBorde}, \bibinfo{person}{Lance
  Lebanoff}, \bibinfo{person}{Christina Peterson}, \bibinfo{person}{Deli
  Zhang}, {and} \bibinfo{person}{Damian Dechev}.}
  \bibinfo{year}{2019}\natexlab{}.
\newblock \showarticletitle{Wait-free dynamic transactions for linked data
  structures}. In \bibinfo{booktitle}{\emph{Proceedings of the 10th
  International Workshop on Programming Models and Applications for Multicores
  and Manycores}}. \bibinfo{pages}{41--50}.
\newblock


\bibitem[\protect\citeauthoryear{McKenney and Slingwine}{McKenney and
  Slingwine}{1998}]%
        {rcu}
\bibfield{author}{\bibinfo{person}{Paul~E McKenney} {and}
  \bibinfo{person}{John~D Slingwine}.} \bibinfo{year}{1998}\natexlab{}.
\newblock \showarticletitle{Read-copy update: Using execution history to solve
  concurrency problems}. In \bibinfo{booktitle}{\emph{Parallel and Distributed
  Computing and Systems}}, Vol.~\bibinfo{volume}{509518}.
\newblock


\bibitem[\protect\citeauthoryear{Mehrara, Hao, Hsu, and Mahlke}{Mehrara
  et~al\mbox{.}}{2009}]%
        {mehrara2009parallelizing}
\bibfield{author}{\bibinfo{person}{Mojtaba Mehrara}, \bibinfo{person}{Jeff
  Hao}, \bibinfo{person}{Po-Chun Hsu}, {and} \bibinfo{person}{Scott Mahlke}.}
  \bibinfo{year}{2009}\natexlab{}.
\newblock \showarticletitle{Parallelizing sequential applications on commodity
  hardware using a low-cost software transactional memory}.
\newblock \bibinfo{journal}{\emph{ACM Sigplan Notices}} \bibinfo{volume}{44},
  \bibinfo{number}{6} (\bibinfo{year}{2009}), \bibinfo{pages}{166--176}.
\newblock


\bibitem[\protect\citeauthoryear{Moss}{Moss}{1981}]%
        {moss1981nested}
\bibfield{author}{\bibinfo{person}{John Eliot~Blakeslee Moss}.}
  \bibinfo{year}{1981}\natexlab{}.
\newblock \bibinfo{booktitle}{\emph{Nested Transactions: An Approach to
  Reliable Distributed Computing.}}
\newblock \bibinfo{type}{{T}echnical {R}eport}.
  \bibinfo{institution}{MASSACHUSETTS INST OF TECH CAMBRIDGE LAB FOR COMPUTER
  SCIENCE}.
\newblock


\bibitem[\protect\citeauthoryear{Nguyen, Lenharth, and Pingali}{Nguyen
  et~al\mbox{.}}{2014}]%
        {nguyen2014deterministic}
\bibfield{author}{\bibinfo{person}{Donald Nguyen}, \bibinfo{person}{Andrew
  Lenharth}, {and} \bibinfo{person}{Keshav Pingali}.}
  \bibinfo{year}{2014}\natexlab{}.
\newblock \showarticletitle{Deterministic Galois: On-demand, portable and
  parameterless}.
\newblock \bibinfo{journal}{\emph{ACM SIGPLAN Notices}} \bibinfo{volume}{49},
  \bibinfo{number}{4} (\bibinfo{year}{2014}), \bibinfo{pages}{499--512}.
\newblock


\bibitem[\protect\citeauthoryear{Perelman, Fan, and Keidar}{Perelman
  et~al\mbox{.}}{2010}]%
        {perelman2010maintaining}
\bibfield{author}{\bibinfo{person}{Dmitri Perelman}, \bibinfo{person}{Rui Fan},
  {and} \bibinfo{person}{Idit Keidar}.} \bibinfo{year}{2010}\natexlab{}.
\newblock \showarticletitle{On maintaining multiple versions in STM}. In
  \bibinfo{booktitle}{\emph{Proceedings of the 29th ACM SIGACT-SIGOPS symposium
  on Principles of distributed computing}}. \bibinfo{pages}{16--25}.
\newblock


\bibitem[\protect\citeauthoryear{Ravichandran, Gavrilovska, and
  Pande}{Ravichandran et~al\mbox{.}}{2014}]%
        {ravichandran2014destm}
\bibfield{author}{\bibinfo{person}{Kaushik Ravichandran}, \bibinfo{person}{Ada
  Gavrilovska}, {and} \bibinfo{person}{Santosh Pande}.}
  \bibinfo{year}{2014}\natexlab{}.
\newblock \showarticletitle{DeSTM: harnessing determinism in STMs for
  application development}. In \bibinfo{booktitle}{\emph{Proceedings of the
  23rd international conference on Parallel architectures and compilation}}.
  \bibinfo{pages}{213--224}.
\newblock


\bibitem[\protect\citeauthoryear{Riegel, Felber, and Fetzer}{Riegel
  et~al\mbox{.}}{2006}]%
        {riegel2006lazy}
\bibfield{author}{\bibinfo{person}{Torvald Riegel}, \bibinfo{person}{Pascal
  Felber}, {and} \bibinfo{person}{Christof Fetzer}.}
  \bibinfo{year}{2006}\natexlab{}.
\newblock \showarticletitle{A lazy snapshot algorithm with eager validation}.
  In \bibinfo{booktitle}{\emph{International Symposium on Distributed
  Computing}}. Springer, \bibinfo{pages}{284--298}.
\newblock


\bibitem[\protect\citeauthoryear{Rihani, Sanders, and Dementiev}{Rihani
  et~al\mbox{.}}{2015}]%
        {multiqueue}
\bibfield{author}{\bibinfo{person}{Hamza Rihani}, \bibinfo{person}{Peter
  Sanders}, {and} \bibinfo{person}{Roman Dementiev}.}
  \bibinfo{year}{2015}\natexlab{}.
\newblock \showarticletitle{Multiqueues: Simple relaxed concurrent priority
  queues}. In \bibinfo{booktitle}{\emph{Proceedings of the 27th ACM symposium
  on Parallelism in Algorithms and Architectures}}. \bibinfo{pages}{80--82}.
\newblock


\bibitem[\protect\citeauthoryear{Ruan, Loghin, Ta, Zhang, Chen, and Ooi}{Ruan
  et~al\mbox{.}}{2020}]%
        {ruan2020transactional}
\bibfield{author}{\bibinfo{person}{Pingcheng Ruan}, \bibinfo{person}{Dumitrel
  Loghin}, \bibinfo{person}{Quang-Trung Ta}, \bibinfo{person}{Meihui Zhang},
  \bibinfo{person}{Gang Chen}, {and} \bibinfo{person}{Beng~Chin Ooi}.}
  \bibinfo{year}{2020}\natexlab{}.
\newblock \showarticletitle{A transactional perspective on
  execute-order-validate blockchains}. In \bibinfo{booktitle}{\emph{Proceedings
  of the 2020 ACM SIGMOD International Conference on Management of Data}}.
  \bibinfo{pages}{543--557}.
\newblock


\bibitem[\protect\citeauthoryear{Saad, Kishi, Jing, Hans, and Palmieri}{Saad
  et~al\mbox{.}}{2019}]%
        {saad2019processing}
\bibfield{author}{\bibinfo{person}{Mohamed~M Saad},
  \bibinfo{person}{Masoomeh~Javidi Kishi}, \bibinfo{person}{Shihao Jing},
  \bibinfo{person}{Sandeep Hans}, {and} \bibinfo{person}{Roberto Palmieri}.}
  \bibinfo{year}{2019}\natexlab{}.
\newblock \showarticletitle{Processing transactions in a predefined order}. In
  \bibinfo{booktitle}{\emph{Proceedings of the 24th Symposium on Principles and
  Practice of Parallel Programming}}. \bibinfo{pages}{120--132}.
\newblock


\bibitem[\protect\citeauthoryear{Sharma, Schuhknecht, Agrawal, and
  Dittrich}{Sharma et~al\mbox{.}}{2019}]%
        {sharma2019blurring}
\bibfield{author}{\bibinfo{person}{Ankur Sharma}, \bibinfo{person}{Felix~Martin
  Schuhknecht}, \bibinfo{person}{Divya Agrawal}, {and} \bibinfo{person}{Jens
  Dittrich}.} \bibinfo{year}{2019}\natexlab{}.
\newblock \showarticletitle{Blurring the lines between blockchains and database
  systems: the case of hyperledger fabric}. In
  \bibinfo{booktitle}{\emph{Proceedings of the 2019 International Conference on
  Management of Data}}. \bibinfo{pages}{105--122}.
\newblock


\bibitem[\protect\citeauthoryear{Shavit and Touitou}{Shavit and
  Touitou}{1997}]%
        {shavit1997software}
\bibfield{author}{\bibinfo{person}{Nir Shavit} {and} \bibinfo{person}{Dan
  Touitou}.} \bibinfo{year}{1997}\natexlab{}.
\newblock \showarticletitle{Software transactional memory}.
\newblock \bibinfo{journal}{\emph{Distributed Computing}} \bibinfo{volume}{10},
  \bibinfo{number}{2} (\bibinfo{year}{1997}), \bibinfo{pages}{99--116}.
\newblock


\bibitem[\protect\citeauthoryear{Shun, Blelloch, Fineman, Gibbons, Kyrola,
  Simhadri, and Tangwongsan}{Shun et~al\mbox{.}}{2012}]%
        {shun2012brief}
\bibfield{author}{\bibinfo{person}{Julian Shun}, \bibinfo{person}{Guy~E
  Blelloch}, \bibinfo{person}{Jeremy~T Fineman}, \bibinfo{person}{Phillip~B
  Gibbons}, \bibinfo{person}{Aapo Kyrola}, \bibinfo{person}{Harsha~Vardhan
  Simhadri}, {and} \bibinfo{person}{Kanat Tangwongsan}.}
  \bibinfo{year}{2012}\natexlab{}.
\newblock \showarticletitle{Brief announcement: the problem based benchmark
  suite}. In \bibinfo{booktitle}{\emph{Proceedings of the twenty-fourth annual
  ACM symposium on Parallelism in algorithms and architectures}}.
  \bibinfo{pages}{68--70}.
\newblock


\bibitem[\protect\citeauthoryear{Spiegelman, Golan-Gueta, and
  Keidar}{Spiegelman et~al\mbox{.}}{2016}]%
        {spiegelman2016transactional}
\bibfield{author}{\bibinfo{person}{Alexander Spiegelman}, \bibinfo{person}{Guy
  Golan-Gueta}, {and} \bibinfo{person}{Idit Keidar}.}
  \bibinfo{year}{2016}\natexlab{}.
\newblock \showarticletitle{Transactional data structure libraries}.
\newblock \bibinfo{journal}{\emph{ACM SIGPLAN Notices}} \bibinfo{volume}{51},
  \bibinfo{number}{6} (\bibinfo{year}{2016}), \bibinfo{pages}{682--696}.
\newblock


\bibitem[\protect\citeauthoryear{Team}{Team}{2021}]%
        {diem}
\bibfield{author}{\bibinfo{person}{The~DiemBFT Team}.}
  \bibinfo{year}{2021}\natexlab{}.
\newblock \bibinfo{title}{State machine replication in the Diem Blockchain}.
\newblock
\newblock
\newblock
\shownote{\url{https://developers.diem.com/docs/technical-papers/state-machine-replication-paper}.}


\bibitem[\protect\citeauthoryear{Thomson, Diamond, Weng, Ren, Shao, and
  Abadi}{Thomson et~al\mbox{.}}{2012}]%
        {calvin}
\bibfield{author}{\bibinfo{person}{Alexander Thomson},
  \bibinfo{person}{Thaddeus Diamond}, \bibinfo{person}{Shu-Chun Weng},
  \bibinfo{person}{Kun Ren}, \bibinfo{person}{Philip Shao}, {and}
  \bibinfo{person}{Daniel~J. Abadi}.} \bibinfo{year}{2012}\natexlab{}.
\newblock \showarticletitle{Calvin: Fast Distributed Transactions for
  Partitioned Database Systems}. In \bibinfo{booktitle}{\emph{SIGMOD}}.
\newblock


\bibitem[\protect\citeauthoryear{Vale, Silva, Dias, and Louren{\c{c}}o}{Vale
  et~al\mbox{.}}{2016}]%
        {vale2016pot}
\bibfield{author}{\bibinfo{person}{Tiago~M Vale}, \bibinfo{person}{Jo{\~a}o~A
  Silva}, \bibinfo{person}{Ricardo~J Dias}, {and} \bibinfo{person}{Jo{\~a}o~M
  Louren{\c{c}}o}.} \bibinfo{year}{2016}\natexlab{}.
\newblock \showarticletitle{Pot: Deterministic transactional execution}.
\newblock \bibinfo{journal}{\emph{ACM Transactions on Architecture and Code
  Optimization (TACO)}} \bibinfo{volume}{13}, \bibinfo{number}{4}
  (\bibinfo{year}{2016}), \bibinfo{pages}{1--24}.
\newblock


\bibitem[\protect\citeauthoryear{Von~Praun, Ceze, and Ca{\c{s}}caval}{Von~Praun
  et~al\mbox{.}}{2007}]%
        {von2007implicit}
\bibfield{author}{\bibinfo{person}{Christoph Von~Praun}, \bibinfo{person}{Luis
  Ceze}, {and} \bibinfo{person}{Calin Ca{\c{s}}caval}.}
  \bibinfo{year}{2007}\natexlab{}.
\newblock \showarticletitle{Implicit parallelism with ordered transactions}. In
  \bibinfo{booktitle}{\emph{Proceedings of the 12th ACM SIGPLAN symposium on
  Principles and practice of parallel programming}}. \bibinfo{pages}{79--89}.
\newblock


\bibitem[\protect\citeauthoryear{Wohrer and Zdun}{Wohrer and Zdun}{2018}]%
        {wohrer2018smart}
\bibfield{author}{\bibinfo{person}{Maximilian Wohrer} {and}
  \bibinfo{person}{Uwe Zdun}.} \bibinfo{year}{2018}\natexlab{}.
\newblock \showarticletitle{Smart contracts: security patterns in the ethereum
  ecosystem and solidity}. In \bibinfo{booktitle}{\emph{2018 International
  Workshop on Blockchain Oriented Software Engineering (IWBOSE)}}. IEEE,
  \bibinfo{pages}{2--8}.
\newblock


\bibitem[\protect\citeauthoryear{Xia, Yu, Moses, Shun, and Devadas}{Xia
  et~al\mbox{.}}{2019}]%
        {xia2019litm}
\bibfield{author}{\bibinfo{person}{Yu Xia}, \bibinfo{person}{Xiangyao Yu},
  \bibinfo{person}{William Moses}, \bibinfo{person}{Julian Shun}, {and}
  \bibinfo{person}{Srinivas Devadas}.} \bibinfo{year}{2019}\natexlab{}.
\newblock \showarticletitle{LiTM: A Lightweight Deterministic Software
  Transactional Memory System}. In \bibinfo{booktitle}{\emph{Proceedings of the
  10th International Workshop on Programming Models and Applications for
  Multicores and Manycores}}. \bibinfo{pages}{1--10}.
\newblock


\end{thebibliography}

\clearpage

\appendix


\section{Correctness}
\label{sec:proof}
We consider concurrent runs\footnote{Typically called executions in the literature, but we use the term \emph{run} to avoid a naming clash with transaction execution.} by threads, where each thread performs a sequence of atomic operations, and there is a global order in which these operations appear to take place. We use the term \emph{time} to refer to a point in this global order, i.e. a time $T$ determines for each thread the operations that it performed before~$T$.
\subsection{Life of a Version}
We say that validation of version $v = (j, i)$ starts anytime a validation task with version $v$ is returned to some thread $t$, either in~\lineref{line:tryexeccall} or in~\lineref{line:gettaskcall}. 
We say execution of version $v$
starts immediately after~\lineref{line:executing} is performed that sets the status of transaction $tx_j$ to \rust{EXECUTING(i)}. 
We say that the execution of version $v$ \emph{aborts} immediately 
after~\lineref{line:depaborting} is performed, and that
the validation of version $v$ \emph{aborts} immediately after~\lineref{line:valaborting} is performed.
In both cases, the transaction status is set to \rust{ABORTING(i)}.

After thread $t$ starts the execution of version $v$, an execution task with $v$ is returned either in~\lineref{line:reexeccall} or in~\lineref{line:gettaskcall}.
Thread $t$ then invokes the \rust{try\_execute} function for the execution task, which may invoke \rust{finish\_execution} procedure in~\lineref{line:finishexeccall}.
The \rust{finish\_execution} function is not called only when the execution aborts, in which case we say the execution \emph{finishes} at the same time when it aborts.
Similarly, after a validation starts, $t$ invokes \rust{needs\_reexecution} function for the validation task, which always invokes\\ \rust{finish\_validation} procedure in~\lineref{line:finvalcall}.

If~\lineref{line:extaskdec} (for execution) or~\lineref{line:valtaskdec} (for validation) is performed, then the corresponding validation or execution \emph{finishes} immediately before.
If these lines are not performed in \rust{finish\_execution} and in \rust{finish\_validation}, respectively, then the \rust{finish\_execution} invocation returns a validation task and the \rust{finish\_validation} invocation returns an execution task.
We say that such an execution \emph{finishes} immediately before the \rust{try\_execute} invocation returns in~\lineref{line:tryexeccall} (i.e. before validation starts for the version in the returned task).
Analogously, such a validation finishes immediately before a \rust{needs\_reexecution} invocation returns in~\lineref{line:reexeccall} (i.e. before execution starts for the version in the returned task).

An update to a transaction status is always performed by a thread while holding the corresponding lock. \figureref{fig:status} describes all possible status transitions.
For example, once \emph{txn\_status[j]} becomes \rust{EXECUTING(i)}, it can never be\\ \rust{READY\_TO\_EXECUTE(i)} at a later time. 
By the code, illustrated in the allowable transitions in~\figureref{fig:status}, we have

\begin{corollary}
\label{cor:order}
The following observations are true:
\begin{itemize}
\item The status of transaction $tx_j$ must be set to\\ \rust{READY\_TO\_EXECUTE(i)} in~\lineref{line:setready} before the execution of the version $v = (j,i)$ can start.

\item Any version $v= (j,i)$ can be executed at most once (by a thread that updates the status of transaction $tx_j$ to \rust{EXECUTING(i)} from \rust{READY\_TO\_EXECUTE(i)} to start the execution of $v$). 
Only the executing thread may update the status next, either to \rust{ABORTING(i)} in~\lineref{line:depaborting} or to \rust{EXECUTED(i)} in~\lineref{line:finex}.


\item The status of transaction $tx_j$ must be set to \rust{EXECUTED(i)} in~\lineref{line:finex} during the execution of version $v = (j,i)$ before any validation of $v$ can start. Once the status is set to \rust{EXECUTED(i)}, it can only be updated to \rust{ABORTING(i)} in~\lineref{line:valaborting} during a validation of $v$.

\item At most one execution or validation of version $v = (j,i)$ can abort, updating the status to \rust{ABORTING(i)} either in~\lineref{line:depaborting} from \rust{EXECUTING(i)} or in~\lineref{line:valaborting} from \rust{EXECUTED(i)}. The next update to the status of transaction $tx_j$ must be to \rust{READY\_TO\_EXECUTE(i+1)} in~\lineref{line:setready}.
\end{itemize}
\end{corollary}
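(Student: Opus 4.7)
The plan is to prove the corollary by a careful case analysis of every program line that mutates \emph{txn\_status}, combined with the observation that all such mutations are performed under the per-transaction mutex and are therefore totally ordered. First I would enumerate the mutation sites: status is initialized to $(0, \rust{READY\_TO\_EXECUTE})$; it is written in~\lineref{line:executing} (to \rust{EXECUTING} under the lock acquired in~\lineref{line:statlockincarn}), in~\lineref{line:depaborting} (to \rust{ABORTING}), in~\lineref{line:finex} (to \rust{EXECUTED}), in~\lineref{line:valaborting} (to \rust{ABORTING}, under the lock acquired by \rust{try\_validation\_abort}), and in~\lineref{line:setready} (to \rust{READY\_TO\_EXECUTE} with an incremented incarnation number, under the lock acquired in~\lineref{line:lockstatready}). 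Since \emph{txn\_status[j]} is only read and written while holding its mutex, for each transaction $tx_j$ there is a well-defined sequence of status values in global time.

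Next I would verify, for each mutation site, the pre-condition it requires on the current status. Line~\lineref{line:executing} fires only if the status is \rust{READY\_TO\_EXECUTE} (tested in~\lineref{line:ready}); line~\lineref{line:depaborting} fires only inside \rust{try\_execute} after the calling thread itself transitioned the status to \rust{EXECUTING} earlier, and the lock at~\lineref{line:depexcheck} confirms the status has not become \rust{EXECUTED} in the meantime; line~\lineref{line:finex} is reached on the same thread after the corresponding \rust{EXECUTING} transition, when the \module{VM} returns normally, so the current status must still be \rust{EXECUTING}; line~\lineref{line:valaborting} fires only if the status matches \emph{(incarnation\_number, \rust{EXECUTED})} (tested in~\lineref{line:canabort}); line~\lineref{line:setready} is invoked only inside \rust{set\_ready\_status}, which is called from~\lineref{line:depfree} and~\lineref{line:readycallval}, both entered right after the thread has set the status to \rust{ABORTING}. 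Together these observations yield exactly the edges of~\figureref{fig:status}, and no other edges exist.

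From this labelled graph the four bullets follow directly. Bullet~1 holds because the only edge entering \rust{EXECUTING(i)} is from \rust{READY\_TO\_EXECUTE(i)}, and execution of $v=(j,i)$ is defined to start immediately after~\lineref{line:executing}. Bullet~2 holds because the \rust{READY\_TO\_EXECUTE(i)}$\,\to\,$\rust{EXECUTING(i)} transition is done under the mutex, so only one thread can win; once won, the only outgoing edges from \rust{EXECUTING(i)} are~\lineref{line:depaborting} and~\lineref{line:finex}, and both are performed only by the executing thread (the first inside its own \rust{add\_dependency} call, the second inside its own \rust{finish\_execution} call). Bullet~3 holds because the only edge into a state that permits validation abort is \rust{EXECUTING(i)}$\,\to\,$\rust{EXECUTED(i)} via~\lineref{line:finex}, and the only outgoing edge from \rust{EXECUTED(i)} is~\lineref{line:valaborting}; the precondition at~\lineref{line:canabort} guarantees the incarnation number matches. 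Bullet~4 holds because both edges into \rust{ABORTING(i)} are taken under the status mutex with mutually exclusive preconditions (\rust{EXECUTING(i)} vs.\ \rust{EXECUTED(i)}), so at most one abort per version succeeds; the only outgoing edge from \rust{ABORTING(i)} is~\lineref{line:setready}, which produces \rust{READY\_TO\_EXECUTE(i+1)}.

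The main obstacle is not any one step but the bookkeeping needed to confirm the pre-condition of \lineref{line:depaborting}, because unlike the other sites it is not guarded by a direct test of \emph{txn\_status[j]}; instead one must chain several facts: the calling thread itself previously transitioned to \rust{EXECUTING(i)} in~\lineref{line:executing}, it never released the exclusive right to update the status (by bullet~2 as just derived, which at this point in the argument must be established jointly with bullet~4 via induction on the sequence of status updates), and the lock acquired in~\lineref{line:lockdep} together with the double-check in~\lineref{line:depexcheck} rules out a race where \rust{finish\_execution} has already installed \rust{EXECUTED(i)} on the same thread. Once this is nailed down, the rest is straightforward enumeration.
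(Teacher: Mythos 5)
Your proposal is correct and takes essentially the same route as the paper, which offers no explicit proof beyond asserting the corollary ``by the code'' with reference to the status-transition diagram in \figureref{fig:status}; your enumeration of the mutation sites of \emph{txn\_status} under the per-transaction mutex, together with the precondition check at each site, is precisely the case analysis that assertion implicitly relies on. One minor remark: the guard at \lineref{line:depexcheck} tests the status of the \emph{blocking} transaction (to avoid losing a dependency wakeup), not of $tx_j$ itself, so the precondition of \lineref{line:depaborting} rests entirely on the executing thread's exclusive right to move the status out of \rust{EXECUTING(i)} --- which is the argument you correctly identify in your final paragraph.
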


\subsection{Safety}
We say that \emph{a pre-validation} of transaction $tx_j$ starts any time some thread $t$ performs a \emph{fetch\_and\_increment} operation, returning $j$, in~\lineref{line:fival}.
The pre-validation finishes immediately before $t$ performs~\lineref{line:taskvaldec}, if this line is performed.
Otherwise, by code, a validation task for transaction $tx_j$ is returned from the\\ \rust{next\_version\_to\_validate} function invocation.
In this case, pre-validation finishes immediately before the validation task is returned in~\lineref{line:gettaskcall}, i.e. before the corresponding validation starts.

\begin{definition}[Global Commit Index]
\label{def:commit}
The global commit index at time $T$ is defined as the minimum among all the following quantities at time $T$: 
\begin{itemize}
    \item \module{Scheduler}.validation\_idx
    \item all indices $j$, such that \module{Scheduler}.txn\_status[j].status $\neq$ \rust{EXECUTED}
    \item transaction indices with ongoing pre-validation
    \item transaction indices of versions with ongoing
    execution or validation
\end{itemize}
\end{definition}
We say that transactions $tx_0, \ldots, tx_k$ of the block are \emph{globally committed} at time $T$ if 
the global commit index at time $T$
is strictly greater than $k$. 
Next, we prove the essential properties of the commit definition.
\begin{claim}
\label{clm:commit}
If transaction $tx_k$ is committed at time $T$, then it is also committed at all times $T' \geq T$.
\end{claim}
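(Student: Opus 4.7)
The plan is a first-violation contradiction argument. Assume $tx_k$ is committed at $T$ but, for contradiction, fails to be committed at some $T' > T$; pick the least such $T'$. By Definition~\ref{def:commit} and minimality, the global commit index exceeds $k$ throughout $[T, T')$, yet at $T'$ one of the four quantities in the definition drops to $\leq k$ for the first time. I would then case-split on which quantity caused the drop, each time using the ``Life of a Version'' bookkeeping to exhibit a concrete atomic step in an arbitrarily small interval immediately preceding $T'$ whose very existence contradicts the invariant just before $T'$.

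First, if \emph{validation\_idx} drops to $\leq k$, it must have done so via \rust{decrease\_validation\_idx}, invoked either from \rust{finish\_execution} (with the finishing transaction's own index $j$) or from \rust{finish\_validation} after a successful abort (with argument $j+1$). In both subcases some execution or validation of $tx_j$ with $j \leq k$ was ongoing immediately before $T'$, contradicting that the commit index exceeded $k$ throughout $[T, T')$. Second, if the status of some $tx_j$, $j \leq k$, left \rust{EXECUTED}, Corollary~\ref{cor:order} forces the transition to be to \rust{ABORTING} via \lineref{line:valaborting} during an in-progress validation of $tx_j$, again violating the invariant. Third, if a pre-validation of some $tx_j$, $j \leq k$, began at $T'$, the triggering \emph{fetch\_and\_increment} on \emph{validation\_idx} in~\lineref{line:fival} returned $j$, so \emph{validation\_idx} was $\leq j \leq k$ just prior, contradiction. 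Fourth, if an execution of a version $(j,i)$ with $j \leq k$ started at $T'$, Corollary~\ref{cor:order} says the status was \rust{READY\_TO\_EXECUTE} immediately prior, contradicting \rust{EXECUTED} throughout $[T, T')$. Finally, if a validation of $(j,i)$ with $j \leq k$ started, the task was either returned by \rust{finish\_execution} on $(j,i)$ (so an execution was ongoing just before $T'$, which reduces to the first case/execution argument) or obtained via \rust{next\_task}\,/\,\rust{next\_version\_to\_validate}, which required the corresponding pre-validation to have already begun in the window and thus reduces to the third case.

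The main obstacle is not combinatorial but definitional: because \emph{validation\_idx}, \emph{num\_active\_tasks}, the status array, and the dependency-tracked queues are updated by separate atomics and locks, I must be careful to identify, for each potential drop of the minimum, a single atomic step serving as the unambiguous ``witness event'' whose presence violates one of the four defining conditions at a time in $[T, T')$. The mapping between the \rust{try\_incarnate}\,/\,\rust{finish\_execution}\,/\,\rust{finish\_validation} accounting and the start/finish boundaries of executions, validations, and pre-validations---made explicit in the ``Life of a Version'' subsection---is exactly what licenses these witness events, and so this stability claim is essentially a direct corollary of that accounting combined with Corollary~\ref{cor:order}.
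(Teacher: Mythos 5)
Your proposal is correct and is essentially the paper's own argument: the paper phrases it as induction on time while you phrase it as a minimal-counterexample ("first violation") argument, but these are interchangeable, and your case analysis over the events that could lower the global commit index (a \rust{decrease\_validation\_idx} call, a status leaving \rust{EXECUTED}, or the start of a pre-validation, validation, or execution) matches the paper's bullet-by-bullet treatment, each time exhibiting an ongoing task or non-\rust{EXECUTED} status at an earlier time that the invariant already forbids. Your handling of the execution-start case via Corollary~\ref{cor:order} (status was \rust{READY\_TO\_EXECUTE} just prior) is a slightly more uniform rendering of the paper's two subcases, but the substance is the same.
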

\begin{proof}
We prove this claim by a simple inductive reasoning on time.
Specifically, for every time $T' \geq T$ we prove that $k$ is strictly less than the global commit index at time $T'$.
The base case for time $T$ follows from the Claim assumption.
For the inductive step, we suppose the assumption holds at time $T'$ and show that the~\definitionref{def:commit} still leads to a global commit index $>k$ when the next event after $T'$ takes effect.
\begin{itemize}
    \item The operation may change validation index from time $T'$ only in~\lineref{line:decvalidx}, which can be due to a call in~\lineref{line:decvalcall2} (during \rust{finish\_execution}) or in~\lineref{line:decvalcall1} (during \rust{finish\_validation}).
    In the first case, if \emph{validation\_idx} is reduced to value $j$, there must be an ongoing execution with transaction index $j$ at time $T'$.
    In the second case, there must be an ongoing validation with transaction index $j$ at time $T'$.
    Thus, in both cases, by inductive hypothesis, $j > k$. 

    \item The operation may change a status of transaction $tx_j$ from \rust{EXECUTED} only in~\lineref{line:valaborting}, in which case there is an ongoing validation with transaction index $j$ at time $T'$. Thus, by inductive hypothesis, $j > k$.
    
    \item A \emph{fetch-and-increment} operation in~\lineref{line:fival} may start a pre-validation of transaction $tx_j$. The \emph{validation\_idx} must have been $j$ at time $T'$ and by inductive hypothesis, $j > k$.
    
    \item If validation of a version $v$ with transaction index $j$ starts immediately after $T'$, then there must have been a pre-validation or an execution of version $v$ that ended immediately before, hence, that was ongoing at time $T'$. Thus, by inductive hypothesis, $j > k$.
    
    \item If an execution of a version $v$ with transaction index $j$ starts immediately after $T'$, then let us consider two cases:
    \begin{itemize}
        \item if an execution task was returned in \lineref{line:reexeccall}, then there was a validation of a version with index $j$ (previous incarnation) that ended immediately before, and hence, was ongoing at time $T'$. Thus, by inductive hypothesis, $j > k$.
        
        \item if an execution task was returned to some thread $t$ in \lineref{line:gettaskcall}, then, by the code, the status of transaction $tx_j$ must have been previously set to \rust{EXECUTING} by $t$. 
        By~\corollaryref{cor:order}, the status of transaction $tx_j$ may not change to \rust{EXECUTED} until $t$ starts the execution.
        Thus, since the status of transaction $tx_j$ is not \rust{EXECUTED} at time $T'$, by inductive hypothesis, $j > k$.
    \end{itemize}
\end{itemize}

Hence, the global commit index is monotonically non-decreasing with time. 
\end{proof}
Next, we prove some auxiliary claims regarding the interplay between transaction status and shared (execution and validation) indices.
\begin{claim}
\label{clm:validxstatus}
Suppose all transactions are eventually committed, and that at all times after $T$ the status of transaction $tx_j$ is \rust{EXECUTED}.
If no validation of a version of $tx_j$ starts after $T$, then the validation index must be $>j$ at all times after $T$.
\end{claim}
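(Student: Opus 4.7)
The plan is to proceed by contradiction, assuming that \emph{validation\_idx} $\leq j$ at some time $T' \geq T$, and then exhibiting a validation of a version of $tx_j$ that starts after $T$, contradicting the hypothesis.

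First I would argue, by inspection of the pseudocode, that \emph{validation\_idx} is modified in exactly two places: the \emph{fetch\_and\_increment} in~\lineref{line:fival} (which increases it by one) and \rust{decrease\_validation\_idx} (which can only decrease it). Since increments are by exactly one, any transition of \emph{validation\_idx} from a value $\leq j$ to a value $>j$ must occur via a \emph{fetch\_and\_increment} that reads the value $j$ and writes $j+1$.

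Next, I would combine the global hypothesis that every transaction is eventually committed with~\claimref{clm:commit} (monotonicity of the commit index) and~\definitionref{def:commit}. This yields a time $T'' \geq T'$ at which \emph{validation\_idx} $\geq \rust{BLOCK.size()} > j$. Therefore some time $T^\star \in [T', T'']$ must witness the crossing identified in the previous paragraph: a thread performs the \emph{fetch\_and\_increment} in~\lineref{line:fival} returning $j$, which, by definition, starts a pre-validation of $tx_j$ at a time $\geq T$. I would then follow this pre-validation through \rust{next\_version\_to\_validate}: since $j < \rust{BLOCK.size()}$, the thread reaches~\lineref{line:statlocknextval}, and by the claim's hypothesis reads status \rust{EXECUTED(i)} for some incarnation $i$. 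The function returns the version $(j,i)$, \rust{next\_task} wraps it as a \rust{VALIDATION\_TASK}, and \rust{run()} receives it in~\lineref{line:gettaskcall}, which, by definition, starts a validation of the version $(j,i)$ of $tx_j$ after $T$, giving the desired contradiction.

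The main obstacle I anticipate is ensuring that the code path from the \emph{fetch\_and\_increment} to the actual start of the validation cannot be derailed: one must check that the \rust{BLOCK.size()} guard in~\lineref{line:blockcheckval2} is satisfied, that the status observed under the lock is \rust{EXECUTED} (directly from the hypothesis, which holds at every time after $T$, including the lock-acquisition time), and that the resulting task survives the wrapping in \rust{next\_task} and is indeed returned in~\lineref{line:gettaskcall}. The two auxiliary pieces — monotonicity of \emph{validation\_idx} except through \rust{decrease\_validation\_idx}, and the use of~\claimref{clm:commit} plus~\definitionref{def:commit} to force the eventual crossing — are essentially routine bookkeeping.
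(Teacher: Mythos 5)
Your proposal is correct and follows essentially the same route as the paper: contradiction, the observation that \emph{validation\_idx} can only cross from $\leq j$ to $>j$ via the \emph{fetch\_and\_increment} in~\lineref{line:fival} (i.e.\ a pre-validation of $tx_j$ starts after $T$), and the \rust{EXECUTED} status forcing that pre-validation to launch an actual validation. The one detail you gloss over is why the thread performing that pre-validation necessarily reaches~\lineref{line:statlocknextval} and returns the task at all (threads are not assumed to keep taking steps in the safety argument); the paper closes this by noting that an ongoing pre-validation of $tx_j$ would pin the global commit index at $\leq j$ by~\definitionref{def:commit}, contradicting the assumption that all transactions are eventually committed.
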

\begin{proof}
Let us assume for contradiction that \emph{validation\_idx} is at most $j$ at some time $T' \geq T$.
Since all transactions are eventually committed and due to~\claimref{clm:commit}, \emph{validation\_idx} must have value $\rust{BLOCK.size()} > j$ at some time after $T'$.
The validation index is only incremented in~\lineref{line:fival}, which is by definition a start of pre-validation.
Therefore, transaction $tx_j$ must start pre-validation after $T'$, and pre-validation must finish due to~\definitionref{clm:commit} since all transactions are eventually committed.
By the claim assumption, transaction $tx_j$'s status is \rust{EXECUTED}, so by code (due to~\lineref{line:executed}), pre-validation finish must lead to a start of a validation of a version of $tx_j$, giving the desired contradiction.
\end{proof}

\begin{claim}
\label{clm:mustvalidate}
Suppose all transactions are eventually committed, and $i$ is the highest incarnation of transaction $tx_j$ such that version $v = (j, i)$ is executed.
Then, $v$ must start validation after~\lineref{line:finex} is performed in the execution of $v$. 
\end{claim}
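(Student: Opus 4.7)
The plan is to argue by contradiction: assume that no validation of $v=(j,i)$ ever starts after \lineref{line:finex}, and derive a contradiction by combining the assumption that $i$ is the highest executed incarnation of $tx_j$ with a case analysis of the remainder of \rust{finish\_execution}.

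First, I would establish a status-stability lemma: under the contradiction assumption, the status of $tx_j$ remains \rust{EXECUTED(i)} at all times after \lineref{line:finex} is performed. By \corollaryref{cor:order}, the only way the status can change from \rust{EXECUTED(i)} is through a validation of $v$ successfully aborting in \lineref{line:valaborting}, and any such abort is preceded by the validation of $v$ starting. Moreover, once it is \rust{EXECUTED(i)}, any validation task generated for $tx_j$ (either returned directly from \rust{finish\_execution} or picked up via \rust{next\_version\_to\_validate} reading the status in \lineref{line:statlocknextval}) will carry the version $(j,i)$, so ``a validation of some version of $tx_j$ starts'' coincides with ``validation of $v$ starts.''

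Next, I would do a three-way case split on the behavior of \rust{finish\_execution} after \lineref{line:finex}. If at the check in \lineref{line:valcompfin} we have \emph{validation\_idx} $> j$ and the previous incarnation did not write to a new location, then \lineref{line:retvaltask} returns a \rust{VALIDATION\_TASK} for $(j,i)$ to the caller, and by the definition in the ``Life of a Version'' subsection, validation of $v$ starts immediately, contradicting the assumption. Otherwise (either the check fails, meaning \emph{validation\_idx} $\le j$ already, or it succeeds with \emph{wrote\_new\_path} true, in which case \lineref{line:decvalcall2} calls \rust{decrease\_validation\_idx(j)} and sets \emph{validation\_idx} $\le j$), there is some time after \lineref{line:finex} at which \emph{validation\_idx} $\le j$. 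I would then invoke \claimref{clm:validxstatus} with $T$ set to the moment of \lineref{line:finex}: since the status is \rust{EXECUTED} at all times after $T$ (by the status-stability lemma) and \emph{validation\_idx} $\le j$ at some time after $T$, the contrapositive of \claimref{clm:validxstatus} forces a validation of some version of $tx_j$ to start after $T$, which as noted above must be a validation of $v$, again contradicting the assumption.

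The main obstacle I anticipate is getting the logical dependencies right: the status-stability step and the application of \claimref{clm:validxstatus} both implicitly rely on the contradiction hypothesis, so I must be careful to state the argument in the correct contrapositive form and to verify that the newly returned validation task in the third case really does constitute the start of validation of the specific version $v$ (and not merely some version of $tx_j$), which follows because \lineref{line:retvaltask} passes exactly the incarnation number $i$ that was just executed. Once those points are carefully handled, the remaining bookkeeping — matching incarnation numbers returned by \rust{next\_version\_to\_validate} against the frozen status \rust{EXECUTED(i)} — is routine.
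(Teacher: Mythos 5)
Your proposal is correct and follows essentially the same route as the paper's proof: both split on whether \rust{finish\_execution} returns the validation task at \lineref{line:retvaltask}, use \corollaryref{cor:order} for status stability of \rust{EXECUTED(i)}, and close the remaining case by observing that \emph{validation\_idx} $\le j$ after \lineref{line:valcompfin}/\lineref{line:decvalcall2} and invoking \claimref{clm:validxstatus}. The only cosmetic differences are your contradiction framing (the paper argues directly) and your extra care that the started validation carries version $(j,i)$ specifically; you could also note explicitly, as the paper does, that the execution of $v$ must finish (by \definitionref{def:commit} and \claimref{clm:commit}) so that the code after \lineref{line:finex} is actually reached.
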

\begin{proof}
The execution of version $v$ sets the status of transaction $tx_j$ to \rust{EXECUTED(i)} in~\lineref{line:finex}.
The execution of $v$ eventually finishes due to~\definitionref{def:commit} and~\claimref{clm:commit}, as transaction $tx_j$ eventually commits.
If a validation task is returned in~\lineref{line:retvaltask}, then a validation of version $v$ starts immediately after execution finishes.
Otherwise, by \corollaryref{cor:order}, the status of transaction $tx_j$ will remain \rust{EXECUTED(i)} unless it is updated to \rust{ABORTING(i)} by some validation of $v$, which also concludes the proof of the claim.
If the status remains \rust{EXECUTED(i)} and a validation task is not returned, then validation index has a value at most $j$ after the status update in~\lineref{line:finex} due to \lineref{line:valcompfin} and \lineref{line:decvalcall2}.
Then,~\claimref{clm:validxstatus} implies that a validation must start after~\lineref{line:finex} is performed.
\end{proof}

Next, we establish the correctness invariant of the committed transactions.
When we refer to a \emph{sequential run} of all transactions, we mean the execution of transaction $tx_0$, followed by the execution of transaction $tx_1$, etc, for all transactions in the block.
\begin{lemma}
\label{lem:comsafe}
After all transactions are committed, \module{MVMemory} contains exactly the paths written in the sequential run of all transactions. Moreover, a \emph{read} of a path from \module{MVMemory} with txn\_idx $= \rust{BLOCK.size()}$ returns the same value as the contents of the path after the sequential run.
\end{lemma}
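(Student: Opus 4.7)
The plan is to proceed by induction on the transaction index $j$, showing that once all transactions are committed, the last incarnation of $tx_j$ has a read-set and a write-set identical to those produced by running $tx_j$ in a sequential run of the block from the initial \module{Storage} state. Combined with the fact that \module{MVMemory} retains exactly the writes of each transaction's last incarnation (\rust{apply\_write\_set} overwrites the previous entry, and \rust{rcu\_update\_written\_locations} removes entries at locations no longer in the latest write-set), this yields both claims: the set of stored paths, and the value returned by a topmost \rust{read} per path, coincide with the sequential outcome.

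For the base case $j=0$, \module{MVMemory} contains no entries written by smaller indices, so every \rust{read} by the last incarnation of $tx_0$ returns \rust{NOT\_FOUND} and is resolved from \module{Storage} by \module{VM}, matching the sequential reads exactly. For the inductive step, suppose the statement holds for all $k<j$. By \definitionref{def:commit} applied at the final time and by \corollaryref{cor:order}, the last incarnation $i$ of $tx_j$ has its status set to \rust{EXECUTED(i)} and is never updated again. Using \claimref{clm:mustvalidate}, some validation of version $(j,i)$ must have started after \lineref{line:finex}; and since no subsequent abort of $(j,i)$ occurred (again by \corollaryref{cor:order}), this validation must have found \rust{validate\_read\_set} returning \emph{true}.

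At the time this successful validation performed its re-reads, \module{MVMemory} for indices $k<j$ reflected the writes of the already-committed last incarnations of $tx_0, \ldots, tx_{j-1}$: by \claimref{clm:commit} these transactions remain committed, so no further writes, aborts, or \rust{ESTIMATE} conversions on their behalf can occur; and any \rust{ESTIMATE} from a non-final incarnation of such a $tx_k$ is either overwritten by \rust{apply\_write\_set} of its last incarnation or removed by \rust{rcu\_update\_written\_locations} before $tx_k$'s final \rust{EXECUTED} status is set. By the inductive hypothesis, these \module{MVMemory} contents coincide with the state produced by sequentially executing $tx_0, \ldots, tx_{j-1}$. Since \rust{read} returns the value written by the highest index $<j$ that wrote to the given location, each re-read during $tx_j$'s validation yields precisely the value the sequential run would deliver; and since the validation succeeded, each original read in the last incarnation of $tx_j$ returned the same version (hence the same value, by \corollaryref{cor:order}), making the read-set and deterministically the write-set of that incarnation equal to the sequential ones. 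A final \rust{read} with \emph{txn\_idx} $= \rust{BLOCK.size()}$ then returns, for each path, the value written by the highest-indexed transaction that wrote to it, matching the sequential run.

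The main obstacle will be the timing argument that justifies identifying \module{MVMemory}'s contents for lower indices with the sequential state at the instant of $tx_j$'s validation. One must rule out, using \claimref{clm:commit} together with the \rust{ESTIMATE}-clearing behaviour of \rust{apply\_write\_set} and \rust{rcu\_update\_written\_locations}, scenarios where a validation re-read might silently observe a stale \rust{ESTIMATE} marker that nevertheless passes version equality, or where spurious paths from an aborted incarnation remain after commit. Carefully characterising which (transaction, location) entries survive in \module{MVMemory} at commit time, and verifying that no \rust{ESTIMATE} marker corresponds to a committed transaction, is the subtle step on which the whole argument rests.
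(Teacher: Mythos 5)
Your overall skeleton matches the paper's: induct on the transaction index, use \claimref{clm:mustvalidate} to obtain a validation of the final incarnation that must succeed, and transfer correctness from the validation-time re-reads back to the execution-time reads via version equality. However, there is a genuine gap at exactly the step you flag as ``the main obstacle,'' and the way you currently bridge it is unsound. You assert that at the time the successful validation performs its re-reads, \module{MVMemory} for indices $k<j$ already ``reflect[s] the writes of the already-committed last incarnations'' because ``by \claimref{clm:commit} these transactions remain committed, so no further writes, aborts, or \rust{ESTIMATE} conversions on their behalf can occur.'' \claimref{clm:commit} only says that commitment is monotone in time; it does not say that $tx_0,\dots,tx_{j-1}$ are committed \emph{before} the validation of $tx_j$'s final incarnation re-reads. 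In general they need not be: a validation of $(j,i)$ can run while lower transactions are still being aborted and re-executed, so the premise of your timing argument fails and the rest of the inductive step does not go through as written.

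The paper closes this gap differently, and this is the missing idea: it takes $\alpha$ to be the validation of $v_{k+1}$ that starts \emph{last}, and then argues by contradiction that nothing relevant can change after $\alpha$'s reads. If $\alpha$ read version $v_p$ of some $tx_{j}$ at path $p$ and that was not $tx_j$'s final execution, a later abort would mark $p$ as an \rust{ESTIMATE} and drive \emph{validation\_idx} down to $\leq j$ in~\lineref{line:decvalcall1}; by \claimref{clm:validxstatus} (with $tx_{k+1}$'s status stuck at \rust{EXECUTED}) this would force yet another validation of $tx_{k+1}$ to start after $\alpha$, contradicting maximality of $\alpha$. Symmetrically, if the sequential run would have $tx_{k+1}$ read from some higher $tx_{j'}$ whose entry $\alpha$ did not observe, the final execution of $tx_{j'}$ must later write $p$ as a new location and call \rust{decrease\_validation\_idx} in~\lineref{line:decvalcall2}, again forcing a post-$\alpha$ validation. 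Note that this second direction (a write that is \emph{missing} at read time, not just a stale one) is entirely absent from your argument, yet it is needed for the ``contains exactly the paths written in the sequential run'' half of the lemma. To repair your proof you should replace the ``lower transactions are already committed'' premise with this last-validation-plus-\claimref{clm:validxstatus} contradiction argument.
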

\begin{proof}
Suppose all transactions are eventually committed.
Since initial status for each transaction is \rust{READY\_TO\_EXECUTE}, while~\definitionref{clm:commit} requires status \rust{EXECUTED}, by the code, for each transaction $tx_j$ the version $(j,0)$ must start executing at some point. 
Also, due to the commit definition and~\claimref{clm:commit}, all executions that start must finish (in order for the transactions to eventually be committed).
In fact, by~\claimref{clm:commit} the total number of executions, validations and pre-validations must be finite and they must all finish.
For each transaction index $j$, let $m_j$ the the highest incarnation for which there is an execution of version $(j, m_j)$. 
By~\corollaryref{cor:order}, among the versions of transaction $tx_j$ that are executed, version $(j, m_j)$ is executed last.
We show by induction on $j$ that the execution of version $(j, m_j)$ reads the same paths and values from~\module{MVMemory} as the execution of transaction $tx_j$ would during the sequential run. Thus, at the end of version $(j, m_j)$ execution, all entries with transaction index $j$ in~\module{MVMemory} also correspond to the same paths and contain the same values as the write-set in the sequential run.

The base case holds because every read with $\emph{txn\_idx} = 0$ reads from storage.
Next, suppose the inductive claim holds for transactions $tx_0, \ldots, tx_k$. 
By~\claimref{clm:mustvalidate}, version $v_{k+1} = (k+1, m_{k+1})$ is validated at least once after~\lineref{line:finex} is performed during $v_{k+1}$'s (unique, by~\corollaryref{cor:order}) execution. 
Any validation of $v_{k+1}$ that starts also finishes in order for the global commit index to reach values above $k+1$.
Finally, no validation of version $v_{k+1}$ may abort, as this would set \emph{txn\_status[k+1]} to an \rust{ABORTING} status and prevent global commit index from reaching \rust{BLOCK.size()} without another incarnation of transaction $tx_{k+1}$, contradicting the maximality of $m_{k+1}$. 
Therefore, we only need to show that a value read at any access path during the validation of $v_{k+1}$ is the same as in the sequential run of transaction $tx_{k+1}$. Then, since the validation must succeed, the execution of $v_{k+1}$ must have read the same values, and produced a compatible output to the sequential run, proving the inductive step.

Let $\alpha$ be the validation of $v_{k+1}$ that starts last.
Let $p$ be any path read during $\alpha$, and let $v_p$ be the corresponding version observed during the \rust{validate\_read\_set} invocation that returned \emph{true} (if the read returned a \rust{READ\_ERROR} in~\lineref{line:mvalidest} then $\alpha$ would fail).
If $v_p = \bot$, then validation $\alpha$, and the corresponding execution of version $v_{k+1}$ both read from storage.
If $v_p$ is a version of some transaction $tx_j$, since \module{MVMemory} only reads values from lower transactions, we have $j < k+1$.
Version $v_p$ is written during a \rust{record} call invoked in~\lineref{line:recordcall} during an execution that sets the status of transaction $tx_j$ to an \rust{EXECUTED} status before finishing.
We show this must have been the last execution of $tx_j$ using a proof by contradiction.
Otherwise,
by~\corollaryref{cor:order}, a validation $\beta$ of the same version of $tx_j$ must follow and abort.
Thus, by code, before finishing, $\beta$ marks path $p$ as an \rust{ESTIMATE}, after it is read by $\alpha$. The \emph{validation\_idx} is then ensured to be at most $j$ in~\lineref{line:decvalcall1} in $\beta$, contradicting~\claimref{clm:validxstatus} (Due to~\claimref{clm:mustvalidate} the status of transaction $tx_{k+1}$ is set to \rust{EXECUTED(m_{k+1})} in~\lineref{line:finex} during the execution of $v_{k+1}$, before $\alpha$ starts. 
Since no validation of $v_{k+1}$ aborts, by~\corollaryref{cor:order}, \emph{txn\_status[k+1]} never changes from \rust{EXECUTED}).


Hence, if $v_p$ is a version of $tx_j$, then the value read from $p$ is in fact the value written at path $p$ during the execution of the last $tx_j$'s version $(j, m_j)$. By the induction hypothesis, this is the same value that transaction $tx_j$ writes at $p$ in the fully sequential run. To finish the proof, suppose for contradiction that in the sequential run transaction $tx_{k+1}$ reads a value written by transaction $tx_{j'}$ with $j' > j$.
The validation $\alpha$ did not observe any entry from $j'$ at path $p$, not even an \rust{ESTIMATE}. However, by induction hypothesis, during the execution of version $(j', m_{j'})$ the same value as in the sequential run must be written to path $p$. Therefore, after a read by $\alpha$, there is an execution of a version of transaction $tx_{j'}$ that sets \emph{wrote\_new\_path} to \emph{true} due to $p$ and decreases validation index by calling~\lineref{line:decvalcall2}. This again contradicts our assumption about $\alpha$ and completes the proof, as the argument when $v_p = \bot$ instead of $v_p = (j, m_j)$ is analogous.
%
\end{proof}
\subsubsection{Number of Active Tasks}
\label{sec:numtasks}
What is left is to show is the safety of the \rust{check\_done} mechanism for determining when the transactions are committed. 
The key is to understand the role of the \emph{num\_active\_tasks} variable in the \module{Scheduler} module.
The \emph{num\_active\_tasks} is initialized to $0$ and incremented in~\lineref{line:taskexinc} and~\lineref{line:taskvalinc}.
The increment in~\lineref{line:taskvalinc} is accounting for the pre-validation that starts with a \emph{fetch-and-increment} in the following line (\lineref{line:fival}).
The 
\emph{num\_active\_tasks} is decremented in~\lineref{line:taskvaldec} if no validation task corresponding to the fetched index is created.
Otherwise, pre-validation leads to a the start of a validation, and \emph{num\_active\_tasks} is decremented immediately after the validation finishes, in~\lineref{line:valtaskdec} (unless an execution task is created for the caller).
The logic for execution tasks is analogous, with one difference that an execution can also finish in~\lineref{line:depaborting}, in which case \emph{num\_active\_tasks} is decremented shortly after, in~\lineref{line:deptaskdec}.
When \rust{finish\_execution} or \rust{finish\_validation} functions return a new task to the caller, \emph{num\_active\_tasks} is left unchanged (instead of incrementing and decrementing that cancel out). 
It follows that \emph{num\_active\_tasks} is always non-negative. The following auxiliary claims establish useful properties of when the value becomes $0$.

\begin{claim}
\label{clm:exidxstatus}
Suppose the status of transaction $tx_j$ was set to \rust{READY\_TO\_EXECUTE} at time $T$, and did not change until a later time $T'$. 
If execution index was at most $j$ at some time between $T$ and $T'$, then either num\_active\_tasks $>0$ or execution\_idx $\leq j$ at time $T'$. 
\end{claim}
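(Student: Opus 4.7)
The plan is to case-split on the value of \emph{execution\_idx} at time $T'$, and in the nontrivial case to identify a thread whose pending execution task keeps \emph{num\_active\_tasks} positive, using the status monotonicity in~\figureref{fig:status} to derive the contradiction that would otherwise occur.

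If $\emph{execution\_idx} \leq j$ at $T'$, the claim holds. So assume $\emph{execution\_idx} > j$ at $T'$. Since \emph{execution\_idx} is modified only by the atomic fetch\_and\_increment in~\lineref{line:fiex}, which raises it by one, and by~\lineref{line:decexidx}, which only lowers it, I would let $T_1 \in [T, T']$ be the last time in this interval at which $\emph{execution\_idx} \leq j$; it exists by hypothesis, and $T_1 < T'$ in this case. The atomic step just after $T_1$ that pushes the counter past $j$ cannot be a decrease, so it is a fetch\_and\_increment; by atomicity there is a unique thread $t$ that performed it and read the old value $j$. By the code of \rust{next\_version\_to\_execute}, $t$ already incremented \emph{num\_active\_tasks} in~\lineref{line:taskexinc} before this fetch\_and\_increment, and then invokes \rust{try\_incarnate($j$)} in~\lineref{line:tryinccall}. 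Since $j < \rust{BLOCK.size()}$ (the hypothesis about \emph{txn\_status[$j$]} is otherwise vacuous), $t$ proceeds to acquire \emph{txn\_status[$j$].lock()} in~\lineref{line:statlockincarn}.

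Now I would tie $t$'s lifecycle to the status hypothesis. If $t$ completed the check in~\lineref{line:ready} before time $T'$, then because \emph{txn\_status[$j$].status} remains \rust{READY\_TO\_EXECUTE} throughout $[T, T']$ by hypothesis, the check succeeds and $t$ executes~\lineref{line:executing}, updating the status to \rust{EXECUTING} before $T'$; this contradicts the hypothesis that the status did not change until $T'$. Hence $t$ has not yet reached~\lineref{line:executing} by $T'$; in particular, $t$ has not performed the paired decrement in~\lineref{line:taskexdec} (taken only when~\lineref{line:ready} fails), and the later decrement sites~\lineref{line:deptaskdec} and~\lineref{line:extaskdec} are unreachable without first flipping the status to \rust{EXECUTING}. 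So $t$'s increment of \emph{num\_active\_tasks} is still outstanding at $T'$, yielding $\emph{num\_active\_tasks} \geq 1 > 0$.

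The main obstacle is the bookkeeping needed to pin down the unique fetch\_and\_increment that crosses $j$: I must argue that \rust{decrease\_execution\_idx} cannot be responsible for the crossing, and that the old value read by $t$ is exactly $j$ (rather than any value $\leq j$), so that the subsequent \rust{try\_incarnate($j$)} call is the one constrained by the status hypothesis. Once this thread is isolated, the monotonic status invariant from~\corollaryref{cor:order} forces the contradiction without further effort.
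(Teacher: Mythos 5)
Your proof is correct and follows essentially the same route as the paper's: both identify the \emph{fetch\_and\_increment} in~\lineref{line:fiex} that returns $j$, observe that the matching increment of \emph{num\_active\_tasks} in~\lineref{line:taskexinc} precedes it, and argue that the only path to the corresponding decrement goes through the status check in~\lineref{line:ready}, which would flip the status to \rust{EXECUTING} and contradict the hypothesis. Your version is merely a direct (rather than by-contradiction) phrasing, with a slightly more explicit isolation of the crossing step at $T_1$; the substance is the same.
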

\begin{proof}
Let as assume for contradiction that at time $T'$\\ \emph{num\_active\_tasks} is $0$ and \emph{execution\_idx} is strictly larger than $j$, but that at some time between $T$ and $T'$, the execution index was at most $j$.
Since execution index reaches a value larger than $j$ by time $T'$, a \emph{fetch-and-increment} operation must have been performed in~\lineref{line:fiex} between $T$ and $T'$, returning $j$.
The \emph{num\_active\_tasks} counter is incremented in the previous line, in~\lineref{line:taskexinc} (this is very similar to the increment to account for pre-validation, while here it is an analogous pre-execution stage). Since the status is of transaction $tx_j$ remains \rust{READY\_TO\_EXECUTE} until $T'$, the only way to reduce \emph{num\_active\_tasks} to $0$ at time $T'$ is to perform the corresponding decrement, which by code, would occur only after an execution of a version of transaction $tx_j$ (due to~\lineref{line:ready}).
However, before an execution finishes (and then \emph{num\_active\_tasks} is decremented), it must perform~\lineref{line:ready} and since the status of transaction $tx_j$ is \rust{READY\_TO\_EXECUTE}, it must update the status to \rust{EXECUTING} in~\lineref{line:executing}, giving the desired contradiction with assumption in the claim. 
\end{proof}

\begin{lemma}
\label{lem:doneindex}
Suppose execution\_idx $\geq \rust{BLOCK.size()}$, validation\_idx $\geq \rust{BLOCK.size()}$ and num\_active\_tasks is $0$ simultaneously at time $T$. Then, all transactions are committed at time $T$.
\end{lemma}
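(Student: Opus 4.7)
The plan is to verify that each quantity in~\definitionref{def:commit} is at least $\rust{BLOCK.size()}$ at time $T$. The \emph{validation\_idx} quantity is immediate from the hypothesis. For the pre-validation and ongoing execution/validation categories, I would invoke the bookkeeping summarized in~\sectionref{sec:numtasks}: each pre-validation, execution, and validation first increments~\emph{num\_active\_tasks} (\lineref{line:taskexinc}, \lineref{line:taskvalinc}) and decrements it only after finishing (\lineref{line:taskexdec}, \lineref{line:taskvaldec}, \lineref{line:extaskdec}, \lineref{line:valtaskdec}, \lineref{line:deptaskdec}), so \emph{num\_active\_tasks} $= 0$ at $T$ forces these three categories to be vacuous. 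The remaining task is therefore to argue that every transaction's status is $\rust{EXECUTED}$ at $T$.

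Suppose otherwise and let $j^*$ be the smallest index whose status is not $\rust{EXECUTED}$ at $T$. By~\corollaryref{cor:order}, its status is one of $\rust{EXECUTING}$, $\rust{READY\_TO\_EXECUTE}$, or $\rust{ABORTING}$. The $\rust{EXECUTING}$ case is immediate: by~\corollaryref{cor:order}, the responsible thread has not yet reached~\lineref{line:finex} or~\lineref{line:depaborting}, so its execution is in flight and \emph{num\_active\_tasks} $> 0$. The case where $\rust{ABORTING}$ was set at~\lineref{line:valaborting} is likewise immediate, as that transition is only reachable inside a still-ongoing $\rust{needs\_reexecution}$ call that would restore the status at~\lineref{line:readycallval} before decrementing~\emph{num\_active\_tasks}.

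For the $\rust{READY\_TO\_EXECUTE}$ case, the plan is to apply~\claimref{clm:exidxstatus} by exhibiting a time at which \emph{execution\_idx} was $\leq j^*$ while the status already had its current $\rust{READY\_TO\_EXECUTE}$ value. If the incarnation is $0$, this holds at initialization since \emph{execution\_idx} starts at $0$. Otherwise, let $T_0$ be the latest $\rust{set\_ready\_status}$ call for $tx_{j^*}$: if it came from $\rust{resume\_dependencies}$, the subsequent $\rust{decrease\_execution\_idx}$ at~\lineref{line:depexdec} is invoked with an argument $\leq j^*$; if it came from $\rust{finish\_validation}$ along the aborted branch, the check at~\lineref{line:exidxfincheck} must have observed \emph{execution\_idx} $\leq j^*$, for otherwise the subsequent $\rust{try\_incarnate}$ (or a concurrent one from $\rust{next\_version\_to\_execute}$) would promote the status to $\rust{EXECUTING}$, contradicting the choice of $T_0$ as the latest such assignment. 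Either branch supplies the precondition of~\claimref{clm:exidxstatus}, which then collides with $\emph{num\_active\_tasks} = 0$ and $\emph{execution\_idx} \geq \rust{BLOCK.size()}$ at $T$.

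The hardest case, and the main obstacle, is $\rust{ABORTING}$ set at~\lineref{line:depaborting} inside $\rust{add\_dependency}$ with some blocking index $k < j^*$; the plan is to derive a contradiction to the minimality of $j^*$ by showing that $tx_k$'s status at $T$ is also non-$\rust{EXECUTED}$. The key is the lock discipline: $\rust{add\_dependency}$ acquires the dependency lock at~\lineref{line:lockdep} and only inserts $tx_{j^*}$ at~\lineref{line:adddepid} if the interior check at~\lineref{line:depexcheck} finds $tx_k$'s status different from $\rust{EXECUTED}$, whereas $\rust{finish\_execution}$ sets $tx_k$'s status to $\rust{EXECUTED}$ at~\lineref{line:finex} \emph{before} acquiring the same dependency lock at~\lineref{line:deplockswap}. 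Using this ordering, any $\rust{finish\_execution}$ for $tx_k$ that completes after $tx_{j^*}$'s insertion must include $tx_{j^*}$ in its swap and thereby restore $tx_{j^*}$'s status through $\rust{resume\_dependencies}$, which contradicts $tx_{j^*}$'s $\rust{ABORTING}$ status at $T$. Hence no such $\rust{finish\_execution}$ has completed since the insertion, so if $tx_k$ were $\rust{EXECUTED}$ at $T$ then a corresponding $\rust{finish\_execution}$ would still be in flight between~\lineref{line:finex} and~\lineref{line:deplockswap}, forcing \emph{num\_active\_tasks} $> 0$ and contradicting the hypothesis. Therefore $tx_k$ has a non-$\rust{EXECUTED}$ status at $T$, contradicting the minimality of $j^*$ and finishing the proof.
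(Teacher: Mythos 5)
Your proposal is correct and follows essentially the same route as the paper's proof: reduce the lemma to showing every status is \rust{EXECUTED} at $T$, use the \emph{num\_active\_tasks} accounting to rule out ongoing pre-validations, executions and validations, and then do a minimal-counterexample case analysis on the status, invoking \claimref{clm:exidxstatus} for the \rust{READY\_TO\_EXECUTE} case and the dependency-lock/status-lock ordering between \lineref{line:depexcheck}--\lineref{line:adddepid} and \lineref{line:finex}--\lineref{line:deplockswap} for the \rust{ABORTING}-via-\rust{add\_dependency} case. The only differences are cosmetic (you phrase the dependency case as directly showing $tx_k$ is non-\rust{EXECUTED} at $T$ rather than assuming the contrary, and you explicitly cover the incarnation-$0$ base of the \rust{READY\_TO\_EXECUTE} case, which the paper leaves implicit).
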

\begin{proof}
As \emph{num\_active\_tasks} is $0$ at time $T$, there may not be an ongoing pre-validation, validation or execution at time $T$.
This is because an increment corresponding of \emph{num\_active\_tasks} always occurs before the start, while the decrement always occurs after the finish of the corresponding pre-validation, validation or execution. 
Next, we will prove that for any transaction index $j$, $\module{Scheduler}.\emph{txn\_status[j].status} = \rust{EXECUTED}$ at time $T$.
Then, by~\definitionref{def:commit}, the global commit index is equal to the \emph{validation\_index}, which is at least \rust{BLOCK.size()}, meaning that all transactions are committed at time $T$.
 
In the following, we prove by contradiction that all transactions must have an \rust{EXECUTED} status at time $T$. Suppose $j$ is the smallest index of a transaction with a non-\rust{EXECUTED} status. Consider three cases:
\begin{itemize}
    \item $\module{Scheduler}.\emph{txn\_status[j].status} = \rust{READY\_TO\_EXECUTE}$. We consider the time when the \rust{READY\_TO\_EXECUTE} status was last set for transaction $tx_j$ in~\lineref{line:ready}. This is due to a call either in~\lineref{line:depfree} or in~\lineref{line:readycallval}.
    \begin{itemize}
        \item Call in~\lineref{line:depfree}: there is an ongoing execution, which must finish in order for \emph{num\_active\_tasks} to be $0$ at time $T$.
        Before finishing,\\ the \rust{decrease\_execution\_idx} invocation in~\lineref{line:depexdec} ensures that the execution index has a value at most $j$. 
        Thus, by \claimref{clm:exidxstatus}, the execution index is at most $j$ at time $T$. A contradiction.
        
        \item Call in~\lineref{line:readycallval}: there is an ongoing validation which must finish in order for \emph{num\_active\_tasks} to be $0$ at time $T$.
        Before finishing, \emph{execution\_idx} must be observed in~\lineref{line:exidxfincheck} to be strictly higher than $j$, or we would get a contradiction with~\claimref{clm:exidxstatus}.
        But then, \rust{try\_incarnate} must be called in~\lineref{line:finincarnate}, which by code, would observe \rust{READY\_TO\_EXECUTE} status and update it to \rust{EXECUTING}, contradicting the status at time $T$. 
    \end{itemize}
    
    \item $\module{Scheduler}.\emph{txn\_status[j].status} = \rust{EXECUTING}$. By~\corollaryref{cor:order} and the definition of execution, 
    there must be an ongoing execution at time $T$ (of a version of $tx_j$ by the thread that set the status), which we already showed is impossible.
    
    \item $\module{Scheduler}.\emph{txn\_status[j].status} = \rust{ABORTING}$. Let $T'$ be the time when the \rust{ABORTING} status was last set for transaction $tx_j$, which can be in~\lineref{line:depaborting} or in~\lineref{line:valaborting}.
    \begin{itemize}
        \item call in~\lineref{line:depaborting} in an \rust{add\_dependency} invocation: in this case, \emph{txn\_idx} must be $j$ and the thread must be holding a lock on the status of a \emph{blocking\_txn\_idx}, which we will call $j'$.
        Because \module{MVMemory} only reads entries, including an \rust{ESTIMATE}, from lower transactions, and reading an \rust{ESTIMATE} is required for calling the \rust{add\_dependency} function, we have $j' < j$. Since~\lineref{line:depaborting} was performed, due to the check in~\lineref{line:depexcheck}, the status of transaction $tx_{j'}$  cannot be \rust{EXECUTED}, but by the minimality of $tx_j$ it must be \rust{EXECUTED} at time $T$.
        Therefore, an execution of a version of $tx_{j'}$ must invoke~\lineref{line:finex} between times $T'$ and $T$.
        This execution must finish in order for \emph{num\_active\_tasks} to be $0$ at time $T$, meaning that \rust{resume\_dependencies} invocation in~\lineref{line:resumecall} must be completed before $T$.
        However, due to locks, $tx_j$ is now a dependency of $tx_{j'}$, and this \rust{resume\_dependencies} invocation must update the status of transaction $tx_j$ to\\ \rust{READY\_TO\_EXECUTE} due to the call in~\lineref{line:depfree}, contradicting the status at time $T$. 
        
        \item call in~\lineref{line:valaborting}: there is an ongoing validation which must finish in order for \emph{num\_active\_tasks} to be $0$ at time $T$. Before finishing, the status must be updated to \rust{READY\_TO\_EXECUTE} due to the call in~\lineref{line:readycallval}, contradicting the status at time $T$.
        \qedhere
    \end{itemize}
\end{itemize}
\end{proof}

\subsubsection{Safety Guarantees}


\begin{lemma}
\label{lem:deccnt}
Let time $T$ be right before the operation in~\lineref{line:decexidx} or operation in~\lineref{line:decvalidx} by thread $t$ takes effect. 
Suppose num\_active\_tasks is $0$ at some time $T' \geq T$. Then, thread $t$ must have incremented decrease\_cnt (in~\lineref{line:deccntex} or in~\lineref{line:deccntval}) between times $T$ and $T'$. 
\end{lemma}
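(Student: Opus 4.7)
The plan is to trace the call context of the two decrease procedures and exploit the monotonicity properties of \emph{num\_active\_tasks}. Both \rust{decrease\_execution\_idx} and \rust{decrease\_validation\_idx} are invoked only from inside \rust{finish\_execution} (via \rust{resume\_dependencies} at~\lineref{line:resumecall}, or directly at~\lineref{line:decvalcall2}) and \rust{finish\_validation} (at~\lineref{line:decvalcall1}). Thread $t$ enters these outer procedures while processing an active execution or validation task whose corresponding increment of \emph{num\_active\_tasks} (in~\lineref{line:taskexinc} or~\lineref{line:taskvalinc}) has not yet been matched by a decrement at time $T$.

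First I would note that in both decrease procedures the increment of \emph{decrease\_cnt} (\lineref{line:deccntex},~\lineref{line:deccntval}) is the immediate next operation after the atomic min (\lineref{line:decexidx},~\lineref{line:decvalidx}) and strictly precedes the return to the caller. Second I would argue that thread $t$'s outstanding unit in \emph{num\_active\_tasks} cannot be removed until \emph{after} the decrease procedure has returned: by inspection of \rust{finish\_execution} and \rust{finish\_validation}, every operation that could lower the counter, namely the explicit decrements in~\lineref{line:extaskdec} and~\lineref{line:valtaskdec}, or the transferred responsibility when a task is returned in~\lineref{line:retvaltask} or~\lineref{line:finvalextask}, occurs strictly after control returns from the decrease procedure.

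Combining these observations with the bookkeeping invariant discussed in \sectionref{sec:numtasks}, which ensures \emph{num\_active\_tasks} is non-negative and each accounted increment is matched by at most one decrement, gives the conclusion: if \emph{num\_active\_tasks} is $0$ at some time $T' \geq T$, then thread $t$'s unit must have been released by $T'$; hence thread $t$ must have already exited the enclosing decrease procedure by $T'$, and in particular must have executed the \emph{decrease\_cnt} increment somewhere in $[T, T']$.

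The main obstacle, I expect, is handling the optimization where \rust{finish\_execution} or \rust{finish\_validation} returns a new task to the caller (in~\lineref{line:retvaltask} or~\lineref{line:finvalextask}) instead of explicitly decrementing \emph{num\_active\_tasks}. In that case thread $t$'s unit is not released but transferred to the next task thread $t$ handles inside the same \rust{run()} iteration (\lineref{line:tryexeccall},~\lineref{line:reexeccall}). A short induction along this chain of transferred tasks suffices: each transfer, and the eventual final decrement, still happens strictly after the earlier \emph{decrease\_cnt} increment by thread $t$, so whenever the unit is finally released the ordering $T \leq (\text{increment time}) \leq T'$ is preserved, as required.
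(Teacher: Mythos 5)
Your proposal is correct and follows essentially the same route as the paper: both arguments observe that the decrease procedures run inside an ongoing execution or validation whose unit in \emph{num\_active\_tasks} is not released until after the procedure returns, and the \emph{decrease\_cnt} increment precedes that return, so the increment must fall in $[T,T']$ whenever the counter reaches $0$ at $T'$. Your explicit handling of the task-transfer optimization is a slightly more detailed rendering of what the paper absorbs into its definition of when a task ``finishes'' and the accounting of \sectionref{sec:numtasks}, but it is the same argument.
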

\begin{proof}
Performing \lineref{line:decexidx} as a part of\\ \rust{decrease\_execution\_idx} reduces \emph{execution\_idx} to the minimum of \emph{execution\_idx} and \emph{target\_idx}, while performing~\lineref{line:decvalidx} as a part of \rust{decrease\_validation\_idx} is similar for the \emph{validation\_idx}. The \rust{decrease\_execution\_idx} procedure is invoked only in~\lineref{line:depexdec} as a part of an ongoing execution, and accounting for this execution, \emph{num\_active\_task} must be at least $1$ during the whole invocation. Hence, in order for \emph{num\_active\_tasks} to become $0$, it must be decremented after the execution completes. Thus, $t$ must first complete \rust{decrease\_execution\_idx}, which includes performing~\lineref{line:deccntex}.

The \rust{decrease\_validation\_idx} procedure is invoked either as a part of validation that aborts, or as a part of execution when \emph{wrote\_new\_path} is \emph{true} in \rust{finish\_execution}. In both cases, \emph{num\_active\_tasks} is at least $1$ accounting for the ongoing validation or execution, since both finish after \rust{decrease\_validation\_idx} invocation completes.
Hence, in order for \emph{num\_active\_tasks} to become $0$, by code, $t$ must decrement it after it finishes execution of validation.
However, before doing so, it must perform~\lineref{line:deccntval} and return from the \rust{decrease\_validation\_idx} invocation.
\end{proof}

\begin{theorem}
\label{thm:safe}
If a thread joins after invoking the \rust{run} procedure, then all transactions are necessarily committed at that time.
\end{theorem}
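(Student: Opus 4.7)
The plan is to trace the only path by which a thread can exit the main loop of \rust{run()} and then reduce the problem to the key auxiliary results already established (\lemmaref{lem:doneindex}, \claimref{clm:commit}, \lemmaref{lem:deccnt}). A thread joins only after its \rust{run()} invocation returns, which requires the condition \module{Scheduler}\rust{.done()} in~\lineref{line:spawnloop} to evaluate to \emph{true}. By inspecting \rust{done()}, this happens only if \emph{done\_marker} has been set to \emph{true}, and the only write that does so is~\lineref{line:setdone}, performed inside \rust{check\_done} after the conjunction in~\lineref{line:donecheck} is satisfied. Since \emph{done\_marker} is monotonically ``on'', using~\claimref{clm:commit} it suffices to show that at the time~\lineref{line:setdone} was performed, all transactions were already committed.

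Let $T_{\text{done}}$ be the time at which some thread $t$ set \emph{done\_marker} to \emph{true}. Let $T_1$ be the time at which $t$ read \emph{observed\_cnt} $\gets$ \emph{decrease\_cnt} at the start of \rust{check\_done}, and let $T_2, T_3, T_4, T_5$ be the times at which $t$ subsequently read \emph{execution\_idx}, \emph{validation\_idx}, \emph{num\_active\_tasks}, and the second read of \emph{decrease\_cnt}, respectively, all inside~\lineref{line:donecheck}; thus $T_1 < T_2 \le T_3 \le T_4 < T_5 \le T_{\text{done}}$. By the condition being satisfied we have that \emph{execution\_idx} $\ge$ \rust{BLOCK.size()} at $T_2$, \emph{validation\_idx} $\ge$ \rust{BLOCK.size()} at $T_3$, \emph{num\_active\_tasks} $= 0$ at $T_4$, and \emph{decrease\_cnt} has the same value at $T_5$ as at $T_1$.

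The central step is a double-collect argument: I will show that at time $T_4$, all three ``snapshot'' conditions of \lemmaref{lem:doneindex} hold simultaneously. Since \emph{decrease\_cnt} is monotonically non-decreasing and coincides at $T_1$ and $T_5$, it is constant throughout $[T_1, T_5]$. Suppose for contradiction that either \emph{execution\_idx} or \emph{validation\_idx} was decreased to a value below \rust{BLOCK.size()} at some intermediate time $T \in [T_2, T_4]$. Such a decrease can only occur in~\lineref{line:decexidx} or~\lineref{line:decvalidx}, so \lemmaref{lem:deccnt} applied with $T' = T_4$ (where \emph{num\_active\_tasks} is zero) forces \emph{decrease\_cnt} to be incremented between $T$ and $T_4$, and hence strictly before $T_5$, contradicting its constancy. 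Otherwise, \emph{execution\_idx} and \emph{validation\_idx} only move through \emph{fetch\_and\_increment} operations, which increase them, so both remain $\ge$ \rust{BLOCK.size()} at $T_4$.

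With all three conditions simultaneously holding at $T_4$, \lemmaref{lem:doneindex} gives that every transaction is committed at $T_4$. Applying \claimref{clm:commit} lifts this to every time $T \ge T_4$, and in particular to the time at which the joining thread returns from \rust{run()}, completing the proof. I expect the main obstacle to be executing the double-collect argument cleanly, specifically the precise use of \lemmaref{lem:deccnt} to rule out in-window decreases from partial \rust{decrease\_execution\_idx} or \rust{decrease\_validation\_idx} invocations whose \emph{decrease\_cnt} increment has not yet taken effect by the time $t$ reads \emph{num\_active\_tasks}; every other step is a direct invocation of a previously established lemma.
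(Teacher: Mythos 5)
Your proposal is correct and follows essentially the same route as the paper's proof: reduce joining to the setting of \emph{done\_marker} in~\lineref{line:setdone}, use the double-collect on the monotone \emph{decrease\_cnt} together with \lemmaref{lem:deccnt} to show that \emph{execution\_idx}, \emph{validation\_idx}, and \emph{num\_active\_tasks} satisfy the required conditions simultaneously at the moment \emph{num\_active\_tasks} is read to be $0$, and then conclude via \lemmaref{lem:doneindex}. Your explicit final appeal to \claimref{clm:commit} to carry commitment forward to the actual join time is a minor tightening the paper leaves implicit, not a different argument.
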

\begin{proof}
The threads return from the \rust{run} invocation when they observe a $\emph{done\_marker}=\emph{true}$ in~\lineref{line:donemark}.
The \emph{done\_marker} is set to \emph{true} in~\lineref{line:setdone} after observing that $\emph{validation\_idx} \geq \rust{BLOCK\_SIZE}$, $\emph{execution\_idx} \geq \rust{BLOCK\_SIZE}$ and \emph{num\_active\_tasks} is $0$. These checks are not performed atomically, but instead a double-collect mechanism is used on the \emph{decrease\_count} variable, which is a monotonically non-decreasing counter. In particular, \rust{check\_done} confirms that \emph{decrease\_count} did not change (increase) while \emph{execution\_idx}, \emph{validation\_index} and \emph{num\_active\_tasks} were read.

Since a thread joined, \emph{decrease\_count} did not increase while it first observed \emph{execution\_idx} to be at least \rust{BLOCK\_SIZE} at time $T_1$, then observed \emph{validation\_idx} to be at least \rust{BLOCK\_SIZE} at time $T_2 > T_1$, and finally observed \emph{num\_active\_tasks} to be $0$ at time $T_3 > T_2$. 
We show by contradiction that \emph{num\_active\_tasks} was $0$ and \emph{execution\_idx} and \emph{validation\_idx} were still at least \rust{BLOCK.size()} simultaneously at $T_3$.
Assume by contradiction that $T_3$ does not have this property.
Thus, \emph{execution\_idx} must be decreased between $T_1$ and $T_3$ or \emph{validation\_idx} must be decreased between $T_2$ and $T_3$.
In both cases, we can apply~\lemmaref{lem:deccnt}, implying that \emph{decrease\_count} must have been incremented between $T_1$ and $T_3$, giving the desired contradiction.

Therefore, \rust{check\_done} only succeeds if the number of active tasks is $0$ while the execution index and the validation index are both at least \rust{BLOCK.size()} at the same time. By~\lemmaref{lem:doneindex} and, all transactions must be committed at this time.
\end{proof}

The \module{MVMemory}.\rust{snapshot} function internally calls \rust{read} with $\emph{txn\_id}=\rust{BLOCK.size()}$ for all affected paths. By~\theoremref{thm:safe} all transactions are committed after a thread joins, so~\lemmaref{lem:comsafe} implies the following
\begin{corollary}
A call to \module{MVMemory}.\emph{snapshot()} after a thread joins returns the exact same values at exact same paths as would be persisted at the end of a sequential run of all transactions. 
\end{corollary}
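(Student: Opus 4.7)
The plan is to chain Theorem~\ref{thm:safe} with Lemma~\ref{lem:comsafe} and then verify that the \rust{snapshot} function's iteration and read pattern exactly reproduces the sequential output. First I would fix a time $T$ at which every thread has returned from \rust{run} and a later time $T_s \geq T$ at which \rust{snapshot} is invoked. By Theorem~\ref{thm:safe} all transactions are committed at $T$, and since no thread is executing after $T$, no further write, validation abort, or \rust{ESTIMATE} marking can occur in \module{MVMemory} between $T$ and $T_s$; hence the state of \emph{data} observed by \rust{snapshot} is identical to its state at $T$.

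Next I would apply Lemma~\ref{lem:comsafe}, which gives two facts at $T$: (i) the access paths present in \module{MVMemory} are exactly the paths written during the sequential run of all transactions, and (ii) a call to \module{MVMemory}.\rust{read} with $\emph{txn\_idx} = \rust{BLOCK.size()}$ at any such path returns the value that the sequential run would persist at that path. I would then unfold the body of \rust{snapshot}: it iterates over every \emph{location} appearing in \emph{data}, invokes \rust{read(\emph{location}, \rust{BLOCK.size()})}, and inserts $(\emph{location}, \emph{result.value})$ into the output set whenever the status is \rust{OK}.

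The step that needs the most care is arguing that \rust{read} returns \rust{OK} (and not \rust{READ\_ERROR} or \rust{NOT\_FOUND}) for every iterated path. For \rust{NOT\_FOUND}, this is immediate: the iteration only considers locations that already appear as $((\emph{location},\star),\star)$ in \emph{data}, so the set $S$ in~\lineref{line:upbound} is non-empty for $\emph{txn\_idx} = \rust{BLOCK.size()}$. For \rust{READ\_ERROR}, I would argue that at time $T$ no entry in \emph{data} is marked \rust{ESTIMATE}: an \rust{ESTIMATE} is installed only by \rust{convert\_writes\_to\_estimates} during a validation abort that sets some status to \rust{ABORTING}, and by Corollary~\ref{cor:order} such a status subsequently leads to a re-execution whose \rust{record} call overwrites the entries. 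Because all transactions are committed at $T$, every transaction's status is \rust{EXECUTED} (by Definition~\ref{def:commit}), so any \rust{ESTIMATE} previously installed must have been overwritten by the final incarnation's \rust{apply\_write\_set}; in particular, no \rust{ESTIMATE} markers survive.

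Combining these observations, \rust{snapshot} returns exactly one pair $(\emph{location}, v)$ per location written in the sequential run, with $v$ equal to the value read by \rust{read}(\emph{location}, \rust{BLOCK.size()}), which by part (ii) of Lemma~\ref{lem:comsafe} matches the sequential run's persisted value. This gives the claimed equality of outputs, completing the proof. The only obstacle worth spelling out is the absence of \rust{ESTIMATE} markers after commit; the rest is a direct instantiation of the two prior results.
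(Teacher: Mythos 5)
Your proposal is correct and follows essentially the same route as the paper, which simply chains Theorem~\ref{thm:safe} (all transactions are committed once a thread joins) with Lemma~\ref{lem:comsafe} and observes that \rust{snapshot} calls \rust{read} with $\emph{txn\_idx}=\rust{BLOCK.size()}$ on every affected path. Your additional argument ruling out surviving \rust{ESTIMATE} markers and \rust{NOT\_FOUND} results is sound but already subsumed by the second part of Lemma~\ref{lem:comsafe}, which guarantees that such a \rust{read} returns the sequential run's value (and hence an \rust{OK} status).
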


\subsection{Liveness}
We prove liveness under the assumption that every thread keeps taking steps until it joins\footnote{A standard assumption used to prove deadlock-freedom and starvation-freedom of algorithms, which are equivalent in our, single-shot, setting.} and that the \module{VM}.\rust{execute} is wait-free.
We start by formally defining \emph{pre-execution} in an analogous fashion to pre-validation.
A pre-execution of a transaction $tx_j$ starts any time some thread $t$ performs a \emph{fetch\_and\_increment} operation, returning $j$, in~\lineref{line:fiex}.
The pre-execution finishes immediately before $t$ performs~\lineref{line:taskexdec}, if this line is performed.
Otherwise, by code, an execution task for transaction $tx_j$ is returned from the \\ \rust{next\_version\_to\_execute} function invocation.
In this case, pre-execution finishes immediately before the execution task is returned in~\lineref{line:gettaskcall}, i.e. before the corresponding execution starts.
\begin{lemma}
\label{lem:finiteint}
There are finitely many pre-executions, executions, pre-validations and validations.
\end{lemma}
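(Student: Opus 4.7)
The plan is to prove by strong induction on $j$ that $E_j$, the total number of executions of $tx_j$, is finite; the remaining counts in the lemma will then be bounded in terms of $\sum_j E_j$. For the base case $j=0$, every \module{MVMemory}.\rust{read} with $\emph{txn\_idx}=0$ returns \rust{NOT\_FOUND} (the set $S$ in~\lineref{line:upbound} is empty), so all of $tx_0$'s recorded versions are $\bot$, \rust{validate\_read\_set} cannot observe a mismatch, and by~\corollaryref{cor:order} the status of $tx_0$ never reaches \rust{ABORTING}; hence $E_0 \le 1$.

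For the inductive step, assume $E_k<\infty$ for every $k<j$. Each non-final incarnation of $tx_j$ is aborted, and by~\corollaryref{cor:order} each abort occurs either at~\lineref{line:depaborting} (read-error) or at~\lineref{line:valaborting} (validation), so $E_j \le 1 + R_j + V_j$ where $R_j, V_j$ count the two kinds separately. To bound $R_j$: each read-error abort is a successful \rust{add\_dependency}$(j,k)$ call for some $k<j$ (note that when \rust{add\_dependency} returns \emph{false} the caller simply retries at~\lineref{line:retry} without producing an abort), which inserts $j$ into $\emph{txn\_dependency[k]}$. The next incarnation of $tx_j$ cannot start until this entry is consumed by a \rust{resume\_dependencies} call inside some \rust{finish\_execution} of $tx_k$, and each execution of $tx_k$ triggers at most one \rust{finish\_execution}. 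Pairing each read-error abort with the $tx_k$-execution that resumed it, and allowing at most one unmatched abort per pair $(j,k)$, yields $R_j \le \sum_{k<j}(E_k+1) < \infty$.

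To bound $V_j$, observe that each validation abort on incarnation $i$ exhibits some $(\emph{location}, \emph{version})$ in the recorded read-set whose re-read in \rust{validate\_read\_set} disagrees with \emph{version}. Since \rust{read} consults only \emph{data} entries with index $<j$, the discrepancy must be caused by a change in \emph{data} at some index $k<j$, which can only happen via \rust{apply\_write\_set}, an entry removal inside \rust{rcu\_update\_written\_locations}, or \rust{convert\_writes\_to\_estimates}; by the inductive hypothesis the total number of such events across all $k<j$ is finite. The crucial observation is that the time windows from ``read during execution'' to ``read during aborting validation'' are pairwise disjoint across incarnations of $tx_j$: by~\corollaryref{cor:order}, incarnation $i+1$'s execution cannot start until incarnation $i$'s aborting validation has set the status to \rust{ABORTING}, so consecutive windows lie on opposite sides of $T_i^{val}$. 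Consequently each state-change event can be charged to at most one abort, so $V_j$ is finite and therefore $E_j = 1 + R_j + V_j$ is finite.

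With $\sum_j E_j<\infty$ in hand, the remaining quantities follow quickly. \rust{decrease\_validation\_idx} is invoked only in \rust{finish\_execution} (\lineref{line:decvalcall2}) and in \rust{finish\_validation} on abort (\lineref{line:decvalcall1}), hence finitely often; between consecutive invocations \emph{validation\_idx} is monotone non-decreasing and \rust{next\_version\_to\_validate} stops the fetch-and-increment at~\lineref{line:fival} once the counter reaches \rust{BLOCK.size()} (up to a bounded thread-count slack), so pre-validations are finite. The analogous argument via \rust{decrease\_execution\_idx} (invoked only in \rust{resume\_dependencies}) bounds pre-executions, and validations are bounded by the dispatched validation tasks, i.e.\ pre-validations plus those returned directly from \rust{finish\_execution}. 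The hardest step will be the window-disjointness argument for $V_j$: a priori a single persistent \rust{ESTIMATE} or overwritten entry could be blamed for arbitrarily many aborts of $tx_j$, and it is precisely the status-machine invariants of~\corollaryref{cor:order} that rule this out.
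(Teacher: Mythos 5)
Your proof is correct, but it takes a genuinely different route from the paper's. Both arguments are inductions on the transaction index and both lean on the status machine of \corollaryref{cor:order}, but the paper's inductive step is a quiescence argument: from the hypothesis that lower-indexed transactions generate only finitely many executions and validations, it deduces a finite time $T$ after which (a) \emph{execution\_idx} is never decreased to a value $\leq k$ and (b) the \module{MVMemory} entries of transactions indexed below $k$ never change, and then shows that after $T$ at most one pre-execution of $tx_k$ can start and at most one validation of a version of $tx_k$ can abort (the re-execution that abort triggers must validate successfully because of (b)). You instead give an explicit charging argument: $E_j \le 1 + R_j + V_j$, with read-error aborts charged injectively to \rust{resume\_dependencies} calls (hence to executions of lower transactions) and validation aborts charged injectively to modifications of \emph{data} entries of lower transactions, using the disjointness of the read-to-revalidation windows guaranteed by the status transitions. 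Your version yields explicit (if loose) quantitative bounds on the number of incarnations, which the paper's stabilization argument does not, at the price of the more delicate window argument; the paper's version is shorter and dispatches the validation-abort case in one stroke. Two places where you should add a sentence: (i) the aborting validation of incarnation $i$ necessarily validates the read-set recorded by incarnation $i$ itself (a later \rust{record} would imply that $(j,i)$ was already aborted, so \rust{try\_validation\_abort} would return \emph{false}), which your window argument tacitly assumes; and (ii) the number of \rust{decrease\_validation\_idx} calls at \lineref{line:decvalcall1} is finite not because validations have already been bounded (they have not been, at that point in your argument) but because each such call belongs to a validation that \emph{aborts}, and by \corollaryref{cor:order} there is at most one abort per executed version, hence at most $\sum_j E_j$ such calls in total.
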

\begin{proof}
We prove the lemma by induction on transaction index, with a trivial base case (no pre-execution, execution, pre-validation or validation occurs for transactions with indices $<0$). For the inductive step, show that for any transaction index $k$ there are finitely many associated pre-executions, pre-validations, executions or validations.
For the inductive hypothesis, we only assume that there are finitely many executions and validations for versions of transactions indexed $<k$. It implies that after some finite time $T$:
\begin{itemize}
    \item[(a)] the execution index is never updated to a value $\leq k$ in~\lineref{line:decexidx}. The \rust{decrease\_execution\_idx} procedure is only called in~\lineref{line:depexdec} as a part of an ongoing execution of some transaction $tx_j$ when execution index is reduced to the minimum index of other transactions that depend on $tx_j$, all of which must have index $>j$ (as only higher-indexed transactions could have read from \module{MVMemory} an \rust{ESTIMATE} written during $tx_j$'s execution and become a dependency).
    
    \item[(b)] the entries in \module{MVMemory} for transactions indexed lower than $k$ never change. This holds because\\ \module{MVMemory}.\rust{record} invocation in~\lineref{line:recordcall} that affects entries with transaction index $j$, is, as defined, a part of transaction $tx_j$'s execution.
\end{itemize}

Due to (a), only one pre-execution of transaction $tx_k$ may start after time $T$, so there are finitely many pre-executions for $tx_k$ in total.
Next, we show that there is at most one validation of a version of $tx_k$ that aborts after time $T$.
If such a version exists, let $(k,i)$ be the first version that aborts after $T$.
Due to~\corollaryref{cor:order}, version $(k,i)$ may not abort more than once, and after it aborts, an execution of version $(k, i+1)$ must complete before any validation of version $(k, i+1)$ (or higher) starts.
However, no validation of version $(k, i+1)$ may abort, since by (b), the entries associated with transaction indices strictly smaller than $k$ no longer change in the multi-version data-structure,
i.e. \module{MVMemory}\rust{.validate\_read\_set} for a version whose execution started after $T$ necessarily returns \emph{true} in~\lineref{line:allvalidcall}. 
Thus, after some finite time no execution of a version of transaction $tx_k$ may start, as this only happens either following a pre-execution or a validation that aborts.
Moreover, we can now show that similar to (a) for the execution index, after some finite time, the validation index can never be reduced to a value $\leq k$ in~\lineref{line:decvalidx}. This is because the \rust{decrease\_validation\_idx} procedure is either called in~\lineref{line:decvalcall2}, when the \emph{validation\_idx} may be reduced to $j$ as a part of a transaction $tx_j$'s ongoing execution, or it is called in~\lineref{line:decvalcall1}, when the validation index may be reduced to $j+1$ as a part of a transaction $tx_j$'s ongoing validation.

Therefore, there are finitely many pre-validations of transaction $tx_k$ and as a result, no validation of a version of $tx_k$ may start after some finite time.
This is because a validation starts either following a pre-validation, or an execution of a version of $tx_k$.
As there are finitely many threads, we obtain that there are finitely many total pre-validations and pre-executions of transaction $tx_k$, as well as executions and validations versions of $tx_k$.
\end{proof}
In \BSTM, locks are used to protect statuses and dependencies for transactions. 
We now prove starvation-freedom for these locks.
\begin{claim}
\label{clm:starve}
If threads keep taking steps before they join, then any thread that keeps trying to acquire a lock eventually succeeds.
\end{claim}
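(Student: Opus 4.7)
The plan is to prove starvation-freedom by combining two ingredients: every critical section of the algorithm has bounded length, and the wait-for graph on locks contains no cycle. Under the hypothesis that every thread keeps taking steps and that the underlying mutexes are (standard) starvation-free, these two ingredients imply that every held lock is eventually released and then granted to any waiting thread.

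First, I would enumerate where each lock is acquired. The status lock \emph{txn\_status[j]} is acquired briefly inside \rust{try\_incarnate} (\lineref{line:statlockincarn}), \rust{next\_version\_to\_validate} (\lineref{line:statlocknextval}), \rust{set\_ready\_status} (\lineref{line:lockstatready}), \rust{finish\_execution} (\lineref{line:finex}), \rust{try\_validation\_abort}, and inside the body of \rust{add\_dependency} (\lineref{line:depexcheck} and \lineref{line:depaborting}). The dependency lock \emph{txn\_dependency[j]} is held across the body of \rust{add\_dependency} (from \lineref{line:lockdep}) and acquired briefly in \rust{finish\_execution} for the swap at \lineref{line:deplockswap}. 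In every case the critical section performs only a bounded number of reads, writes, set insertions, and atomic swaps; no \module{VM}.\rust{execute} invocation, no validation, and no execution body is run while any lock is held. Since \module{VM}.\rust{execute} is assumed wait-free and the other blocks are manifestly finite, every critical section terminates in a bounded number of steps.

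Next, I would rule out cyclic wait. The only nested acquisition pattern appears in \rust{add\_dependency(\emph{k}, \emph{j})}, where the thread holds \emph{txn\_dependency[j]} and, inside, acquires \emph{txn\_status[j]} briefly, releases it, and then acquires \emph{txn\_status[k]} briefly. Crucially, in \rust{finish\_execution} the status lock on \emph{txn\_status[j]} at \lineref{line:finex} is released before the dependency lock on \emph{txn\_dependency[j]} is requested at \lineref{line:deplockswap}, so no thread ever holds a status lock while waiting for a dependency lock. Thus the wait-for relation points strictly from dependency locks to status locks, and within \rust{add\_dependency} the two status locks are never held simultaneously. No cycle can form.

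Combining bounded critical sections with an acyclic wait-for graph, a simple induction on the length of the longest wait chain shows that every held lock is released within finitely many global steps of its holder; a starvation-free mutex then eventually grants it to the waiting thread, establishing the claim. The main obstacle in writing this precisely is justifying the non-deadlock step: one has to carefully track that \rust{finish\_execution} separates its two critical sections (first the status, then the dependency) so it never competes with \rust{add\_dependency} for both locks at once---this separation is the key observation that breaks the only potential cycle between the two functions that both touch transaction index $j$.
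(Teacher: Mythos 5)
Your proof is correct, and its first half (deadlock-freedom) is essentially the paper's argument: you both enumerate every lock-acquisition site, observe that the only nested acquisition is inside \rust{add\_dependency} where the dependency lock is taken before a status lock, note that \rust{finish\_execution} never holds a status lock while requesting a dependency lock (its two critical sections at \lineref{line:finex} and \lineref{line:deplockswap} are disjoint), and conclude the wait-for relation is acyclic. Where you genuinely diverge is in the step from deadlock-freedom to starvation-freedom. You close the gap by assuming the underlying mutexes are starvation-free, so that bounded critical sections plus an acyclic wait-for graph let you induct on wait chains. The paper instead makes no fairness assumption on the lock primitive: it observes that every acquisition occurs as part of a pre-execution, pre-validation, execution, or validation, and invokes \lemmaref{lem:finiteint} (finitely many such events in the entire run, proved independently of this claim) to conclude that each lock is contended only finitely often, so deadlock-freedom alone already implies starvation-freedom in this single-shot setting. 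Your route is more self-contained and would survive a hypothetical variant with unboundedly many tasks, but it strengthens the model in a way the paper's pseudo-code does not license; if the locks are only deadlock-free (e.g., test-and-set spinlocks), your argument needs to be patched with exactly the paper's finiteness observation. Your remark that no \module{VM}.\rust{execute} call and no waiting other than the nested status-lock acquisitions occurs inside any critical section is a point the paper also relies on (``only the two operations within any critical section that may involve waiting''), so that part is aligned.
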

\begin{proof}
A lock on transaction dependencies is acquired in~\lineref{line:lockdep} or in~\lineref{line:deplockswap}, both of which, by definition, occur as a part of some version's execution.
There are more cases of when a lock on a transaction status may be acquired.
The operations in~\lineref{line:statlockincarn} and in~\lineref{line:statlocknextval} are a part of a pre-execution of pre-validation of some transaction, respectively.
The lock may be acquired in~\lineref{line:lockstatready} in order to set the \rust{READY\_TO\_EXECUTE} status as a part of an ongoing execution (call to \rust{set\_ready\_status} in~\lineref{line:depfree}) or validation (call in~\lineref{line:readycallval}). 
The operation in~\lineref{line:finex} sets the status to \rust{EXECUTED} as a part of an ongoing execution, and the operation in~\lineref{line:valaborting} sets the status to \rust{ABORTING} as a part of an ongoing validation (that aborts).
The remaining two instances in~\lineref{line:depexcheck} and
in~\lineref{line:depaborting} occur as a part of a version's execution when a dependency is encountered, while the thread is also holding a lock on dependencies. These are the only instances when a thread may simultaneously hold more than one lock,
and also only the two operations within any critical section that may involve waiting.  
Because the acquisition order in these cases is unique (first the lock for dependencies, then for status) and all threads keep taking steps, a deadlock is therefore impossible.

Moreover, as described above, all acquisitions happen as a part of an ongoing pre-execution, pre-validation, execution or validation.
By~\lemmaref{lem:finiteint}, there are finite number of these, implying that in our setting, deadlock-freedom is equivalent to starvation-freedom, i.e. as long as threads keep taking steps, any thread that tries to acquire a lock in \BSTM must eventually succeed.
\end{proof}
Combining the above claims, we show  
\begin{corollary}
\label{cor:noint}
Suppose all threads keep taking steps before they join and \module{VM}.\rust{execute} is wait-free. Then, after some finite time, there may not be any ongoing pre-execution, pre-validation, execution or validation.
\end{corollary}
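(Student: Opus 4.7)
The plan is to combine three ingredients already proved in the excerpt: \lemmaref{lem:finiteint} gives a finite total count of pre-executions, pre-validations, executions, and validations throughout the run; \claimref{clm:starve} gives starvation-freedom for every lock in \BSTM; and the hypothesis that \module{VM}.\rust{execute} is wait-free covers the only potentially unbounded routine called from within an execution. From these I would argue that every instance of a pre-execution, pre-validation, execution, or validation, once started, must finish in a finite number of steps of the invoking thread. Coupled with the finiteness of the total number of starts, this immediately yields the corollary: let $T$ be a time after which no new such interval starts (exists by \lemmaref{lem:finiteint}); each of the finitely many intervals ongoing at $T$ then finishes within finite additional time, after which none remains.

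The step-by-step verification is code inspection for each of the four kinds of intervals. Pre-executions and pre-validations consist of a constant number of atomic operations plus at most one \emph{txn\_status} lock acquisition; by \claimref{clm:starve} the acquisition terminates, and the rest is obviously bounded. Validations iterate over the finite \emph{last\_read\_set} performing bounded \module{MVMemory}.\rust{read} calls, then acquire at most one \emph{txn\_status} lock in \rust{try\_validation\_abort} and/or \rust{set\_ready\_status}. Executions invoke \module{VM}.\rust{execute} (wait-free by assumption), possibly follow it with a \module{MVMemory}.\rust{record} on finite read- and write-sets, and use a bounded number of lock acquisitions inside \rust{add\_dependency}, \rust{finish\_execution}, and \rust{try\_incarnate}, each terminating by \claimref{clm:starve}.

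The main obstacle is the self-recursive retry at \lineref{line:retry}, which re-invokes \rust{try\_execute} whenever \rust{add\_dependency} returns \emph{false}; a priori this could iterate unboundedly within a single execution. The plan to bound it is as follows: every retry is triggered by some \module{MVMemory}.\rust{read} returning \rust{READ\_ERROR} because of an \rust{ESTIMATE} marker placed by a distinct prior aborted version in~\lineref{line:markempty}, combined with that same blocking version's status being different from \rust{EXECUTED} by the time \rust{add\_dependency} acquires its lock (cf.~\lineref{line:depexcheck}). Since by \lemmaref{lem:finiteint} only finitely many versions ever execute, the total number of \rust{ESTIMATE} markers ever placed, and hence the total number of retries summed across the entire run, is finite; in particular any single execution retries only finitely many times.

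Putting it together: each of the four kinds of intervals has bounded duration given the assumptions, and by \lemmaref{lem:finiteint} only finitely many are ever initiated, so after some finite time no interval of any of the four kinds is ongoing, which is exactly the statement of the corollary.
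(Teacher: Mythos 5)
Your proof is correct and takes essentially the same route as the paper's: \lemmaref{lem:finiteint} gives finitely many intervals, and each terminates in finitely many steps of the invoking thread because \module{VM}.\rust{execute} is wait-free and locks are acquired in finitely many steps by \claimref{clm:starve}. Your explicit bounding of the retry loop at \lineref{line:retry} (via the finiteness of \rust{ESTIMATE} placements, each tied to an abort) fills in a detail the paper's proof subsumes under ``no other potential waiting,'' and it is handled correctly.
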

\begin{proof}
By~\lemmaref{lem:finiteint}, there are finitely many pre-executions, pre-validations, executions and validations.
Since all threads keep taking steps, to complete the proof we need to show that they all finish within finitely many steps of the invoking thread.
This is true because \module{VM}.\rust{execute} is assumed to be wait-free, lock are acquired within finitely many steps
by~\claimref{clm:starve}, and by code there is no other potential waiting involved in pre-execution, pre-validation, execution or validation.
\end{proof}
\begin{theorem}
If threads keep taking steps before they join and \module{VM}.\rust{execute} is wait-free, then all threads eventually join.
\end{theorem}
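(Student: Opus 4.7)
The plan is to leverage Corollary \ref{cor:noint} and show that, after some finite time, every thread that iterates through \rust{run()} observes \emph{done\_marker} $=$ \emph{true} and therefore exits. By Corollary \ref{cor:noint}, there is a finite time $T^*$ after which no pre-execution, pre-validation, execution, or validation is ongoing. Combined with Lemma \ref{lem:finiteint}, this also means that no new such activity begins after $T^*$. Since \emph{num\_active\_tasks} is only incremented at the start of a pre-execution or pre-validation (\lineref{line:taskexinc},~\lineref{line:taskvalinc}) and decremented after the corresponding activity ends, it must equal $0$ at all times after $T^*$. Likewise, \emph{decrease\_cnt} is only incremented in~\lineref{line:deccntex} or~\lineref{line:deccntval} as part of an ongoing execution or validation (see~\lemmaref{lem:deccnt}), so it is stable after $T^*$. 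For the same reasons, both \emph{execution\_idx} and \emph{validation\_idx} are non-decreasing and stable after $T^*$.

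The key intermediate step is to argue that, at time $T^*$, both \emph{execution\_idx} $\geq \rust{BLOCK.size()}$ and \emph{validation\_idx} $\geq \rust{BLOCK.size()}$. Suppose for contradiction that one of these indices, say \emph{execution\_idx}, is strictly below $\rust{BLOCK.size()}$ at $T^*$. Because all threads keep taking steps until they join, and since \emph{done\_marker} has not yet been set (its setting requires exactly the condition we are deriving), some thread will eventually call \rust{next\_task()} after $T^*$. That call dispatches to either \rust{next\_version\_to\_execute()} or \rust{next\_version\_to\_validate()}; whichever branch is taken on the smaller index, the corresponding index is below $\rust{BLOCK.size()}$, so the guard in~\lineref{line:blockcheckex1} (resp.~\lineref{line:blockcheckval1}) fails and a fresh \emph{fetch\_and\_increment} in~\lineref{line:fiex} (resp.~\lineref{line:fival}) starts a new pre-execution (resp. pre-validation). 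This contradicts our choice of $T^*$. A symmetric argument handles \emph{validation\_idx}. Hence, after $T^*$, both indices are at least $\rust{BLOCK.size()}$.

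With this in hand, consider any \rust{next\_task()} invocation by a thread after $T^*$. Whichever branch is selected, the guard at~\lineref{line:blockcheckex1} or~\lineref{line:blockcheckval1} now triggers and the thread calls \rust{check\_done()}. Inside \rust{check\_done()}, the thread first snapshots \emph{decrease\_cnt} (which is stable), then tests the predicate at~\lineref{line:donecheck}: $\min(\emph{execution\_idx}, \emph{validation\_idx}) \geq \rust{BLOCK.size()}$ holds by the previous paragraph, $\emph{num\_active\_tasks} = 0$ holds since $T^*$, and the re-read of \emph{decrease\_cnt} matches the snapshot. The thread therefore sets $\emph{done\_marker} \gets \emph{true}$ in~\lineref{line:setdone}. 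Every subsequent iteration of the loop in~\lineref{line:spawnloop} by any thread reads \emph{done\_marker} $=$ \emph{true} at~\lineref{line:donemark} and exits, so all threads join.

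The subtle step in this plan is the intermediate claim that both indices must have reached $\rust{BLOCK.size()}$ by $T^*$; it relies on combining the ``keep taking steps'' assumption with the precise wiring of pre-execution/pre-validation starts to the index guards in \rust{next\_version\_to\_execute} and \rust{next\_version\_to\_validate}. Everything else is a routine consequence of Corollary \ref{cor:noint} and the structure of the \rust{check\_done} double-collect.
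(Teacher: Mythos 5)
Your proposal is correct and follows essentially the same route as the paper's proof: both rest on Corollary~\ref{cor:noint} to obtain a time after which no pre-execution, pre-validation, execution, or validation is ongoing, argue that the two indices must then be at least $\rust{BLOCK.size()}$ (since otherwise a \emph{fetch\_and\_increment} would start a new one), note that \emph{num\_active\_tasks} and \emph{decrease\_cnt} are stable, and conclude that \rust{check\_done} succeeds and sets \emph{done\_marker}. The only difference is cosmetic — the paper wraps the whole argument in a single contradiction (``suppose some thread never joins'') whereas you argue directly and localize the contradiction to the index claim — and your parenthetical justification for ``\emph{done\_marker} has not yet been set'' is unnecessary, since if it were set all threads would join anyway.
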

\begin{proof}
For contradiction, suppose some thread never joins.
By the theorem assumption, the thread keeps taking steps and by~\claimref{clm:starve}, it acquires all required locks within finitely many steps.
Moreover, since the \module{VM}.\rust{execute} function is wait-free, by~\corollaryref{cor:noint}, after some finite time there can be no ongoing pre-execution, pre-validation, execution or validation.
By code, the thread in this case must keep repeatedly entering the loop in~\lineref{line:spawnloop} and invoking \rust{next\_task} in~\lineref{line:gettaskcall}, while both the execution index and the validation index are always $\geq \rust{BLOCK.size}$ - otherwise, a pre-execution or pre-validation would commence.

Since \rust{decrease\_execution\_idx} and \rust{decrease\_validation\_idx} procedures are only invoked as a part of an ongoing execution or validation, respectively, after some finite time, this counter remains unchanged. Finally, by the mechanism that counts the active tasks, described in~\sectionref{sec:numtasks}, \emph{num\_active\_tasks} counts ongoing pre-executions, pre-validations, executions and validations. By code and since all threads keep taking steps before they join, the counter is always decremented after these finish. Since by~\lemmaref{lem:finiteint}, all pre-executions, pre-validations, executions and validations eventually finish, after some finite time the \emph{num\_active\_tasks} counter must always be $0$.

The thread that repeatedly invokes \rust{next\_task} must also repeatedly call \rust{check\_done} procedure. However, by the above, after some finite time it must set the \rust{done\_marker} to \emph{true} in~\lineref{line:setdone}. However, the next time the thread reaches~\lineref{line:spawnloop}, it will not enter the loop and join, proving the theorem by contradiction. 
\end{proof}

\end{document}